\numberwithin{equation}{section}
 \newcommand{\p}{\partial}
 \newcommand{\be}{\begin{eqnarray}}
 \newcommand{\ee}{\end{eqnarray}}
 \newcommand{\bee}{\begin{eqnarray*}}
 \newcommand{\eee}{\end{eqnarray*}}
\newtheorem{theorem}{Theorem}[section]
\newtheorem{definition}[theorem]{Definition}
\newtheorem{lemma}[theorem]{Lemma}
\newtheorem{proposition}[theorem]{Proposition}
\newtheorem{assumption}[theorem]{Assumption}
\newenvironment{proof}[1][Proof]{\textbf{#1.} }{\ \rule{0.5em}{0.5em}}
\begin{document}

\title{\Large\bf Indifference Pricing and Hedging
in a Multiple-Priors Model with Trading
Constraints\footnote{\noindent The work is supported by NNSF of
China (No.11271143,  11371155), University Special Research Fund for
Ph.D. Program (No. 20124407110001),
 NNSF of Zhejiang Province(No. Y6110775), and the Oxford-Man Institute of Quantitative Finance.
 We thank two anonymous referees for their valuable comments and helpful
suggestions on our paper.}}
\date{}
\author{
Huiwen Yan~\thanks{School of Mathematics and Computer Science,
  Guangdong University of Finance \& Economics, Guangzhou 510320, China,
  \texttt{hwyan10@gmail.com}}$\quad$
Gechun Liang~\thanks{Department of Mathematics, King's College
London, London WC2R 2LS, U.K.,
 \texttt{gechun.liang@kcl.ac.uk} }$\quad$
Zhou Yang~\thanks{School of Mathematical Science, South China Normal
 University, Guangzhou 510631, China,
 \texttt{yangzhou@scnu.edu.cn}}}
 \maketitle

\begin{abstract}
This paper considers utility indifference valuation of derivatives
under model uncertainty and trading constraints, where the utility
is formulated as an additive stochastic differential utility of both
intertemporal consumption and terminal wealth, and the uncertain
prospects are ranked according to a multiple-priors model of Chen
and Epstein (2002). The price is determined by two optimal
stochastic control problems  (mixed with optimal stopping time in
the case of American option) of forward-backward stochastic
differential equations. By means of backward stochastic differential
equation and partial differential equation methods,
we show that both bid and ask prices
are closely related to the
Black-Scholes risk-neutral price with modified dividend rates. The
two prices will actually coincide with each other if there is no
trading constraint or the model uncertainty disappears. Finally, two
applications to European option and American
option are discussed.\\


{\emph{Key words}: Indifference pricing,$\,$ stochastic differential
utility,$\,$ trading constraints,$\,$ ambiguity,$\,$ variational inequality,$\,$ American option.}\\


{\emph{AMS subject classifications (2000)}}:
 35R60,$\;$ 47J20,$\;$ 93E20\\
\end{abstract}

\section{Introduction}

This paper considers derivative pricing in an incomplete financial
market with model uncertainty (ambiguity) and trading constraints.
Incompleteness means that investors are uncertain about the
risk-neutral probability measure which are used to price the
derivatives in the market. The investors rank the uncertain
prospects according to a multiple-priors model of Chen and Epstein
\cite{Chen}, where a continuous-time intertemporal version of
multiple-priors utility is formulated by using backward stochastic
differential equation (BSDE). The other source of incompleteness is
due to trading constraints such as short sale constraint. We take
the indifference pricing approach where the investor's utility is
formulated as an additive stochastic differential utility of both
intertemporal consumption and terminal wealth (see Duffie and
Epstein \cite{Duffie}).

The idea of indifference valuation is firstly introduced by Hodges
and Neuberger \cite{Hodges}, whereby the price for a derivative is
the cash amount that the investor is willing to pay such that she is
no worse off in expected utility terms than she would have been
without the derivative. The idea is further developed by Henderson
\cite{Henderson1}, Musiela and Zariphopoulou \cite{Musiela} and
Sicar and  Zariphopoulou \cite{SZ_bounds}, among others, under a
Markovian setting using the partial differential equation (PDE)
approach, and by Hu et al \cite{Hu}, Mania and Schweizer \cite{MS}
and Ankirchner et al \cite{Imkeller1} in a general non-Markovian
setting using the BSDE approach (see more references therein). On
the other hand, Becherer \cite{Becherer} and Davis \cite{Davis} use
the duality approach to study the indifference hedging strategy.
However, most of the existing research is based on the assumption
that the investor is only concerned with her terminal wealth, and
ignores her intertemporal consumption and her risk aversion in the
consumption. It seems that the only exception is Cheridito and Hu
\cite{Cheridito}, which takes account of consumption on the top of
terminal wealth under the framework of Hu et al \cite{Hu}. In our
model, the investor takes account of not only the terminal wealth
but also the intertemporal consumption, and she is risk averse in
the consumption instead of the terminal wealth.

Different from most of utility indifference pricing models, where
the indifference prices are heavily distorted by the investor's risk
aversion in her terminal wealth, and therefore complicated in
general, the indifference prices in our model are independent of the
investor's risk aversion, and bear a striking resemblance to the
risk-neutral price. In fact, we show that both bid and ask prices in
our model are closely related to the risk-neutral price with
modified dividend rates. The deviation is actually caused by the
investor's uncertainty on the risk-neutral probability measure and
the existence of the trading constraints.

Model uncertainty is an important aspect in derivative pricing.
Indeed, uncertainty on the choice of an option pricing model can
lead to model risk in the valuation of portfolios of options, so one
must separate between risk (uncertainty on outcomes for which the
probabilities are known) and ambiguity (model uncertainty). In their
seminal work, Chen and Epstein \cite{Chen} introduce a
multiple-priors model under the framework of stochastic differential
utility. Cont \cite{Cont} introduces different risk measures to
quantity the model uncertainty. In the indifference pricing
framework, Jaimungal and Sigloch \cite{JS2009} introduce the concept
of robust indifference pricing (with the utility of terminal
wealth), which incorporates both risk aversion and model
uncertainty. They mainly use the idea from Anderson et al
\cite{AHS}, by modifying the optimization problem to maximize the
expected penalized utility of terminal wealth, while minimizing the
expected penalized utility over a set of equivalent measures. Since
the risk-neutral probability measure is naturally a dominant pricing
measure, we take the multiple-priors model from Chen and Epstein
\cite{Chen} in order to incorporate the model uncertainty, where the
probability measures in the set of priors are equivalent to the
risk-neutral dominant pricing measure.

The idea of applying Chen and Epstein's multiple-priors model to
derivative pricing is not new. For example, Guo et al \cite{Guo}
consider the pricing problem of exotic options, in particular
Parisian options, also under the multiple-priors framework of Chen
and Epstein \cite{Chen}. However, they use the idea of
super-replication rather than utility indifference valuation, so
they obtain pricing bounds rather than a price. On the other hand,
their concentration is more on numerical solutions of the pricing
bounds, which is different from our paper where we try to build up
an indifference pricing framework under model uncertainty. Moreover,
our model also includes the trading constraints on the top of model
uncertainty. Actually, these two factors result in the market
incompleteness in our model.

Although the model uncertainty causes the deviation of the bid and
ask prices from the risk-neutral price, we show that if there is no
trading constraint, then both bid and ask prices will coincide with
the risk-neutral price even under model uncertainty. In fact, if
there is no trading constraint, the investor can invest arbitrary
position in the underlying stocks to hedge the model uncertainty in
our indifference pricing setup. On the other hand, if there exist
trading constraints, then the prices will deviate from the
risk-neutral price. For example, if the payoff of an option is
monotone with respect to the prices of the underlying, the investor
needs to hold an opposite position in order to hedge her risk
exposure. Trading constraints such as short sale constraint will
prohibit her arbitrary position, and therefore result in prices
different from the risk-neutral price. We should remark that the
price deviation is not only caused by trading constraints but also
model uncertainty. These two factors tangling together impact the
bid and ask prices, in both European option and American option
cases. We also obtain the convergence rate of the indifference
prices to the risk-neutral price as the model uncertainty disappears
(see Proposition 3.5 for the case of European option and Proposition
4.8 for the case of American option).

If the investor is allowed to exercise her option at any time before
maturity as in the American option setting, the definition of the
corresponding indifference price needs to be modified. In such a
situation, the investor needs to compare two optimal investment
problems with different time horizons, and she chooses not only her
optimal trading strategy but also her optimal exercise time. By the
time consistency, we derive an intertemporal wealth which consists
of not only the usual wealth , but also the value of the remaining
optimal consumption (see Definition \ref{definitionofAmerican}). In
a Markovian setting, we also effectively use the
Alexandrov-Bakel'man-Pucci (A-B-P) comparison principle for
variational inequalities to deduce various properties of the bid and
ask prices for American option.

{The utility indifference pricing models are usually formulated as
two optimal portfolio problems (see \eqref{Pricing3} and
\eqref{Pricing4}).  Concretely speaking, the price $P$ is determined
from an equation $F(\cdot,P(\cdot))=G(\cdot)$, and functions $F$ and
$G$ where the value functions of two optimal stochastic control
problems. There are two methods to solve these problems, such as the
BSDE approach and the PDE approach. The BSDE approach is based on
martingale optimality principle (see \cite{Hu,MS}) or the
risk-sensitive control (see a recent work by Henderson and Liang
\cite{Henderson-liang}), and the price is expressed via the
solutions of two BSDEs. The PDE method is based on dynamic
programming principle, and the price is described via the solutions
of two HJB equations (see \cite{Henderson1,Musiela,SZ_bounds}). Due
to the complexities of BSDEs and HJB equations, it is usually
difficult to investigate the properties of the utility indifference
prices.}

{In our model, the price is still determined by the equation
$F(\cdot,P(\cdot))=G(\cdot)$, where $F$ and $G$ are the value
functions of the two optimal control problems of forward-backward
stochastic equations (FBSDEs). In the American option case, the
optimal control problems involve optimal stopping time problems, and
the optimal strategy consists of optimal consumption, optimal
investment and optimal stopping time.}

In order to completely solve the problems, we firstly analyze the
two optimal portfolio problems and express the price as the solution
of nonlinear BSDEs via the following idea. We define an indirect
utility by subtracting the wealth (and the contingent claim) from
the original stochastic differential utility (SDU) to represent the
indifference price as the solution of BSDE with Lipschitz continuous
driver. Mathematically speaking, we translate the original
stochastic control problem of FBSDE into finding a maximum solution
to a family of BSDEs with different drivers. Then by the comparison
principle for BSDE, we find the nonlinear BSDE for the optimal
solution and express the price via the corresponding FBSDE.

By applying the  Feynman-Kac formula, we express the prices as the
solutions of quasi-linear PDE (for European option) or variational
inequality (VI, for American option). Then via the method of PDE, we
improve the regularity of the value function, and analyze the
properties of the price, the optimal investment, consumption and
stopping strategies. In this paper, We give a general and technical
proof of improving the regularity of the solution of PDE or VI under
low regularity of terminal value and obstacles. Moreover, we prove
that PDE is a special case of VI if we choose a proper obstacle
under some general assumptions. Thanks to the improved regularity,
we achieve some concise results as mentioned above. With these
results, it is easy to calculate the prices or theoretically analyze
their properties by the standard method. Moreover, we use some PDE
estimates to show the convergence result of the corresponding prices
when the ambiguity market shrinks to the standard market.

The paper is organized as follows: We present our indifference
pricing and hedging model in Section 2, and apply it to two concrete
settings, namely European option and American option in Section 3
and Section 4 respectively. Some further technical details on the
results of relevant PDEs are provided in the Appendix.


\section{Indifference Pricing and Hedging Model}

For a fixed time horizon $T>0$, let $W=(W^1,\cdots,W^n)^T$ be an
$n$-dimensional Brownian motion on a filtered probability space
$(\Omega,\mathcal{F}, \mathbb{F}=\{{\mathcal{F}}_t\} ,\mathbb{P})$
satisfying the \emph{usual conditions}, where $\mathbb{F}$ is the
augmented filtration generated by the Brownian motion $W$, and
$\mathbb{P}$ is interpreted as the risk-neutral probability measure.
Herein the superscript $T$ denotes the matrix transposition. The
market consists of one risk-free asset $B$ with the risk-free
interest rate $r(\cdot)$, and $n$ risky assets
$S=(S^1,\cdots,S^n)^T$, whose price processes under the risk-neutral
probability measure $\mathbb{P}$ are given by
\begin{equation}\label{stateequation}
 S^i_{s}=S^i_{t}+\int^s_t r(u)S^i_u\,du+\sum\limits_{j=1}^n\int^s_t \sigma_{i\,j}(u)S^i_u\,dW^j_u,
\end{equation}
for $1\leq i\leq n$ and $0\leq t\leq s\leq T$, where
$\sigma(\cdot)=(\sigma_{ij}(\cdot))_{1\leq i,j\leq n}$ is the
volatility matrix. The coefficients satisfy the following
assumption:

\begin{assumption}\label{Assumption1}
The risk-free interest rate $r(\cdot)$ and the volatility matrix
$\sigma(\cdot)$ are continuous functions, and the
volatility matrix $\sigma(\cdot)$ is positive definite.
\end{assumption}

However, investors are uncertain about the risk-neural probability
measure $\mathbb{P}$, and rank the uncertain prospects according to
a multiple-priors model, which was initially proposed by Chen and
Epstein \cite{Chen}. They represent the set $\Theta$ of priors on
$(\Omega,\mathcal{F}_{T})$ by the set of probability measures
equivalent to $\mathbb{P}$:
$$
 \Theta\triangleq\Bigg\{\mathbb{Q}:{d\mathbb{Q}\over d\mathbb{P}}
 =\exp\left(-{1\over 2}\int_0^{T}|\xi_s|^2ds-\int_0^{T}(\xi_s)^TdW_s\right)
\Bigg\},
$$
for $\xi=(\xi^1,\cdots,\xi^n)^T\in\Xi$, where $\Xi$ is the set of
$\mathbb{F}$-adapted process valued in a compact and convex subset
$\mathcal{O}\subset\mathbb{R}^n$ including the origin $0$. More
generally, the density on $(\Omega,\mathcal{F}_t)$ is defined as
$\mathbb{E}_{\mathbb{P}}[\frac{d\mathbb{Q}}{d\mathbb{P}}|\mathcal{F}_t]$.
Hence, $\Theta$ is indeed the set of equivalent probability measures
which includes the risk-neutral probability measure $\mathbb{P}$ as
the dominant pricing measure.

For any starting time $t\in[0,T]$, a representative investor makes
inter-temporal consumption and invests the remaining wealth in the
risk-free asset and the risky assets in the remaining time interval
$[t,T]$, resulting in her wealth equation:
\begin{align}\label{SDE}
 X_s^{X_t;\,\pi,c}&=X_t+\int_t^s\frac{X_u^{X_t;\pi,c}-\sum_{i=1}^n\pi^i_u}{B_u}dB_u+
 \sum_{i=1}^{n}\int_t^s\frac{\pi^i_u}{S^i_u}dS^i_u-\int_t^sc_udu \nonumber\\
 &=X_t+\int_t^s \Big[\,r(u)X_u^{X_t;\,\pi,c}-c_u\,\Big]du
 +\int_t^s(\pi_u)^T\sigma(u)dW_u,
\end{align}
where $(\pi,c)$ is the portfolio-consumption strategy with $c$ being
the inter-temporal consumption rate, and
$\pi=(\pi^1,\cdots,\pi^n)^T$ being the amount of money invested in
the risky assets $S=(S^1,\cdots,S^n)^T$, both in the admissible set
$\Pi[t, T]$:
$$
\Pi[t,T]\triangleq\{(\pi,c):\pi\in
L^2_{\mathbb{F}}(t,T;\mathcal{A}),\,
 c \in L^2_{\mathbb{F}}(t,T;\mathbb{R}_+)\}
$$with
$$
 L^2_{\mathbb{F}}(t,T;\mathcal{A})\triangleq
 \left\{\,\pi:\mathbb{F}\mbox{-adapted}, \mbox{valued in}\,\mathcal{A},\
\mbox{and} \
 \mathbb{E}_\mathbb{P}\left[\,\int_t^T
 |\pi_s|^2ds\,\right]<\infty\,\right\},
$$
where $\mathcal{A}$ is a closed subset of $\mathbb{R}^n$.

 Note that the investor in fact makes her decision under
$\mathbb{Q}$ but not under $\mathbb{P}$, as she is uncertain about
the risk-neutral probability $\mathbb{P}$. By Girsanov's theorem,
$\overline{W}=(\overline{W}^1,\cdots,\overline{W}^n)^T$ with
$\overline{W}^j=W^j+\int_0^{\cdot}\xi_s^jds$ is the Brownian motion
under $\mathbb{Q}$, and the investor's wealth equation (\ref{SDE})
under $\mathbb{Q}$ is
\begin{align}\label{SDE1}
 X_s^{X_t;\,\pi,c}
 &=X_t+\int_t^s \Big[\,r(u)X_u^{X_t;\,\pi,c}-c_u-\sum_{i,j=1}^n\sigma_{ij}(u)\pi^i_u\xi^j_{u}\,\Big]du
 +\sum\limits_{i,j=1}^n\int_t^s \sigma_{ij}(u)\pi^i_u d\overline{W}^j_u\nonumber\\
 &=X_t+\int_t^s \Big[\,r(u)X_u^{X_t;\,\pi,c}-c_u-(\pi_u)^T\sigma(u)\xi_{u}\,\Big]du
 +\int_t^s (\pi_u)^T\sigma(u)d\overline{W}_u.
\end{align}
Hence, the portfolio-consumption strategy $(\pi,c)$ does impact the
investor's wealth $X^{X_t;\pi,c}$ through its drift terms.

The investor has an additive utility defined on the priors set
$\Theta$, which is formulated as a stochastic differential utility
as in Duffie and Epstein \cite{Duffie}:
$$
U^\mathbb{Q}_t\triangleq\mathbb{E}_\mathbb{Q}
 \Bigg[\,\int_t^\tau \left(v(u,c_u)-r(u)U^\mathbb{Q}_u\right)\,du+U^\mathbb{Q}_\tau\,\Bigg|\,{\cal F}_t\,\Bigg],
$$ with the bequest at the terminal time $T$: $U_T^{\mathbb{Q}}=X_T^{X_t;\pi,c}$, where
$\tau\in{\cal U}[t,T]$ is the set of $\mathbb{F}$-stopping times
valued in $[t,T]$, $v(\cdot,\cdot)$ is the time-dependent utility of
the inter-temporal consumption rate $c$, and $X^{X_t;\pi,c}$ is the
wealth process. The investor chooses the worst scenario from the
priors set $\Theta$ as her utility:
\begin{equation}\label{SDU}
U_t=\inf_{\mathbb{Q}\in\Theta}U_t^{\mathbb{Q}}.
\end{equation}

The utility $v(\cdot,\cdot)$ of the consumption rate $c$ satisfies
the following assumption:
\begin{assumption}\label{Assumption2}
 For any $t\in[0,T]$, the time-dependent utility
 $v:[\,0,T\,]\times\mathbb{R}\mapsto [-\infty,\infty)$ is
(1) concave, nondecreasing and upper semi-continuous; (2) the
half-line $dom_t(v)\triangleq\{x\in[\,0,+\infty):v(t,x)>-\infty\}$
is a nonempty subset of $[\,0,\infty)$; and (3) $\partial_x
v(t,\cdot)$ is continuous, positive, and strictly
 decreasing in the interior of $dom_t(v)$, and
$$
 \lim\limits_{x\rightarrow+\infty}\sup\limits_{t\in[\,0,T\,]}\partial_x v(t,x)=0.
$$
\end{assumption}

A typical example of $v(\cdot,\cdot)$ is the power utility which is
penalized to $-\infty$ when the consumption rate is negative. By
Section 3.4 of Karatzas and Shreve \cite{Karatzas}, there exists
$\widehat{c}(t)$ such that
\begin{equation}\label{optimalconsumption}
v^*(t)\triangleq
v(t,\widehat{c}(t))-\widehat{c}(t)=\sup_{x\in\mathbb{R}}\{v(t,x)-x\}.
\end{equation}
That is, $v^*(t)$ is the convex dual of $v(t,\cdot)$ at level $1$
for $t\in[0,T]$.

The investor values a contingent claim in the market, whose payoff
is an $\mathcal{F}_{T}$-measurable random variable $\xi\in
L^2_{{\cal F}_T}(\mathbb{R}_+)$:
$$
 L^2_{{\cal F}_T}(\mathbb{R}_+)\triangleq
 \left\{\,\xi:\mathcal{F}_{T}\text{-measurable,\ valued\ in}\
 \mathbb{R}_+,\ \text{and}\
 \mathbb{E}_\mathbb{P}\left[\,|\xi|^2\,\right]<\infty\,\right\}.
$$

If there is no ambiguity about the risk-neutral probability measure
$\mathbb{P}$, no inter-temporal consumption and no trading
constraints, it is known that the risk-neutral price process $D$ of
this contingent claim and the corresponding hedging strategy
$Y=(Y^1,\cdots,Y^n)^T$ (normalized by the volatility matrix
$\sigma(\cdot)$) are the unique solutions to the following linear
BSDE:
 \begin{align}\label{contingentbsde}
 D_t=\xi-\int_t^T r(u)D_udu-\int_t^T
 (Y_u)^TdW_u=\mathbb{E}_{\mathbb{P}}\left[e^{-\int_t^{T}r(u)du}\xi|\mathcal{F}_{t}\right]
 \end{align}
for $(D,Y)\in L^2_{\mathbb{F}}(0,T;\mathbb{R})\times
L^2_{\mathbb{F}}(0,T;\mathbb{R}^n)$. See Section 1 of El Karoui et
al \cite{ElKaroui} for the further details. However, due to the
ambiguity, the consumption and the trading constraints,
(\ref{contingentbsde}) is not valid.

We consider utility indifference valuation for such a contingent
claim with the payoff $\xi$.
We write the utility $U_t$ as $U_t(X_T^{X_t;\pi,c})$, when we want
to emphasize the dependence of the utility $U$ on the bequest
$X_T^{X_t;\pi,c}$.
\begin{definition}\label{def}
The bid price $P^b(X_t;\xi)$ and the ask price $P^s(X_t;\xi)$ of the
contingent claim with the payoff $\xi$ are defined implicitly by the
requirement that
\begin{align}\label{Pricing3}
 \mathop{{\rm ess.sup}}\limits_{(\pi,c)\in\Pi[t,T]}U_t(X^{X_t;\,\pi,\,c}_T)
 &=\mathop{{\rm ess.sup}}\limits_{(\pi,c)\in\Pi[t,T]}U_t(\xi+X^{X_t-P^b(X_t;\xi);\,\pi,\,c}_T);
 \\[2mm]\label{Pricing4}
 \mathop{{\rm ess.sup}}\limits_{(\pi,c)\in\Pi[t,T]}U_t(X^{X_t;\,\pi,\,c}_T)
 &=\mathop{{\rm
 ess.sup}}\limits_{(\pi,c)\in\Pi[t,T]}U_t(-\xi+X^{X_t+P^s(X_t;\xi);\,\pi,\,c}_T),
\end{align}
where $X^{X_t;\pi,c}_{\cdot}$ follows the wealth equation
(\ref{SDE}) starting from time $t$ under $\mathbb{P}$, or
equivalently (\ref{SDE1}) under $\mathbb{Q}$, and $U_t(\cdot)$ is
the worst case of the stochastic differential utility at time $t$
defined by (\ref{SDU}), i.e. $U_t=U_t(X^{X_t;\,\pi,\,c}_T)$
satisfies
$$
U_t=\inf_{\mathbb{Q}\in\Theta}U_t^{\mathbb{Q}}
=\inf\limits_{\mathbb{Q}\in\Theta}\,\mathbb{E}_\mathbb{Q}
 \Bigg[\,\int_t^T \left(v(u,c_u)-r(u)U^\mathbb{Q}_u\right)\,du+X^{X_t;\,\pi,\,c}_T\,\Bigg|\,{\cal
 F}_t\,\Bigg].
$$
\end{definition}

In terms of utility maximization, we are thus indifferent between
buying or not buying the contingent claim with the payoff $\xi$ for
the bid price $P^b(X_t;\xi)$, and indifferent between selling or not
selling the contingent claim with the payoff $\xi$ for the ask price
$P^s(X_t;\xi)$, while the utility is chosen as the worst scenario
from the priors set $\Theta$.

Our model deviates from the existing literature in the following
three folds: (1) The utility is formulated in terms of a stochastic
differential utility of both the inter-temporal consumption and the
terminal wealth, while the most of exiting literature only considers
the expected utility of the terminal wealth; (2) The model
uncertainty is taken into account in a dynamic consistent way by
employing the multiple-priors model of Chen and Epstein \cite{Chen};
(3) The trading constraints are also considered in our indifference
valuation model. We shall see (2) and (3) lead to some new and
interesting features of the indifference price.

Our main tool to characterize the indifference price is the theory
of BSDE. By Theorem 2.2 of Chen and Epstein \cite{Chen}, the utility
$U$ is represented as the unique solution of the following BSDE:
\begin{equation}\label{utility}
 U_t=X_T^{X_t;\pi,c}
 +\int_t^T \left[\,v(u,c_u)-r(u)U_u-
 \max\limits_{\xi\in\Xi}\sum\limits_{j=1}^n\xi^j_{u}Z_u^j\,\right]\,du
 -\sum\limits_{j=1}^n\int_t^T Z_u^jdW^j_u
\end{equation}
for $(U,Z)\in L^2_{\mathbb{F}}(0,T;\mathbb{R})\times
L^2_{\mathbb{F}}(0,T;\mathbb{R}^n)$. The BSDEs in this paper are
always considered in the above space.

Note that the maximum term in the above bracket is in fact pathwise
maximum:
$$\left(\max\limits_{\xi\in\Xi}\sum\limits_{j=1}^n\xi^j_{t}Z_t^j\right)(\omega)=
\max_{\xi_t(\omega)\in\mathcal{O}}\sum_{j=1}^n\xi^j_t(\omega)Z_t^j(\omega)$$
for any $t\in[0,T]$ and $\omega\in\Omega$, so we will write the
above two maximization problems synonymously.\\

Our first main result is the following representation result for the
bid price and the ask price.

\begin{theorem}\label{theorem} Suppose that Assumptions \ref{Assumption1} and
\ref{Assumption2} are satisfied. Then the bid price $P^{b}(X_t;\xi)$
and the ask price $P^{s}(X_t;\xi)$ are uniquely determined by
Definition \ref{def}, and are both independent of the initial wealth
$X_t$. They are denoted as $P^{b}(t;\xi)$ and $P^{s}(t;\xi)$
respectively, and have the representations:
\begin{align}\label{estimate1}
 &&P^b(t;\xi)=\mathop{{\rm ess.sup}}\limits_{(\pi,c)\in\Pi[t,T]}U_t(\xi+X^{0;\,\pi,c}_T)
 -{\cal R}(t)\leq D_t ,
 \\[2mm]\label{estimate2}
 &&P^s(t;\xi)=-\mathop{{\rm ess.sup}}\limits_{(\pi,c)\in\Pi[t,T]}U_t(-\xi+X^{0;\,\pi,c}_T)
 +{\cal R}(t)\geq D_t,
\end{align}
where $\mathcal{R}(\cdot)$ is the value of the optimal consumption:
$$\mathcal{R}(t)=\int_t^{T}e^{-\int_{t}^sr(u)du}v^*(s)ds.$$
\end{theorem}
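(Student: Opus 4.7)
My plan is to eliminate the initial wealth from the indifference identities using the linearity of \eqref{SDE} and then reduce everything to a BSDE comparison argument. First, the wealth SDE \eqref{SDE} is linear in $X$ with drift coefficient $r(u)$, so a direct integration gives $X_s^{X_t;\,\pi,c} = X_t\,e^{\int_t^s r(u)du} + X_s^{0;\,\pi,c}$. Substituting this decomposition into the utility BSDE \eqref{utility} and setting $\hat U_s := U_s - X_t\,e^{\int_t^s r(u)du}$ cancels the linear drift against the $-r(u)U_u$ term, showing that $(\hat U,Z)$ solves the same BSDE with terminal $X_T^{0;\pi,c}$ in place of $X_T^{X_t;\pi,c}$. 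Uniqueness for this Lipschitz BSDE then yields
\[
 U_t(X_T^{X_t;\,\pi,c}) = X_t + U_t(X_T^{0;\,\pi,c}),
\]
and the same identity holds with $\pm\xi$ added to the bequest. Taking the essential supremum in \eqref{Pricing3} and \eqref{Pricing4} and pulling out the deterministic $X_t$ reduces the definitions of $P^b$ and $P^s$ to scalar equations involving only the two $(\pi,c)$-optimised utilities, which already proves existence, uniqueness, and the independence of $X_t$.

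Next I would evaluate $\mathop{\text{ess.sup}}_{(\pi,c)} U_t(X_T^{0;\pi,c})$ and identify it with $\mathcal R(t)$. Following the idea sketched in the introduction, introduce the indirect utility $\tilde U := U - X^{0;\pi,c}$ and $\tilde Z := Z - \sigma^{T}\pi$; applying It\^o's formula to \eqref{utility} and \eqref{SDE} shows that $(\tilde U,\tilde Z)$ solves a BSDE with terminal value $0$ and driver $v(s,c_s)-c_s - r(s)\tilde U_s - \max_{\xi\in\mathcal O}\xi^{T}(\tilde Z_s+\sigma(s)^{T}\pi_s)$. Since $0\in\mathcal O$ the max is non-negative, and by \eqref{optimalconsumption} one has $v(s,c)-c\leq v^*(s)$; hence the driver is pointwise dominated by $v^*(s)-r(s)\tilde U_s$. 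Comparing with the deterministic BSDE $-dY_s = (v^*(s)-r(s)Y_s)ds$, $Y_T=0$, whose unique solution is $(\mathcal R(\cdot),0)$, the comparison principle for Lipschitz BSDEs yields $\tilde U_t\leq \mathcal R(t)$. The bound is attained by the admissible pair $(\pi,c)\equiv(0,\hat c)$ from \eqref{optimalconsumption}, for which \eqref{utility} collapses to precisely this ODE. Substituting this identity into the scalar equations from the previous paragraph yields the representations \eqref{estimate1} and \eqref{estimate2}.

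For the risk-neutral bounds, I apply the same indirect utility trick to $U(\xi+X_T^{0;\pi,c})$: the associated $\tilde U'$ now has terminal value $\xi$, so after multiplying the BSDE by the discount factor $e^{-\int_t^\cdot r}$, integrating on $[t,T]$ and taking $\mathbb P$-conditional expectation (the stochastic integral is a true martingale since $\tilde Z'\in L^2$), one obtains
\[
 \tilde U'_t = D_t + \mathbb E_{\mathbb P}\!\left[\int_t^T e^{-\int_t^s r(u)du}\Big(v(s,c_s)-c_s-\max_{\xi\in\mathcal O}\xi^{T}(\tilde Z'_s+\sigma(s)^{T}\pi_s)\Big)ds\,\bigg|\,\mathcal F_t\right].
\]
The same pointwise bounds as above give $\tilde U'_t\leq D_t + \mathcal R(t)$; taking the essential supremum over $(\pi,c)$ and subtracting $\mathcal R(t)$ delivers $P^b(t;\xi)\leq D_t$. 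The argument for $P^s(t;\xi)\geq D_t$ is symmetric: replacing $\xi$ by $-\xi$ in the analogous computation reverses the sign of the $D_t$-contribution.

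The main technical obstacle is the BSDE comparison step in the second paragraph: the control $(\pi,c)$ enters the driver through the nonlinear, non-separable expression $\max_{\xi\in\mathcal O}\xi^{T}(\tilde Z+\sigma^{T}\pi)$, and $\tilde Z$ is itself part of the BSDE solution rather than a free control, so one cannot naively optimise pointwise over $(\pi,c,\tilde Z)$. The argument works because $0\in\mathcal O$ and $0\in\mathcal A$ permit the two pointwise bounds ($\max\geq 0$ and $v(\cdot,c)-c\leq v^*$) to be saturated simultaneously by the single deterministic admissible pair $(0,\hat c)$, which in turn produces an explicit optimal BSDE whose solution is the convex-dual expression $\mathcal R(t)$.
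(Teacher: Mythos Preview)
Your proof is correct and follows essentially the same route as the paper: translation invariance of the utility via the linear wealth SDE, the indirect-utility shift $U-X^{0;\pi,c}$, and BSDE comparison against the deterministic equation whose solution is $\mathcal{R}(t)$, with the choice $(\pi,c)=(0,\hat c)$ saturating the bound. The only cosmetic difference is in deriving $P^b\leq D_t$: the paper subtracts both $D$ and $X$ from $U$ and invokes the comparison theorem a second time, while you subtract only $X$ and use the explicit variation-of-constants (discounting plus $\mathbb{P}$-conditional expectation) representation of the resulting linear BSDE---these two computations are equivalent.
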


\begin{proof} We only consider the case of the bid price, while the case of the ask price
is similar. Note that the utility maximization problem on the left hand side (LHS) of
(\ref{Pricing3}) is a special case of the one on the right hand side (RHS) with $\xi=0$.

We first show that the solutions of both utility maximization
problems in (\ref{Pricing3}) exist, i.e.
$$\mathop{{\rm ess.sup}}\limits_{(\pi,c)\in\Pi[t,T]}U_t(\xi+X^{X_t;\pi,c}_T)<+\infty$$ for any payoff $\xi\in
L^2_{\mathcal{F}_{T}}(\mathbb{R}_+)$ and any initial wealth $X_t\in
L^2_{\mathcal{F}_{t}}(\mathbb{R})$. Indeed, if we define an indirect
utility $\widehat{U}$ by subtracting the contingent claim $D$ and
the wealth $X^{X_t;\pi,c}$ from the original utility $U$ as
$$
\widehat{U}_{s}=U_{s}(\xi+X^{X_t;\pi,c}_T)-(D_{s}+X_{s}^{X_t;\pi,c});\qquad
\widehat{Z}_s^j=Z_s^j-\left(
Y_s^j+\sum_{i=1}^n\sigma_{ij}(s)\pi^i_{s}\right)
$$
for $s\in[t,T]$, by (\ref{SDE}), (\ref{contingentbsde}) and
(\ref{utility}), it is easy to verify that
$(\widehat{U},\widehat{Z})$ satisfies the following BSDE:
\begin{equation}\label{wealthbsde}
 \widehat{U}_t=\int_t^T \left\{\,-r(u)\widehat{U}_u+[\,v(u,c_u)-c_u\,]
 -\max\limits_{\xi\in\Xi}\sum\limits_{j=1}^n\xi_u^jZ_u^j\,\right\}\,du
 -\sum\limits_{j=1}^n\int_t^T \widehat{Z}_u^jdW^j_u.
\end{equation}
Note that
$$
v(u,c_u)-c_u\leq v^*(u);\qquad
 -\max\limits_{\xi\in\Xi}\sum\limits_{j=1}^n\xi_u^jZ_u^j\leq0.
$$
By the BSDE comparison theorem (see Theorem 2.2 of \cite{ElKaroui}),
$\widehat{U}_t\leq \mathcal{R}(t)$, where $\mathcal{R}$ is given by
\begin{equation}\label{wealthbsde0}
 \mathcal{R}(t)=\int_t^T \left[\,-r(u)\mathcal{R}(u)+v^*(u)\,\right]\,du
 -\int_t^T (\mathcal{Q}(u))^TdW_u,
\end{equation}
which has a unique solution in
$L^2_{\mathbb{F}}(0,T;\mathbb{R})\times
L^2_{\mathbb{F}}(0,T;\mathbb{R}^n)$:
$$(\mathcal{R}(t),\mathcal{Q}(t))=\left(\int_t^{T}e^{-\int_{t}^sr(u)du}v^*(s)ds,\;0\right).$$
Hence $U_t(\xi+X^{X_t;\,\pi,c}_T)\leq D_t+X_t+\mathcal{R}(t)$ for
any $(\pi,c)\in\Pi$, and we have proved the upper bound:
\begin{equation}\label{bsdeupperbound}
 \mathop{{\rm ess.sup}}\limits_{(\pi,c)\in\Pi[t,T]}U_t(\xi+X^{X_t;\,\pi,c}_T)\leq D_t+X_t
 +\mathcal{R}(t)<\infty.\quad
\end{equation}

Next, we show the representation (\ref{estimate1}).  We first solve
the utility maximization problem on LHS of (\ref{Pricing3})
explicitly. By taking $(\pi,c)=(0,\widehat{c})$ in (\ref{utility}),
we have
$$
 \widehat{U}_t=\int_t^T \left\{\,-r(u)\widehat{U}_u+v^*(u)
 -\max\limits_{\xi\in\Xi}\sum\limits_{j=1}^n\xi_u^jZ_u^j\,\right\}\,du
 -\sum\limits_{j=1}^n\int_t^T Z_u^jdW^j_u,
$$
where $\widehat{U}_s\triangleq
U_s(X_T^{X_t;0,\hat{c}})-X_s^{X_t;0,\widehat{c}}$ for $s\in[t,T]$.
The above BSDE has a unique solution in
$L^2_{\mathbb{F}}(0,T;\mathbb{R})\times
L^2_{\mathbb{F}}(0,T;\mathbb{R}^n)$, which is the same as
(\ref{wealthbsde0}):
$(\widehat{U}_t,Z_t)=(\mathcal{R}(t),\mathcal{Q}(t))$. Hence
$$\mathop{{\rm ess.sup}}\limits_{(\pi,c)\in\Pi[t,T]}U_t(X^{X_t;\,\pi,c}_T)
\geq
U_t(X^{X_t;\,0,\widehat{c}}_T)=\widehat{U}_t+X_t=\mathcal{R}(t)+X_t.$$
The above inequality is actually the equality. Indeed, By taking
$\xi=0$ in (\ref{bsdeupperbound}), we have the upper bound:
$$\mathop{{\rm ess.sup}}\limits_{(\pi,c)\in\Pi[t,T]}U_t(X^{X_t;\,\pi,c}_T)
\leq \mathcal{R}(t)+X_t,$$ so we have proved that
\begin{equation}\label{LHS}
 \mathop{{\rm
 ess.sup}}\limits_{(\pi,c)\in\Pi[t,T]}U_t(X^{X_t;\,\pi,c}_T)=
\mathcal{R}(t)+X_t
\end{equation}
with the optimal portfolio-consumption strategy
$(\pi^*,c^*)=(0,\widehat{c}\,)$ for the utility maximization problem
on LHS of (\ref{Pricing3}).

On the other hand, note that the utility maximization problem on RHS
of (\ref{Pricing3}) is independent of the initial wealth:
\be\label{RHS}
 \mathop{{\rm ess.sup}}\limits_{(\pi,c)\in\Pi[t,T]}U_t(\xi+X^{X_t;\,\pi,c}_T)
 =\mathop{{\rm
 ess.sup}}\limits_{(\pi,c)\in\Pi[t,T]}U_t(\xi+X^{0;\,\pi,c}_T)+X_t.\quad
\ee

By combining (\ref{LHS}) and (\ref{RHS}), we obtain the
representation (\ref{estimate1}), which also shows that the bid
price $P^b(t;\xi)$ is independent of the initial wealth $X_t$.
Finally, by taking $X_t=0$ in the inequality (\ref{bsdeupperbound})
and using the representation
 (\ref{estimate1}), we obtain the upper bound of the bid price:
$$P^b(t;\xi)=\mathop{{\rm ess.sup}}\limits_{(\pi,c)\in\Pi[t,T]}U_t(\xi+X^{0;\,\pi,c}_T)
 -{\cal R}(t)\leq D_t+\mathcal{R}_t-\mathcal{R}_t=D_t.$$
\end{proof}

Our next result further characterizes the indifference prices in
terms of the solutions to BSDEs. We show that different admissible
sets $\Pi$ result in different bid prices $P^b(t;\xi)$ and ask
prices $P^s(t;\xi)$, and in particular, if there is no constraint in
$\Pi$, then both the bid price and the ask price coincide with the
risk-neutral price even with model ambiguity.

For $t\in[0,T]$ and $\omega\in\Omega$, we define the subset
$\mathcal{B}_t(\omega)\subset\mathbb{R}^n$ by
$$
 {\cal B}_t(\omega)\triangleq\left\{\,z\in\mathbb{R}^n: z_j=\sum_{i=1}^n\sigma_{ij}(t)\pi^{i}_t(\omega)
 \;\mbox{with}\;\pi_t(\omega)\in {\cal A}\,\right\}.
$$
Note that $\mathcal{B}_t(\omega)$ is still closed since
$\mathcal{A}$ is closed and $\sigma_{ij}(\cdot)$ is bounded. For any
$\overline{z}\in\mathbb{R}^n$, we further introduce
$$d_{\mathcal{O}}(\overline{z},\mathcal{B}_t(\omega))=
\min_{z\in\mathcal{B}_t(\omega)}\max_{\xi_t(\omega)\in\mathcal{O}}\sum_{j=1}^n\xi^j_t(\omega)(\overline{z}_j+z_j),
$$
where $\mathcal{O}$ is a closed and convex subset of $\mathbb{R}^n$
including the origin $0$. Since $\mathcal{B}_t(\omega)$ is closed,
there exists at least one  point in $\mathcal{B}_t(\omega)$ which
minimizes the support function $\delta_{\mathcal{O}}(\cdot)$ of the
compact and convex set $\mathcal{O}$:
$$\delta_{\mathcal{O}}(\overline{z},z)=\max_{\xi_t(\omega)\in\mathcal{O}}\sum_{j=1}^n\xi^j_t(\omega)(\overline{z}_j+z_j),$$
and we denote such minimal point as ${\rm
argmin}(\overline{z},\mathcal{B}_t(\omega))$. In the following, we
will omit $\omega$ if no confusion may arise.

\begin{theorem}\label{theorem2}
Suppose that Assumptions \ref{Assumption1} and \ref{Assumption2} are
satisfied. Then the bid price $P^b(t;\xi)=U_t^b$ and the ask price
$P^s(t;\xi)=U^s_t$, where $U^b$ and $U^s$ are the unique solutions
to the following BSDEs respectively:
\begin{align}\label{Pricing5}
 U^b_t&=\xi
 +\int_t^T \Big[\,-r(u)U^b_u-d_{\mathcal{O}}\,(Z^b_u,{\cal B}_u)\,\Big]\,du
 -\int_t^T (Z^b_u)^{T}dW_u;
 \\[2mm]\label{Pricing6}
 U^s_t&=\xi
 +\int_t^T \Big[\,-r(u)U^s_u+d_{\mathcal{O}}\,(-Z^s_u,{\cal B}_u)\,\Big]\,du
 -\int_t^T (Z^s_u)^TdW_u
\end{align}
for $(U^b,Z^b)\in L^2_{\mathbb{F}}(0,T;\mathbb{R})\times
L^2_{\mathbb{F}}(0,T;\mathbb{R}^n)$ and $(U^s,Z^s)\in
L^2_{\mathbb{F}}(0,T;\mathbb{R})\times
L^2_{\mathbb{F}}(0,T;\mathbb{R}^n)$. The optimal
portfolio-consumption strategy for the bid price is
$$(\pi^*,c^*)=((\sigma^{T})^{-1}{\rm argmin}(Z^b,{\cal B}),
\widehat{c}),$$ and for the ask price is
$$(\pi^*,c^*)=((\sigma^{T})^{-1}{\rm argmin}(-Z^s,{\cal B}),
\widehat{c}).$$

Moreover, if $\mathcal{A}=\mathbb{R}^n$, i.e. there is no trading
constraint, then both the bid price $P^b(t;\xi)$ and the ask price
$P^s(t;\xi)$ coincide with the risk neutral price $D_t$:
$$P^b(t;\xi)=P^s(t;\xi)=D_t=\mathbb{E}_{\mathbb{P}}\left[e^{-\int_t^{T}r(u)du}\xi|\mathcal{F}_{t}\right].$$
The optimal portfolio-consumption strategy for the bid price is
$(\pi^*,c^*)=(-(\sigma^{T})^{-1}Y, \widehat{c}\,)$, and for the ask
price is $(\pi^*,c^*)=((\sigma^{T})^{-1}Y, \widehat{c}\,)$, where
$Y$ is given by (\ref{contingentbsde}).

%
\end{theorem}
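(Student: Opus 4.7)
The plan is to reduce to the indirect‐utility BSDE already derived in the proof of Theorem~\ref{theorem}, perform a pointwise maximization of its driver, and invoke the BSDE comparison theorem to identify the essential supremum in \eqref{estimate1}--\eqref{estimate2} as the solution of a single nonlinear BSDE.

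First I would recall that, for any $(\pi,c)\in\Pi[t,T]$, the indirect utility $\widehat U_s := U_s(\xi + X^{0;\pi,c}_T) - D_s - X^{0;\pi,c}_s$ satisfies \eqref{wealthbsde}. Writing $Z_u^j = \widehat Z_u^j + Y_u^j + z_u^j$ with $z_u^j := \sum_i \sigma_{ij}(u)\pi^i_u$, and using that $\pi\in L^2_{\mathbb{F}}(t,T;\mathcal{A})$ is, via $\pi=(\sigma^T)^{-1}z$, in bijection with $z\in L^2_{\mathbb{F}}(t,T;\mathcal{B})$, the driver of \eqref{wealthbsde} takes the form
\[
-r(u)\widehat U_u + [v(u,c_u)-c_u] - \delta_{\mathcal{O}}(\widehat Z_u + Y_u,\,z_u).
\]
The consumption term is maximized pathwise at $c_u=\widehat c(u)$ by \eqref{optimalconsumption}, yielding $v^\ast(u)$; the $z$-term is maximized over $z_u\in\mathcal{B}_u$ at $z^\ast_u=\operatorname{argmin}(\widehat Z_u + Y_u, \mathcal{B}_u)$, yielding $-d_{\mathcal{O}}(\widehat Z_u+Y_u,\mathcal{B}_u)$.

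Second, I would let $(\widehat U^\ast,\widehat Z^\ast)$ be the unique solution of the BSDE with the resulting optimized driver. Uniqueness follows because $d_{\mathcal{O}}(\cdot,\mathcal{B}_u)$ is Lipschitz in its first argument, with constant at most $\sup_{\xi\in\mathcal{O}}|\xi|<\infty$ thanks to compactness of $\mathcal{O}$. The BSDE comparison theorem (Theorem~2.2 of \cite{ElKaroui}) then gives $\widehat U_t \le \widehat U^\ast_t$ for every $(\pi,c)\in\Pi[t,T]$, with equality achieved along $(\pi^\ast,c^\ast)=((\sigma^T)^{-1}z^\ast,\widehat c)$ once $z^\ast$ is realized as a measurable, $L^2$ selection of the argmin.

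Third, combining this with the representation \eqref{estimate1} and the BSDEs \eqref{contingentbsde} and \eqref{wealthbsde0} for $D$ and $\mathcal{R}$, I would set $U^b_t := \widehat U^\ast_t + D_t - \mathcal{R}(t)$ and $Z^b_u := \widehat Z^\ast_u + Y_u$. A direct computation—in which $v^\ast(u)$ cancels between $\widehat U^\ast$ and $\mathcal{R}$, and the $r(u)D_u$ and $-r(u)\mathcal{R}(u)$ contributions regroup into $-r(u)U^b_u$—yields exactly \eqref{Pricing5}; the ask price is handled symmetrically, the sign flip in $\xi$ producing $+d_{\mathcal{O}}(-Z^s_u,\mathcal{B}_u)$ as in \eqref{Pricing6}. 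For the unconstrained case $\mathcal{A}=\mathbb{R}^n$, positive definiteness of $\sigma$ gives $\mathcal{B}_u=\mathbb{R}^n$ a.s., and then the choice $z=-\bar z$ makes $\delta_{\mathcal{O}}(\bar z,z)=0$ while $0\in\mathcal{O}$ forces $\delta_{\mathcal{O}}\ge 0$, so $d_{\mathcal{O}}(\cdot,\mathbb{R}^n)\equiv 0$. Both \eqref{Pricing5} and \eqref{Pricing6} then collapse to the linear BSDE \eqref{contingentbsde}, so by uniqueness $U^b=U^s=D$ and $Z^b=Z^s=Y$, and the claimed optimal strategies follow from the argmin formula.

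The main obstacle I anticipate is producing a \emph{measurable} selection $z^\ast$ of the argmin that also lies in $L^2_{\mathbb{F}}$, so that the essential supremum is actually attained by an admissible strategy rather than merely approached. Measurability of $(u,\omega)\mapsto\operatorname{argmin}(\widehat Z^\ast_u(\omega)+Y_u(\omega),\mathcal{B}_u(\omega))$ follows from a standard selection result (Kuratowski--Ryll-Nardzewski), while the $L^2$ bound requires a linear growth estimate $|z^\ast_u|\le C(1+|\widehat Z^\ast_u + Y_u|)$ obtained from compactness of $\mathcal{O}$, closedness of $\mathcal{B}_u$, and $0\in\mathcal{O}$; once these technicalities are in place, the remainder reduces to routine BSDE manipulations.
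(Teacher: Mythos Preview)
Your proposal is correct and follows essentially the same route as the paper: define an indirect utility by subtracting the wealth from $U$, rewrite the driver so the support function $\delta_{\mathcal{O}}$ appears in terms of the new $Z$-variable and $z=(\sigma)^T\pi$, maximize the driver pointwise in $(c,z)$, and then invoke the BSDE comparison theorem to identify the optimized BSDE and recover \eqref{Pricing5}--\eqref{Pricing6} after subtracting $\mathcal{R}$. The only cosmetic difference is that you reuse the shift $\widehat U=U-D-X$ from the proof of Theorem~\ref{theorem} (so that $Z^b=\widehat Z^\ast+Y$), whereas the paper defines a fresh shift $\overline U=U-X$ (so that $Z^b=\overline Z^\ast$ directly); your added discussion of the measurable $L^2$ selection for $z^\ast$ is a point the paper leaves implicit.
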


\begin{proof} We again only consider the case of the bid price, as the case of the ask price
is similar. By the representation (\ref{estimate1}), we only need to
solve $$\mathop{{\rm
ess.sup}}\limits_{(\pi,c)\in\Pi[t,T]}U_t(\xi+X^{0;\,\pi,c}_T).$$

For the initial wealth $X_t=0$, define the indirect utility
$\overline{U}$ by subtracting the wealth $X^{0;\pi,c}$ from the
original utility $U$ as
$$
\overline{U}_{s}=U_{s}(\xi+X^{0;\pi,c}_T)-X_{s}^{0;\pi,c};\qquad
\overline{Z}_s^j=Z_s^j-\sum_{i=1}^n\sigma_{ij}(s)\pi^i_{s}
$$
for $s\in[t,T]$. By (\ref{SDE1}) and (\ref{utility}), it is easy to
verify that $(\overline{U},\overline{Z})$ satisfies the following
BSDE:
\begin{equation}\label{wealthbsde3}
 \overline{U}_t=\xi+\int_t^T \left\{\,-r(u)\overline{U}_u+[\,v(u,c_u)-c_u\,]
 -\max\limits_{\xi\in\Xi}\sum\limits_{j=1}^n\xi_u^jZ_u^j\,\right\}\,du
 -\int_t^T \overline{Z}_u^TdW_u.
\end{equation}
The maximum term in the above bracket can be rewritten in terms of
$\overline{Z}$ as
$$
 \max\limits_{\xi\in\Xi}\sum\limits_{j=1}^n\xi^j_uZ^j_{u}
 =\max\limits_{\xi\in\Xi}\sum\limits_{j=1}^n
 \xi^j_u
 \left(\overline{Z}^j_{u}+\sum\limits_{i=1}^n\sigma_{ij}(u)\pi^i_{u}\right)
 =\delta_{\mathcal{O}}\,(\overline{Z}_u, (\sigma(u))^T\pi_u),
$$
which is Lipschitz continuous in $\overline{Z}_u$, so the comparison
principle holds for (\ref{wealthbsde3}). For any $(\pi,c)\in\Pi$, we
have
$$
v(t,c_t)-c_t\leq v^{*}(t);\qquad
-\delta_{\mathcal{}O}(\overline{Z}_t,(\sigma(t))^T\pi_t)\leq
-d_{\mathcal{O}}(\overline{Z}_t,\mathcal{B}_t),
$$
and for $(\pi,c)=((\sigma^{T})^{-1}{\rm argmin}(\overline{Z},{\cal
B}), \widehat{c}\,)$, we have the equality:
$$
v(t,c_t^*)-c_t^*= v^{*}(t);\qquad
-\delta_{\mathcal{O}}(\overline{Z}_t,{\rm
argmin}(\overline{Z}_t,{\cal B}_t))=
-d_{\mathcal{O}}(\overline{Z}_t,\mathcal{B}_t).
$$
By the BSDE comparison principle, $\overline{U}_t\leq
\overline{U}^*_t$ for any $(\pi,c)\in\Pi$, where $\overline{U}^*$ is
the solution to BSDE:
\begin{equation}\label{wealthbsde30}
 \overline{U}^{*}_t=\xi
 +\int_t^T \Big(\,-r(u)\overline{U}^{*}_u+v^{*}(u)-d_{\mathcal{O}}\,(\overline{Z}^{*}_u,{\cal B}_u)\,\Big)\,du
 -\int_t^T (\overline{Z}^{*}_u)^TdW_u,
\end{equation}
and
$(\overline{U}_t,\overline{Z}_t)=(\overline{U}_t^{*},\overline{Z}_t^{*})$
for $(\pi,c)=((\sigma^{T})^{-1}{\rm argmin}(\overline{Z},{\cal
B}),\hat{c})$.

Therefore,
$$\mathop{{\rm
ess.sup}}\limits_{(\pi,c)\in\Pi[t,T]}U_t(\xi+X^{0;\,\pi,c}_T)=\mathop{{\rm
ess.sup}}\limits_{(\pi,c)\in\Pi[t,T]}\overline{U}_t=\overline{U}^*_t$$
with the optimal portfolio-consumption strategy
$(\pi^*,c^*)=((\sigma^{T})^{-1}{\rm argmin}(\overline{Z}^*,{\cal
B}),\hat{c})$. Finally, it is easy to verify that
$(\overline{U}^*_t-\mathcal{R}(t),\overline{Z}^*_t)=(U^b_t,Z_t^b)$,
which is the unique solution to BSDE (\ref{Pricing5}).

Finally, if $\mathcal{A}=\mathbb{R}^n$, i.e. there is no trading
constraint, $\mathcal{B}_u(\omega)$ is $\mathbb{R}^n$-valued as
well, and
$$d_{\mathcal{O}}(Z^b_u,\mathcal{B}_u(\omega))
=\min_{z\in\mathcal{B}_u(\omega)}\max_{\xi_u(\omega)\in\mathcal{O}}
(\xi_u(\omega))^T(Z_u^b+z)=0$$ where ${\rm
argmin}(Z^b_u,\mathcal{B}_u(\omega))=-Z^b_u$. In this situation, the
pricing BSDE (\ref{Pricing5}) reduces to BSDE
(\ref{contingentbsde}). By the uniqueness of the solution to
(\ref{contingentbsde}), we have $(U^b,Z^b)=(D,Y)$. The optimal
portfolio-consumption strategy in this situation reduces to
$$(\pi^*,c^*)=((\sigma^{T})^{-1}{\rm argmin}(\overline{Z}^*,{\cal
B}),\hat{c})=(-(\sigma^{T})^{-1}Y,\hat{c}).$$
\end{proof}

In the following two sections, we will apply our BSDE representation
results for utility indifference prices to European options and
American options. Another potential application is to consider
exotic options such as Parisian options. For example, Guo et al
\cite{Guo} consider the pricing problem of Parisian options also
under the framework of Chen and Epstein \cite{Chen}. However, they
use the idea of super-replication rather than utility indifference
valuation, so they obtain pricing bounds rather than a price.

\section{Application to European Option}
In this section, we specify our model in a Markovian setting by
assuming the payoff of the contingent claim having the form:
\begin{equation}\label{European}
 \xi=\int_t^T\varrho(u)du+\Psi(S_T),
\end{equation}
where $\varrho$ is the earning rate, and $\Psi$ is the final payoff of the contingent claim at the
maturity $T$. They satisfy the following assumption:
\begin{assumption}\label{Assumption3}
The earnings rate $\varrho(\cdot)$ is a continuous function, and the final payoff $\Psi$ is uniformly Lipschitz continuous:
$$
 |\Psi(S)-\Psi(\overline{S})|\leq K|S-\overline{S}|\ \;\mbox{for}\;S,\,\overline{S}\in\mathbb{R}^n_{+},
$$
so $\Psi(\cdot)$ has linear growth.
\end{assumption}

Under the above Markovian assumption, the bid price $P^b(t;\xi)$ and
the ask price $P^s(t;\xi)$ can be written as functions of the time
$t$ and the state $S_t$: $P^b(t,S_t;\Psi)\triangleq P^b(t;\xi)$, and
$P^s(t,S_t;\Psi)\triangleq P^s(t;\xi)$. By Theorem \ref{theorem2},
$P^{b}(t,S_t;\Psi)$ and $P^s(t,S_t;\Psi)$ are the solutions to the
following BSDEs respectively: \be\label{FBSDE}
 P^{b}(t,S_t;\Psi)&=&\Psi(S_T)
 +\int_t^T \Big[\,\varrho(u)-r(u)P^{b}(u,S_u;\Psi)-d_{\mathcal{O}}\,(Z^b_u,{\cal B}_u)\,\Big]\,du-\int_t^T
 (Z^b_u)^TdW_u;\;\;
 \\[2mm]\label{FBSDE111}
 P^s(t,S_t;\Psi)&=&\Psi(S_T)
 +\int_t^T \Big[\,\varrho(u)-r(u)P^s(u,S_u;\Psi)+d_{\mathcal{O}}\,(-Z^s_u,{\cal B}_u)\,\Big]\,du-\int_t^T
 (Z^s_u)^TdW_u,\;\;
\ee
where the state $S$ is given by (\ref{stateequation}). Moreover, by
the nonlinear Feynman-Kac formula (see Theorem 4.2 of
\cite{ElKaroui}), $P^b(t,S;\Psi)$ and $P^{s}(t,S;\Psi)$ are the
unique viscosity solutions of the following semi-linear PDEs:
\begin{align}\label{Eouropeanequation}
 \left\{
 \begin{array}{l}
 -\p_t P^b-{\cal L}_0 P^b=\varrho(t)-d_{\mathcal{O}}\,((\sigma(t))^T SD_S P^b,{\cal B}_t)\;\;
 \mbox{in}\;\;{\cal N}_T;
 \vspace{2mm} \\
 -\p_t P^s-{\cal L}_0 P^s=\varrho(t)+d_{\mathcal{O}}\,(-(\sigma(t))^T SD_S P^s,{\cal B}_t)\;\;
 \mbox{in}\;\;{\cal N}_T;
 \vspace{2mm} \\
 P^b(T,S;\Psi)=P^s(T,S;\Psi)=\Psi(S),\quad S\in(0,\,+\infty)^n,
 \end{array}
 \right.
\end{align}
where ${\cal N}_T\triangleq[\,0,T)\times(0,+\infty)^n$ and $SD_S P\triangleq(S_1\p_{S_1} P,\cdots,S_n\p_{S_n} P)^T$, and
the operator $\mathcal{L}_q$ is given by
\be
 {\cal L}_q \triangleq
 \sum_{i,\,j=1}^n{1\over 2}\,a_{ij}(t)\,S_iS_j\p_{S_iS_j}
 +\sum_{i=1}^n\,[\,r(t)-q_i(t)\,]\,S_i\p_{S_i} -r(t)
\ee with
$a_{ij}(t)\triangleq\sum_{l=1}^n\sigma_{il}(t)\,\sigma_{jl}(t)$. The
term $q(\cdot)=(q_1(\cdot),\cdots,q_n(\cdot))^T$ in the operator
$\mathcal{L}_q$ is interpreted as the dividend rate of the risky
assets $S$, and we shall see the bid price and the ask price under
model uncertainty with trading constraint can be related to the
risk-neutral price by adjusting the dividend rate of the underlying.

In order to investigate further properties of the bid and ask prices
and their associated hedging strategies, we need to improve the
regularities of $P^b$ and $P^s$. In the following, we present the
strong solutions for PDEs (\ref{Eouropeanequation}).
\begin{proposition}\label{europeanexistence}
Suppose that Assumptions \ref{Assumption1}, \ref{Assumption2} and
\ref{Assumption3} are satisfied. Then PDEs (\ref{Eouropeanequation})
have unique strong solutions with linear growth. Concretely
speaking,
$$
 P^{b}(t,S;\Psi),\,P^{s}(t,S;\Psi)\in W^{2,\,1}_{p,\,loc}({\cal N}_T)\cap C(\overline{\cal N}_T)\;\;\mbox{for any}\;\;p\geq1,
$$
 and there exists a constant $C$ such that
$$
 |P^{b}(t,S;\Psi)|+|P^{s}(t,S;\Psi)|\leq C(1+|S|)\;\;\mbox{for any}\;\;(t,S)\in\overline{{\cal
 N}}_T,
$$
where $W^{2,\,1}_{p,\,loc}({\cal N}_T)$ is the set of all functions
whose restrictions on the domain ${\cal N}_T^*$ belong to
$W^{2,\,1}_{p}({\cal N}_T^*)$ for any compact subset ${\cal N}_T^*$
of ${\cal N}_T$, and $W^{2,\,1}_{p}({\cal N}_T^*)$ is the completion
of $C^\infty({\cal N}^*_T)$ under the norm:
$$
 \|\,P\,\|\,_{W^{2,\,1}_{p}({\cal N}_T^*)}\triangleq \left[\;\int_ {{\cal N}_T^*}\,
 \left(\,|\,P\,|\,^p+|\partial_tP|^p+|\,D_SP\,|\,^p
 +|\,D^2_SP\,|\,^p\;\right)\,dSdt\,\right]^{1\over p},
 $$
 where $D_SP,\,D_S^2P$ denote the gradient and the Hessian matrix for $P$ with respect to $S$, respectively.
\end{proposition}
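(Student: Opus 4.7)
The plan is to obtain the solutions via a three-stage scheme: (i) reduce to a uniformly parabolic problem on the whole space by the logarithmic change of variables $x_i=\ln S_i$; (ii) regularize the terminal datum and localize to bounded cylinders so that classical quasilinear parabolic theory applies; (iii) pass to the limit using interior $W^{2,1}_p$ estimates, obtaining a strong solution with linear growth, and conclude uniqueness from a comparison principle (or equivalently from the BSDE representations \eqref{FBSDE}--\eqref{FBSDE111}). I treat only $P^b$; the argument for $P^s$ is identical after replacing $Z$ by $-Z$.

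The first step is to set $x=(\ln S_1,\dots,\ln S_n)^T$ and $\widetilde P(t,x)\triangleq P^b(t,e^{x_1},\dots,e^{x_n};\Psi)$. Under this transformation $S_i\partial_{S_i}P^b=\partial_{x_i}\widetilde P$ and $S_iS_j\partial_{S_iS_j}P^b=\partial_{x_ix_j}\widetilde P-\delta_{ij}\partial_{x_i}\widetilde P$, so the operator $\mathcal L_0$ becomes a linear parabolic operator on $(0,T)\times\mathbb R^n$ with bounded continuous coefficients whose principal part $\tfrac12 a_{ij}(t)\partial_{x_ix_j}$ is uniformly elliptic by Assumption \ref{Assumption1}. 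The nonlinearity $z\mapsto d_{\mathcal O}(z,\mathcal B_t)$ is Lipschitz continuous in $z$ with constant $\sup_{\xi\in\mathcal O}|\xi|$, because it is the minimum over $\mathcal B_t$ of the support function of the compact set $\mathcal O$; the terminal datum $\widetilde\Psi(x)\triangleq\Psi(e^{x_1},\dots,e^{x_n})$ is continuous with exponential-in-$x$ growth inherited from the Lipschitz linear growth of $\Psi$.

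Next I would regularize and localize. Mollify $\widetilde\Psi$ by $\widetilde\Psi^\varepsilon\in C^\infty(\mathbb R^n)$ with $\widetilde\Psi^\varepsilon\to\widetilde\Psi$ uniformly on compacts while preserving the linear-growth bound in $S=e^x$, and for each $R>0$ solve the semilinear Dirichlet problem on $Q_R\triangleq(0,T)\times B_R$ with lateral data given by a suitably extended barrier (e.g.\ the unique smooth solution of the corresponding linear Black--Scholes PDE with terminal condition $\widetilde\Psi^\varepsilon$, whose existence and linear growth are classical). Because the nonlinearity is Lipschitz in $\nabla\widetilde P$ and the principal part is smooth and uniformly elliptic, Theorem V.6.1 (or the quasilinear results in Chapter VI) of Ladyzhenskaya--Solonnikov--Uralceva yields a classical solution $\widetilde P^{\varepsilon,R}\in C^{2+\alpha,1+\alpha/2}(\overline{Q_R})$.

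The core of the argument is then the uniform estimates that allow passage to the limit $\varepsilon\to0$, $R\to\infty$. First, comparing $\widetilde P^{\varepsilon,R}$ against linear-in-$S$ barriers of the form $\pm C(1+|S|)$ (which one checks are super/subsolutions of the semilinear equation thanks to the Lipschitz bound on $d_{\mathcal O}$ and the structure of $\mathcal L_0$) gives a linear growth bound uniform in $\varepsilon,R$. Second, writing the semilinear equation as a linear equation with bounded first-order coefficient $b(t,x)\triangleq d_{\mathcal O}(\nabla\widetilde P^{\varepsilon,R},\mathcal B_t)/|\nabla\widetilde P^{\varepsilon,R}|$ absorbed into the drift, the interior $W^{2,1}_p$ estimates (Krylov, or Chapter IV of LSU) give, on any compact $K\Subset(0,T)\times\mathbb R^n$, a bound on $\|\widetilde P^{\varepsilon,R}\|_{W^{2,1}_p(K)}$ depending only on the sup-norm of $\widetilde P^{\varepsilon,R}$ on a slightly larger compact. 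Passing to weak limits in $W^{2,1}_{p,loc}$ and strong limits in $C_{loc}$ (by Sobolev embedding) produces a solution $\widetilde P\in W^{2,1}_{p,loc}\cap C$ with linear growth in $S$; translating back to the variable $S$ gives the claimed regularity for $P^b$. Uniqueness follows from the comparison principle for the linearized equation (any two strong solutions differ by a function satisfying a linear parabolic inequality with bounded coefficients and zero terminal data) or, equivalently, from uniqueness of solutions to the BSDE \eqref{FBSDE} applied to each initial $(t,S)$.

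The main obstacle is obtaining the uniform linear-growth bound together with the interior $W^{2,1}_p$ estimate in a way that is compatible with both the non-smoothness of $\Psi$ and the minimum operation defining $d_{\mathcal O}$; the Lipschitz continuity of $d_{\mathcal O}(\cdot,\mathcal B_t)$ is what makes the otherwise fully nonlinear looking equation amenable to the quasilinear framework, and verifying it carefully, together with constructing the linear-growth barriers, is the technical heart of the proof (which the authors presumably relegate to the appendix).
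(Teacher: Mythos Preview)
Your direct PDE approach is correct in outline and would yield the stated result, but it differs from the paper's route. The paper does \emph{not} prove Proposition~\ref{europeanexistence} directly. Instead it first establishes the analogous existence/regularity result for the variational inequality~\eqref{VI2} (Proposition~\ref{americanexistence}) by a penalty approximation, smoothing of $\Psi$, localisation to the bounded boxes $\mathcal{N}^k_T=\{1/k\le S_i\le k\}$, barrier bounds of the form $C_1e^{C_2(T-t)}(1+|S|)$, and interior $W^{2,1}_p$/$C^{\alpha}$ estimates; then, in Theorem~\ref{connection}, it observes that by choosing a sufficiently negative smooth obstacle $\Gamma$ (so that the solution never touches it) the variational inequality~\eqref{VI2} collapses to the PDE~\eqref{Eouropeanequation}. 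Thus Proposition~\ref{europeanexistence} is obtained as a corollary of the American-option result rather than by an independent argument.

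Both approaches share the same ingredients (mollification of the terminal datum, localisation, Lipschitz structure of $d_{\mathcal O}(\cdot,\mathcal B_t)$, linear-growth barriers, interior parabolic estimates, weak/strong limits), so your scheme is perfectly viable. The advantages of the paper's detour through the variational inequality are economy---one proof covers both the European and American cases---and that the penalty layer already delivers the estimates needed later for the American option; your direct argument is conceptually cleaner for the European case alone but would have to be redone (with penalisation) for Section~4. Two small points to tighten in your sketch: the bare barriers $\pm C(1+|S|)$ are not quite super/subsolutions---you need the time factor $e^{C_2(T-t)}$ exactly as the paper uses, since $-\mathcal L_0(1+|S|)$ is not sign-definite; and continuity up to $t=T$ requires a local $C^{\alpha}$ estimate at the terminal boundary (the paper's Lemma~\ref{calpha}), not just interior Sobolev embedding on compacts of $(0,T)\times\mathbb R^n$.
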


We shall show in the Appendix that PDEs (\ref{Eouropeanequation}) is
a special case of the variational inequality (\ref{VI2}) (see
Theorem \ref{connection}). Hence, the above existence and regularity
result is only a special case of the corresponding result for the
variational inequality (\ref{VI2}) in Proposition
\ref{americanexistence}.\\


In the rest of this section, we consider a concrete example of the
priors set $\Theta$ by specifying the value set $\mathcal{O}$ of the
corresponding kernel $\xi_t(\omega)\in\mathcal{O}$:
$$
 \mathcal{O}_{1}\triangleq\{x\in\mathbb{R}^n:-\underline\kappa\,_i\leq x_i\leq \overline\kappa_i,\,i=1,\cdots,n\},\qquad
$$
where $\underline\kappa_i,\overline\kappa_i\geq 0$. The
corresponding priors set is denoted as $\Theta_1$, which is a
generalization of the $\kappa$-ignorance model in Section 3.3 of
\cite{Chen} by taking $\underline\kappa_i=\overline\kappa_i$. With
the above priors set $\Theta_1$, the support function
$\delta_{\mathcal{O}_1}\,(\overline{z},z)$ can be calculated as
$$
 \delta_{\mathcal{O}_1}(\overline{z},z)\triangleq \sum\limits_{i=1}^n\left[\,\overline\kappa_i(\overline{z}_i+z_i)^+
 +\underline\kappa\,_i(\overline{z}_i+z_i)^-\,\right]=\overline{\kappa}^T(\overline{z}+z)^++\underline{\kappa}^T(\overline{z}+z)^-.
$$

By Theorem \ref{theorem2}, if there is no trading constraint, both
the bid price and the ask price coincide with the risk-neutral price
even with model uncertainty. In the following, we specify the
admissible set $\Pi$ by restricting its values in
$\mathcal{A}_1=[0,\infty)^n$, which is equivalent to short sale
constraint. The corresponding admissible set is denoted as $\Pi_1$.

We denote the risk-neutral price under the Black-Scholes model with
the dividend rate $q(\cdot)$ and the payoff $\Psi$ as
$P^0(t,S;q,\Psi)$, and the indifference prices with the priors set
$\Theta_1$ and the admissible set $\Pi_1$ as $P^{1m}(t,S;\Psi)$ for
$m\in\{b,s\}$. We can regard the Black-Scholes framework as a
special case of our indifference pricing model by assuming
$\delta_{\mathcal{O}_0}(\overline{z},z)=0$.\\

Our main results in this section are the connections between the
indifference prices and the risk-neutral prices with different
dividend rates.

\begin{proposition} \label{europeaneq}
Suppose that Assumptions \ref{Assumption1}, \ref{Assumption2} and \ref{Assumption3} are
satisfied, the priors set is $\Theta_1$, and the admissible set is $\Pi_1$.
\begin{itemize}
\item If $\Psi(S)$ is increasing in each component $S_i$, then the bid price
is given by $$
P^{1b}(t,S;\Psi)=P^0(t,S;\sigma\overline\kappa,\Psi)$$ with the
optimal portfolio-consumption strategy $(\pi^*,c^*)=(0,\hat{c})$,
and the ask price is given by
$$
P^{1s}(t,S;\Psi)=P^0(t,S;0,\Psi)$$ with the optimal
portfolio-consumption strategy $(\pi^*,c^*)=(SD_SP^0,\hat{c})$.

\item If $\Psi(S)$ is decreasing in each component $S_i$, then the bid price is given by
$$
P^{1b}(t,S;\Psi)=P^0(t,S;0,\Psi)$$ with the optimal
portfolio-consumption strategy $(\pi^*,c^*)=(-SD_SP^0,\hat{c})$, and
the ask price is given by
$$
P^{1s}(t,S;\Psi)=P^0(t,S;\sigma\overline{\kappa},\Psi)$$ with the
optimal portfolio-consumption strategy $(\pi^*,c^*)=(0,\hat{c})$.
\end{itemize}

\end{proposition}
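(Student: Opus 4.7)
The approach is verification via uniqueness. For the bid price with increasing $\Psi$, I would take $P^0 := P^0(t,S;\sigma\overline{\kappa},\Psi)$ as the candidate; it solves the linear Black--Scholes PDE $\partial_t P^0 + \mathcal{L}_{\sigma\overline{\kappa}} P^0 + \varrho = 0$ with $P^0(T,\cdot) = \Psi$. By the uniqueness of strong solutions furnished by Proposition \ref{europeanexistence}, it suffices to check that $P^0$ also solves the semi-linear PDE (\ref{Eouropeanequation}) for $P^b$. First, I would record the monotonicity $S_i\partial_{S_i} P^0 \geq 0$ in each coordinate. This follows from the comparison principle for the linear Black--Scholes PDE applied to a smooth monotone approximation of $\Psi$, or equivalently from the Feynman--Kac representation of $P^0$ against a log-normal flow that is monotone in the initial datum.

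\textbf{Algebraic reduction.} Computing the difference of operators gives $\mathcal{L}_0 - \mathcal{L}_{\sigma\overline{\kappa}} = (\sigma\overline{\kappa})^T SD_S$, so the linear PDE rewrites as
\[
 -\partial_t P^0 - \mathcal{L}_0 P^0 \;=\; \varrho(t) - \overline{\kappa}^T\bigl(\sigma(t)^T SD_S P^0\bigr).
\]
Matching this against (\ref{Eouropeanequation}) reduces the verification to the pointwise identity
\[
 d_{\mathcal{O}_1}\bigl(\sigma^T SD_SP^0,\mathcal{B}_t\bigr) \;=\; \overline{\kappa}^T\bigl(\sigma^T SD_SP^0\bigr).
\]
Parametrizing $z\in\mathcal{B}_t$ as $z=\sigma^T\pi$ with $\pi\in\mathcal{A}_1=[0,\infty)^n$, and using the formula for $\delta_{\mathcal{O}_1}$ from the excerpt, the monotonicity $SD_SP^0 \geq 0$ forces $\overline{z}:=\sigma^T SD_SP^0$ to have the appropriate sign structure so that the positive-part term dominates and the choice $\pi^\ast = 0$ (hence $z^\ast = 0$) attains the infimum in $d_{\mathcal{O}_1}$. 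Theorem \ref{theorem2} then identifies the optimal strategy as $(\pi^\ast,c^\ast) = (0,\widehat{c})$.

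\textbf{Ask price.} For the ask price with increasing $\Psi$, the candidate is instead $P^0(t,S;0,\Psi)$, solving the standard Black--Scholes PDE $\partial_t P^0 + \mathcal{L}_0 P^0 + \varrho = 0$. The ask PDE (\ref{Eouropeanequation}) carries the term $+\,d_{\mathcal{O}_1}(-\sigma^T SD_S P^0,\mathcal{B}_t)$, which I must show vanishes. Since $SD_SP^0\geq 0$, the choice $\pi^\ast = SD_SP^0$ lies in $\mathcal{A}_1$ and gives $-\sigma^T SD_SP^0 + \sigma^T\pi^\ast = 0$; as $\delta_{\mathcal{O}_1}\geq 0$ with value zero at this point, it is the minimizer. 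The ask PDE therefore reduces to the BS PDE with zero dividend, and uniqueness gives $P^{1s}=P^0(\cdot,\cdot;0,\Psi)$ with optimal strategy $(SD_S P^0,\widehat{c})$.

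\textbf{Decreasing case and main obstacle.} When $\Psi$ is decreasing in each $S_i$, the same arguments apply with $SD_SP^0 \leq 0$ and the roles of bid and ask exchanged: for the bid, the admissible hedge $\pi^\ast = -SD_S P^0 \geq 0$ makes $d_{\mathcal{O}_1}(\sigma^T SD_SP^0,\mathcal{B}_t)$ vanish, yielding $P^{1b}=P^0(\cdot,\cdot;0,\Psi)$; for the ask, $\pi^\ast = 0$ yields $P^{1s}=P^0(\cdot,\cdot;\sigma\overline{\kappa},\Psi)$. The main obstacle is the minimization identity in the algebraic-reduction step: one must carefully use the componentwise sign information from the monotonicity of $P^0$ to unfold the positive/negative-part structure of $\delta_{\mathcal{O}_1}$ beyond the simple one-dimensional picture, and to confirm that the closed cone $\mathcal{B}_t = \sigma(t)^T\mathcal{A}_1$ does not offer any admissible $\pi\geq 0$ that strictly improves on the declared minimizer.
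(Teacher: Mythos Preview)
Your verification-via-uniqueness argument is correct and essentially dual to the paper's proof. Rather than guessing the candidate $P^0$ and checking that it solves the nonlinear PDE (\ref{Eouropeanequation}), the paper starts from the unknown solution $P^{1m}$ itself: it first establishes the monotonicity $\partial_{S_i}P^{1m}\geq 0$ directly, by applying the BSDE comparison theorem to the representation (\ref{FBSDE})--(\ref{FBSDE111}) and then invoking the Sobolev embedding from Proposition \ref{europeanexistence} (choosing $p>n+2$) to obtain continuity of $D_SP^{1m}$. With this sign information in hand, the nonlinear term $d_{\mathcal{O}_1}(\sigma^T SD_SP^{1m},\mathcal{B}_t)$ collapses to the linear dividend term, and the nonlinear PDE becomes the Black--Scholes PDE with dividend $\sigma\overline{\kappa}$ (or $0$). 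Both routes rest on the same algebraic minimisation identity for $d_{\mathcal{O}_1}$ that you isolate as the main obstacle.

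Your approach is more elementary in that the monotonicity of the explicit Black--Scholes price $P^0$ is classical (Feynman--Kac against a log-normal flow monotone in the initial datum), so you avoid the BSDE comparison theorem and the $W^{2,1}_{p,\mathrm{loc}}$ regularity of the nonlinear solution. The paper's route, by contrast, transports more directly to the American-option setting of Proposition \ref{americaneq}, where there is no explicit candidate to verify against and one is forced to deduce monotonicity of the unknown $P^{1m}$ itself.
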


\begin{proof} We only prove the case that $\Psi(S)$ is increasing in each component $S_i$, while the decreasing case is similar.

It is obvious from (\ref{stateequation}) that
$$
 S_T^i=S_t^i\exp\left\{\,\int_t^T\left(\,\,r(u)
 -{1\over2}\,\sum_{j=1}^n|\sigma_{ij}(u)|^2\right)\,du
 +\int_t^T\sum_{j=1}^n\sigma_{ij}(u)dW_u^j\,\right\}.
$$
Hence, if $\Psi(S_T)$ is increasing in each component $S_T^i$, it is
also increasing in $S_t^i$. By the BSDE comparison theorem, for
$m\in\{b,s\}$, $P^{1m}(t,S_t;\Psi)$ is increasing in each component
$S_t^i$ as well. Since $P^{1m}\in W^{2,\,1}_{p,\,loc}({\cal N}_T)$
for any $p\geq 1$, the imbedding theorem for Sobolev space implies
that $D_SP^{1m}\in C({\cal N}_T)$ if we choose $p>n+2$. Hence,
$\p_{S_i}P^{1m}\geq 0$ for each $i=1,2,\cdot\cdot\cdot,n$. Recalling
$\mathcal{B}_t=[0,\infty)^n$, we deduce that for the case of the bid
price: \bee
d_{\mathcal{O}_1}((\sigma(t))^TSD_SP^{1b},\mathcal{B}_t)&=&\min_{z\in\mathcal{B}_t}\left\{\overline\kappa^T((\sigma(t))^TSD_SP^{1b}+z)^+
 +\underline\kappa^T((\sigma(t))^TSD_SP^{1b}+z)^-\right\}\\
 &=&(\sigma(t)\overline{\kappa})^TSD_SP^{1b}
\eee with the optimizer $z^*=0$, or equivalently, $\pi^*=0$. For the
case of the ask price:
$$
d_{\mathcal{O}_1}(-(\sigma(t))^TSD_SP^{1s},\mathcal{B}_t)=\min_{z\in\mathcal{B}_t}\left\{\overline\kappa^T(-(\sigma(t))^TSD_SP^{1s}+z)^+
 +\underline\kappa^T(-(\sigma(t))^TSD_SP^{1s}+z)^-\right\}=0
$$
with the optimizer $z^*=(\sigma(t))^TSD_SP^{1s}$, or
equivalently, $\pi^*=SD_SP^{1s}$. Then the conclusions follow from
the pricing equations (\ref{Eouropeanequation}).
\end{proof}

Intuitively, if the payoff of an option is increasing with the
prices of all the underlying stocks, an investor needs to hold a
short position in each underlying stock in order to hedge a long
position in this option, and a long position in each underlying
stock in order to hedge a short position in this option. However,
since there is short selling constraint (i.e.
$\mathcal{A}_1=[0,\infty)^n$), hedging the long position of the
option is impossible, and the best that the investor can do is not
trading any underlying stocks. In turn, the investor has to
compensate for the option price an equivalent dividend rate
$\sigma\overline{\kappa}$.

The other observation is that the lower bound
$\underline{\kappa}=(\underline{\kappa}_1,\cdots,\underline{\kappa}_1)^T$
in the priors set $\Theta_1$ does not impact the indifference prices
$P^{1b}$ and $P^{1s}$, which is due to the asymmetric property of
the trading constraint set $\mathcal{A}_1=[0,\infty)^n$. Moreover,
the bid-ask spread $P^{1b}-P^{1s}$ is given in terms of the
risk-neutral price with modified dividend rates:
$|P^0(t,S;0,\Psi)-P^0(t,S;\sigma\overline{\kappa},\Psi)|$.

For general payoff $\Psi$, there are no explicit formulae for the
bid price and the ask price. However, we can still have bounds on
the indifference prices in terms of the risk-neutral price with
modified dividend rates. Since PDEs (\ref{Eouropeanequation}) are a
special case of the variational inequality (\ref{VI2}) (see Theorem
\ref{connection}), we will present the proof for the corresponding
variational inequality (\ref{VI2}) in Proposition
\ref{viinequality}, and leave the following proof for PDEs
(\ref{Eouropeanequation}) to the reader.

\begin{proposition} \label{pdeinequality}
Suppose that Assumptions \ref{Assumption1}, \ref{Assumption2} and
\ref{Assumption3} are satisfied, the priors set is $\Theta_1$, and
the admissible set is $\Pi_1$.
\begin{itemize}
\item The bid price satisfies the following inequality:
$$\max\left\{P^0(t,S;\sigma\overline{\kappa},\underline{\Psi}^+),
P^0(t,S;0,\underline{\Psi}^-)\right\}\leq P^{1b}(t,S;\Psi)\leq
P^0(t,S;q,\Psi)$$ for any $q\in[\,0,\sigma\overline{\kappa}\,]$, and the optimal
portfolio-consumption strategy is
$(\pi^*,c^*)=((SD_SP^{1b})^{-},\hat{c})$.
\item The ask price satisfies the following inequality:
$$
P^0(t,S;\sigma\overline{\kappa},\Psi)\leq P^{1s}(t,S;\Psi)\leq
\min\left\{P^0(t,S;\sigma\overline{\kappa},\overline{\Psi}\,^-),
P^0(t,S;0,\overline{\Psi}\,^+)\right\},$$ and the optimal
portfolio-consumption strategy is
$(\pi^*,c^*)=((SD_SP^{1s})^{+},\hat{c})$.
\end{itemize}
Here $\underline{\Psi}^{+}/\underline{\Psi}^{-}$ is any
increasing/decreasing function bounded above by $\Psi$, and
$\overline{\Psi}\,^{+}/\overline{\Psi}\,^{-}$ is any
increasing/decreasing function bounded below by $\Psi$.
\end{proposition}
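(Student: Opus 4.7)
The plan is to split the bid-price inequality into an upper bound from PDE comparison and two lower bounds obtained by applying Proposition \ref{europeaneq} to monotone minorants of $\Psi$, with the ask-price statement following by a completely symmetric argument. The optimal portfolio strategy is then read off directly from the representation $\pi^{*}=(\sigma^{T})^{-1}\mathrm{argmin}(Z^{b},\mathcal{B})$ established in Theorem \ref{theorem2}.

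For the upper bound, the key pointwise inequality is that for any $q\in[0,\sigma\overline{\kappa}]$,
\[
 d_{\mathcal{O}_{1}}\!\bigl((\sigma(t))^{T}SD_{S}P,\,\mathcal{B}_{t}\bigr)\;\geq\;q^{T}SD_{S}P,
\]
which I would prove directly by setting $\eta=\sigma^{-1}q\in[0,\overline{\kappa}]\subset\mathcal{O}_{1}$ and noting that for every $z'\in\mathcal{B}_{t}=(\sigma(t))^{T}[0,\infty)^{n}$ one has $\eta^{T}z'=(\sigma\eta)^{T}\pi=q^{T}\pi\geq0$, so the support-function estimate $\delta_{\mathcal{O}_{1}}(\bar z+z')\geq\eta^{T}(\bar z+z')\geq\eta^{T}\bar z=q^{T}SD_{S}P$ survives the minimization over $z'\in\mathcal{B}_{t}$. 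Substituting into (\ref{Eouropeanequation}) gives $-\p_{t}P^{1b}-\mathcal{L}_{q}P^{1b}\leq\varrho(t)=-\p_{t}P^{0}-\mathcal{L}_{q}P^{0}$ with matching terminal value $\Psi$ at $t=T$. Since $\mathcal{L}_{q}$ has continuous coefficients (Assumption \ref{Assumption1}) and both $P^{1b},P^{0}\in W^{2,1}_{p,\mathrm{loc}}(\mathcal{N}_{T})$ by Proposition \ref{europeanexistence}, the classical parabolic maximum principle yields $P^{1b}\leq P^{0}(\cdot;q,\Psi)$ on $\overline{\mathcal{N}}_{T}$.

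For the lower bounds I would exploit monotonicity of the bid-price BSDE (\ref{Pricing5}) in its terminal datum: since the driver $-r(u)y-d_{\mathcal{O}_{1}}(z,\mathcal{B}_{u})$ is Lipschitz in $(y,z)$, the comparison theorem for BSDEs gives $\Psi_{1}\leq\Psi_{2}\Rightarrow P^{1b}(\cdot;\Psi_{1})\leq P^{1b}(\cdot;\Psi_{2})$. Taking $\Psi_{1}=\underline{\Psi}^{+}$ and applying the first (increasing) item of Proposition \ref{europeaneq} yields $P^{0}(\cdot;\sigma\overline{\kappa},\underline{\Psi}^{+})=P^{1b}(\cdot;\underline{\Psi}^{+})\leq P^{1b}(\cdot;\Psi)$; taking $\Psi_{1}=\underline{\Psi}^{-}$ and using the second (decreasing) item of Proposition \ref{europeaneq} yields $P^{0}(\cdot;0,\underline{\Psi}^{-})\leq P^{1b}(\cdot;\Psi)$. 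Combining the two via a maximum closes the inequality. To identify the optimal $\pi^{*}$, I would perform a componentwise minimization of $\delta_{\mathcal{O}_{1}}(Z^{b}+z)$ over $z\in\mathcal{B}_{t}$ with $Z^{b}=(\sigma(t))^{T}SD_{S}P^{1b}$; after substituting $z=(\sigma(t))^{T}\pi$ with $\pi\geq0$, each coordinate of the minimizer is seen to be $(SD_{S}P^{1b})_{i}^{-}$, so $\pi^{*}=(SD_{S}P^{1b})^{-}$ and $c^{*}=\widehat{c}$ as always.

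I expect the main obstacle to be mild but twofold: first, justifying the inequality for $d_{\mathcal{O}_{1}}$ requires a careful bookkeeping of the geometry of $\mathcal{B}_{t}$ relative to $\mathcal{O}_{1}$, in particular that $\sigma\eta\geq0$ is exactly what makes $\eta$ a feasible dual vector against the short-sale cone; second, the parabolic comparison must be invoked in the Sobolev class $W^{2,1}_{p,\mathrm{loc}}$ rather than for classical solutions, but this is standard once the differential inequality is read pointwise a.e. The ask-price bounds are obtained by a verbatim repetition of the argument, substituting $Z^{b}\to-Z^{s}$, interchanging $\underline{\Psi}^{\pm}$ with $\overline{\Psi}^{\mp}$, and reversing the direction of each BSDE-comparison inequality.
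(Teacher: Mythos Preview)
Your overall architecture coincides with the paper's: the paper (which actually writes out the argument only for the American analogue, Proposition~\ref{viinequality}, and leaves the European case to the reader) also obtains the upper bound by rewriting the bid-price equation with the operator $\mathcal{L}_q$ and comparing with the linear Black--Scholes PDE, and obtains the lower bound by monotonicity of the price in the terminal datum combined with Proposition~\ref{europeaneq}. Your use of the BSDE comparison theorem for the lower bound is a harmless variant of the paper's PDE-based comparison (Lemma~\ref{comparison}).

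There is, however, a genuine gap in your upper-bound step. You claim that for $q\in[0,\sigma\overline{\kappa}]$ (componentwise) the vector $\eta=\sigma^{-1}q$ lies in $[0,\overline{\kappa}]\subset\mathcal{O}_1$. This fails for a general volatility matrix: $\sigma^{-1}$ need not have nonnegative entries, so $\sigma^{-1}q$ can have negative components or components exceeding $\overline{\kappa}_i$, and then $\eta\notin\mathcal{O}_1$ so your dual bound $\delta_{\mathcal{O}_1}(\bar z+z')\geq\eta^T(\bar z+z')$ is unavailable. The paper avoids this by first computing $d_{\mathcal{O}_1}$ explicitly,
\[
 d_{\mathcal{O}_1}\bigl((\sigma(t))^T SD_SP^{1b},\mathcal{B}_t\bigr)=(\sigma(t)\overline{\kappa})^T(SD_SP^{1b})^+,
\]
with minimizer $\pi^*=(SD_SP^{1b})^-$, and then checks directly that $(\sigma\overline{\kappa})^T(SD_SP^{1b})^+\geq q^T SD_SP^{1b}$ via
\[
 [(\sigma\overline{\kappa})-q]^T(SD_SP^{1b})^+\geq 0\quad\text{and}\quad q^T(SD_SP^{1b})^-\geq 0,
\]
both of which use only the componentwise conditions $0\leq q\leq\sigma\overline{\kappa}$. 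This explicit computation simultaneously yields the optimal strategy, so the paper's route is both correct and slightly more economical; you should replace your dual-vector argument by it.
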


%
%

To finish this section, we investigate how the indifference prices
converge to the corresponding risk-neutral price when the priors set
$\Theta_1$ shrinks to the probability set which only has negative
densities, i.e. the positive part of the model uncertainty
disappears.

\begin{proposition}\label{Europeanconvergence}
Suppose that Assumptions \ref{Assumption1}, \ref{Assumption2} and
\ref{Assumption3} are satisfied, the priors set is $\Theta_1$, and
the admissible set is $\Pi_1$. Then the bid price $P^{1b}$ and the
ask price $P^{1s}$ converge to the risk-neutral price $P^0$ when the
upper bound $\overline{\kappa}$ in the priors set $\Theta_1$
converges to zero. Concretely speaking, \bee
|P^{1b}-P^0|+|P^{1s}-P^0|\leq C\overline{\kappa}^*(1+|S|),\qquad
\Big\|\;|P^{1b}-P^0|+|P^{1s}-P^0|\;\Big\|
 _{W^{2,\,1}_p({\cal N}^*_T)}\leq C_{\mathcal{N}^*_T}\overline{\kappa}^*,
\eee
where $\overline{\kappa}^*=\max\{\overline{\kappa}_1,\cdots,\overline{\kappa}_n\}$,
and $\mathcal{N}^*_T$ is any compact subset of ${\cal N}_T$, and $C$ is a constant independent of $\mathcal{N}^*_T$, but $C_{\mathcal{N}^*_T}$  is a constant depending on $\mathcal{N}^*_T$.
\end{proposition}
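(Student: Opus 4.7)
The plan is to work in the PDE formulation of Proposition \ref{europeanexistence} and compare $P^{1m}$ for $m\in\{b,s\}$ directly to the Black--Scholes price $P^0$, which solves $-\partial_t P^0-\mathcal{L}_0 P^0=\varrho(t)$ with the same terminal datum $\Psi$. Subtracting the PDEs shows that $U^b:=P^0-P^{1b}$ and $U^s:=P^{1s}-P^0$ solve parabolic equations with the same operator $-\partial_t-\mathcal{L}_0$, zero terminal data at $t=T$, and non-negative source terms $d_{\mathcal{O}_1}(\sigma(t)^T S D_S P^{1b},\mathcal{B}_t)$ and $d_{\mathcal{O}_1}(-\sigma(t)^T S D_S P^{1s},\mathcal{B}_t)$ respectively. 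Non-negativity re-derives the one-sided inequalities $P^{1b}\leq P^0\leq P^{1s}$ and reduces the proposition to controlling each source by $C\overline{\kappa}^*(1+|S|)$ pointwise and by $C_{\mathcal{N}_T^*}\overline{\kappa}^*$ in $L^p(\mathcal{N}_T^*)$.

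The first ingredient is a uniform gradient bound $|S D_S P^{1m}(t,S)|\leq C(1+|S|)$. From Proposition \ref{europeanexistence} we already have $|P^{1m}|\leq C(1+|S|)$ together with $P^{1m}\in W^{2,1}_{p,loc}(\mathcal{N}_T)$. Performing the log-change of variables $y_i=\log S_i$ turns the PDE into a uniformly parabolic equation on $\mathbb{R}^n$ whose nonlinearity is Lipschitz in the gradient with constant bounded in terms of $|\mathcal{O}_1|$, hence uniform as $\overline{\kappa}^*\to 0$. Standard interior Lipschitz estimates, combined with the Lipschitz continuity of $\Psi$, deliver the claimed linear growth of $S D_S P^{1m}$ with a constant independent of $\overline{\kappa}^*$.

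The key nonlinear estimate is the pointwise bound
\[
d_{\mathcal{O}_1}(\bar z,\mathcal{B}_t)\leq C\,\overline{\kappa}^*\,|\bar z|\qquad\text{for every }\bar z\in\mathbb{R}^n,
\]
with $C$ depending only on $\sigma$. Starting from $d_{\mathcal{O}_1}(\bar z,\mathcal{B}_t)=\min_{z\in\mathcal{B}_t}\bigl[\overline{\kappa}^T(\bar z+z)^++\underline{\kappa}^T(\bar z+z)^-\bigr]$, I would select a test vector $z^*\in\mathcal{B}_t=\sigma(t)^T[0,\infty)^n$ satisfying $\bar z+z^*\geq 0$ componentwise with $|z^*|\leq C|\bar z|$; such a $z^*$ exists because the short-sale cone $\mathcal{A}_1=[0,\infty)^n$ combined with the non-degeneracy of $\sigma$ makes $\mathcal{B}_t$ absorb the positive orthant up to a multiplicative constant depending only on $\sigma$. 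For this $z^*$ the $\underline{\kappa}$ term vanishes and the remainder is bounded by $\overline{\kappa}^T(\bar z+z^*)\leq C\overline{\kappa}^*|\bar z|$. I expect this step to be the main obstacle: making the absorption property quantitative for a general positive-definite $\sigma$ is delicate, and may require either a refined analysis that leverages both terms in the support function simultaneously, or a mild structural hypothesis on $\sigma$.

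Combining the gradient estimate with the pointwise bound on $d_{\mathcal{O}_1}$ shows that both source terms are dominated by $C\overline{\kappa}^*(1+|S|)$. The Feynman--Kac formula for $U^b$ and $U^s$, together with the moment bound $\mathbb{E}_{t,S}[|S_u|]\leq C(1+|S|)$ under the risk-neutral measure, yields the pointwise estimate $|P^{1b}-P^0|+|P^{1s}-P^0|\leq C\overline{\kappa}^*(1+|S|)$. Finally, for the Sobolev bound on a compact $\mathcal{N}_T^*\subset\mathcal{N}_T$, the interior $L^p$ parabolic (Calder\'on--Zygmund) estimate applied on a slightly enlarged set $\tilde{\mathcal{N}}_T^*$ gives $\|U^m\|_{W^{2,1}_p(\mathcal{N}_T^*)}\leq C_{\mathcal{N}_T^*}\bigl(\|U^m\|_{L^p(\tilde{\mathcal{N}}_T^*)}+\|\mathrm{source}\|_{L^p(\tilde{\mathcal{N}}_T^*)}\bigr)\leq C_{\mathcal{N}_T^*}\overline{\kappa}^*$, completing the proof.
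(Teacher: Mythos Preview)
Your overall architecture is correct and close to the paper's, but there are two noteworthy differences.

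First, the step you flag as ``the main obstacle'' --- the estimate $d_{\mathcal{O}_1}(\bar z,\mathcal{B}_t)\le C\overline\kappa^*|\bar z|$ --- is not an obstacle at all in this setting. The paper has already computed the explicit formula in the proofs of Propositions~\ref{europeaneq} and~\ref{viinequality}: with $\mathcal{A}_1=[0,\infty)^n$ one gets $d_{\mathcal{O}_1}\bigl((\sigma(t))^T SD_S P,\mathcal{B}_t\bigr)=(\sigma(t)\overline\kappa)^T(SD_S P)^+$, and the proof of Proposition~\ref{Europeanconvergence} simply starts from this formula. The bound $\le C\overline\kappa^*|SD_S P|$ is then immediate, with no absorption argument needed.

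Second, and more substantively, you and the paper differ on \emph{which} gradient you control. You propose a uniform-in-$\overline\kappa$ global Lipschitz bound $|SD_S P^{1m}|\le C(1+|S|)$ for the \emph{nonlinear} problems, appealing to ``standard interior Lipschitz estimates'' after log-change of variables. This can be made to work, but it is not quite off-the-shelf: interior gradient estimates degenerate at the terminal time, and extending them to a global bound uniform in $\overline\kappa$ requires an additional argument using the Lipschitz terminal datum. The paper sidesteps this entirely: it only bounds $|\partial_{S_l}P^0|\le K$ for the \emph{linear} risk-neutral problem (by differentiating the PDE for $P^0$ and applying the comparison principle), and then verifies directly that $P^0\pm\overline\kappa^* W$, with $W=C_1 e^{C_2(T-t)}(1+|S|)$, are sub/supersolutions of the nonlinear equation for $P^{1b}$. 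Comparison then gives the pointwise bound without ever estimating $D_S P^{1b}$ globally. This barrier approach is strictly simpler than yours for the pointwise estimate.

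For the Sobolev bound your argument and the paper's coincide: both apply the interior $W^{2,1}_p$ estimate first to $P^{1b}$ itself (to control $\|SD_SP^{1b}\|_{L^p}$ on compacts uniformly in $\overline\kappa$) and then to the difference $\Delta P$, whose source $(\sigma\overline\kappa)^T(SD_SP^{1b})^+$ is now $O(\overline\kappa^*)$ in $L^p_{\mathrm{loc}}$.
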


We leave its proof in the Appendix.

\section{Application to American Option}

In this section, we extend our model to allow for an early exercise
of the contingent claim. Assume that the contingent claim has the
payoff:
\be\label{American1}
 \xi=\int_t^{\tau\wedge T}\varrho(u)du+\Gamma(\tau,S_\tau)\mathbf{1}_{\{\tau<T\}}+\Psi(S_T)I_{\{\tau=T\}},
\ee where $\tau\in\mathcal{U}[t,T]$ is any $\mathbb{F}$-stopping
time valued in $[t,T],\,\varrho$ is the earning rate, and $\Gamma$
is the early payoff if the option is exercised before the maturity
$T$, and $\Psi$ is the final payoff at the maturity $T$. The earning
rate and the final payoff satisfy Assumption \ref{Assumption3}, and
the early payoff $\Gamma$ satisfies the following assumption:
\begin{assumption}\label{Assumption4}
The early payoff $\Gamma$ is uniformly Lipschitz continuous:
$$
 |\Gamma(t,S)-\Gamma(\overline{t},\overline{S})|\leq
 K(|\,t-\overline{t}\,|)+|\,S-\overline{S}\,|)\;\;\mbox{for}\;t,\,\overline{t}\in[\,0,T\,]\ \text{and}\
 S,\,\overline{S}\in\mathbb{R}^n_+,
$$
and is bounded above by $\Psi$.
\end{assumption}

Since the buyer of the American option has the right to exercise the
option before the maturity $T$, the indifference price in Definition
\ref{def} needs to be modified accordingly. First, note that the
investor's maximum utility satisfies the following time consistency
property: For any $\mathbb{F}$-stopping time
$\tau\in\mathcal{U}[t,T]$,
$$
\mathop{{\rm
ess.sup}}\limits_{(\pi,c)\in\Pi[t,T]}U_t(X^{X_t;\pi,c}_T)=
 \mathop{{\rm ess.sup}}\limits_{(\pi,c)\in\Pi[t,\tau]}U_t\left(\,\mathop{{\rm
 ess.sup}}\limits_{(\pi,c)\in\Pi[\tau,T]}U_{\tau}(X_{T}^{X_t;\pi,c})\,\right)=
\mathop{{\rm
 ess.sup}}\limits_{(\pi,c)\in\Pi[t,\tau]}U_t(\mathcal{R}(\tau)+X_{\tau}^{X_t;\pi,c}),
$$
where we used $(\ref{LHS})$ in the last inequality. In other words,
in order to have the time consistency property, the intermediate
wealth at any $\mathbb{F}$-stopping time $\tau$ consists of not only
the wealth $X_{\tau}^{X_t;\pi,c}$, but also the value of the
remaining optimal consumption $\mathcal{R}(\tau)$ from $\tau$ to the
maturity $T$.

We modify Definition \ref{def}, and give the following definition of
the indifference bid price of the American option.

\begin{definition}\label{definitionofAmerican}

The bid price $P^{b}(t,S_t;\Gamma,\Psi)$ of the American option with
the payoff $\int_t^{\tau\wedge
T}\varrho(u)du+\mathbf{1}_{\{\tau<T\}}\Gamma(\tau,S_{\tau})+
\mathbf{1}_{\{\tau=T\}}\Psi(S_{T})$, where $\tau\in\mathcal{U}[t,T]$
is the exercise time, is defined implicitly by the requirement that
\be\nonumber \mathop{{\rm
 ess.sup}}\limits_{(\pi,c)\in\Pi[t,T]}U_t(X^{X_t;\pi,c}_T)
 &=&\ \mathop{{\rm ess.sup}}\limits_{\tau\in\mathcal{U}[t,T]}\mathop{{\rm ess.sup}}\limits_{(\pi,c)\in\Pi[t,\tau]}
 U_t\Bigg(\int_t^{\tau\wedge T}\varrho(u)du+\Big(\,\mathcal{R}(\tau)+X_{\tau}^{X_t-P^b(t,S_t;\Gamma,\Psi);\pi,c}\\\label{def1}
 &&+\,\Gamma(\tau,S_{\tau})\,\Big)\,\mathbf{1}_{\{\tau<T\}}
 +\Big(\,X_{T}^{X_t-P^b(t,S_t;\Gamma,\Psi);\pi,c}+\Psi(S_{T})\,\Big)\,\mathbf{1}_{\{\tau=T\}}\Bigg).
\ee
\end{definition}


Similar to the proof of Theorem 2.4, it is easy to check that the
optimization problem on RHS of (\ref{def1}) is translation invariant
of its initial value:
\begin{align*}
&\ \mathop{{\rm
 ess.sup}}\limits_{\tau\in\mathcal{U}[t,T]}
 \mathop{{\rm
 ess.sup}}\limits_{(\pi,c)\in\Pi[t,\tau]}
 U_t\Bigg(\,\int_t^{\tau\wedge T}\varrho(u)du
 +\Big(\,\mathcal{R}(\tau)+X_{\tau}^{0;\pi,c}+\Gamma(\tau,S_{\tau})\,\Big)\,\mathbf{1}_{\{\tau<T\}}\nonumber\\
 &\ +\Big(\,X_{T}^{0;\pi,c}+\Psi(S_{T})\,\Big)\,\mathbf{1}_{\{\tau=T\}}\,\Bigg)+X_t-P^b(t,S_t;\Gamma,\Psi).
\end{align*}

On the other hand, LHS of (\ref{def1}) is $\mathcal{R}(t)+X_t$ by
(\ref{LHS}). Therefore, the bid price of the American option is
\begin{align*}
P^b(t,S_t;\Gamma,\Psi)=&\ \mathop{{\rm ess.sup}}\limits_{\tau\in\mathcal{U}[t,T]}
 \mathop{{\rm ess.sup}}\limits_{(\pi,c)\in\Pi[t,\tau]}
 U_t\Bigg(\,\Big(\,\mathcal{R}(\tau)+X_{\tau}^{0;\pi,c}+\Gamma(\tau,S_{\tau})\,\Big)\,\mathbf{1}_{\{\tau<T\}}\nonumber\\
 &\ +\Big(\,X_{T}^{0;\pi,c}+\Psi(S_{T})\,\Big)\,\mathbf{1}_{\{\tau=T\}}+\int_t^{\tau\wedge T}\varrho(u)du\,\Bigg)-\mathcal{R}(t).
\end{align*}

\begin{theorem}\label{theorem3}
Suppose that Assumptions \ref{Assumption1}, \ref{Assumption2},
\ref{Assumption3} and \ref{Assumption4} are satisfied. Then the bid
price $P^b(t,S_t;\Gamma,\Psi)=U^b_t$, where $U^b_t\geq\Gamma(t,S_t)$
is the unique solution to reflected BSDE:
\begin{align}\label{Pricing7}
 U^{b}_t=\Psi(S_T)
 +\int_t^T \Big[\,\varrho(u)-r(u)U^{b}_u-d_{\mathcal{O}}\,(Z^b_u,{\cal B}_u)\,\Big]\,du+\int_t^TdK_u
 -\int_t^T (Z^b_u)^TdW_u,
 \end{align}
for $(U^b,Z^b)\in L^2_{\mathbb{F}}(0,T;\mathbb{R})\times
L^2_{\mathbb{F}}(0,T;\mathbb{R}^n)$, and $K$ being continuous,
increasing, starting from $K_0=0$, and satisfying the following
Skorohod condition:
$$\int_0^T[\,U^b_u-\Gamma(u,S_u)\,]dK_u=0.$$
The optimal portfolio-consumption strategy for the bid price is
$$(\pi^*,c^*)=((\sigma^{T})^{-1}{\rm argmin}(Z^b,{\cal B}),
\widehat{c}\,),$$ and the optimal exercise time is
$$ \tau^*=\inf\{s\geq t:U^b_s=\Gamma(s,S_s)\}\wedge T.$$
\end{theorem}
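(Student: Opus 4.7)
The plan is to parallel the proof of Theorem \ref{theorem2}, extending it to the optimal stopping setting via the theory of reflected BSDEs. Starting from the representation derived immediately before the theorem statement,
$$P^b(t,S_t;\Gamma,\Psi)=\mathop{{\rm ess.sup}}\limits_{\tau\in\mathcal{U}[t,T]}\mathop{{\rm ess.sup}}\limits_{(\pi,c)\in\Pi[t,\tau]}U_t(\cdots)-\mathcal{R}(t),$$
I would first fix $\tau\in\mathcal{U}[t,T]$ and reduce the inner $(\pi,c)$-optimization by the same indirect-utility trick used for Theorem \ref{theorem2}: set $\overline{U}_s=U_s(\cdots)-X_s^{0;\pi,c}$ and $\overline{Z}_s^j=Z_s^j-\sum_i\sigma_{ij}(s)\pi_s^i$. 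Then $(\overline{U},\overline{Z})$ satisfies a BSDE on $[t,\tau]$ with driver containing the pieces $v(u,c_u)-c_u$ and $-\delta_{\mathcal{O}}(\overline{Z}_u,(\sigma(u))^T\pi_u)$, with forcing $\varrho(u)$ and terminal value $\mathcal{R}(\tau)+\Gamma(\tau,S_\tau)$ on $\{\tau<T\}$ and $\Psi(S_T)$ on $\{\tau=T\}$.

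The BSDE comparison principle, applied pointwise to the Lipschitz driver, then identifies the optimizers $c^*=\widehat{c}$ (via \eqref{optimalconsumption}) and $\pi^*=(\sigma^T)^{-1}\,\mathrm{argmin}(\overline{Z},\mathcal{B})$, leaving the optimal driver $-r(u)\overline{U}_u+v^*(u)-d_{\mathcal{O}}(\overline{Z}_u,\mathcal{B}_u)+\varrho(u)$. Critically, these optimizers are defined path-by-path from $\overline{Z}$ alone, independently of $\tau$. Subtracting the deterministic process $\mathcal{R}$ (which solves \eqref{wealthbsde0} and absorbs $v^*(u)$, and also cancels the shift $\mathcal R(\tau)$ appearing in the obstacle position) yields precisely the driver $\varrho(u)-r(u)U_u-d_{\mathcal{O}}(Z_u,\mathcal{B}_u)$ and obstacle $\Gamma(s,S_s)$ appearing in \eqref{Pricing7}, with terminal condition $\Psi(S_T)$.

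The remaining essential supremum over $\tau\in\mathcal{U}[t,T]$ is then a standard optimal stopping problem whose value is the first component of a reflected BSDE, by the classical theory of El Karoui, Kapoudjian, Pardoux, Peng and Quenez. Assumption \ref{Assumption4} (continuity of $\Gamma$ and $\Gamma\leq\Psi$) provides a continuous lower obstacle compatible with the terminal value, so \eqref{Pricing7} admits a unique solution $(U^b,Z^b,K)$ with $K$ continuous and nondecreasing, $K_0=0$, and the Skorohod condition $\int_0^T[U^b_u-\Gamma(u,S_u)]dK_u=0$. The optimal stopping time is the standard hitting time $\tau^*=\inf\{s\geq t:U^b_s=\Gamma(s,S_s)\}\wedge T$, at which the obstacle is first reached.

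The main obstacle is the rigorous interchange of the two essential suprema, which is delicate because the admissible set $\Pi[t,\tau]$ depends on $\tau$ and the terminal random variable varies with $\tau$. The key observation that resolves this is precisely that the pointwise optimizers $(\pi^*,c^*)$ depend only on $(\overline{Z},u,\omega)$ and not on $\tau$: one may build a single pair of admissible processes on $[t,T]$ whose restriction to $[t,\tau]$ is inner-optimal for every $\tau$ simultaneously, thereby reducing the joint $(\tau,\pi,c)$-optimization to a pure optimal stopping problem whose value is $U^b_t$. Combining these steps gives $P^b(t,S_t;\Gamma,\Psi)=U^b_t$ together with the asserted optimal strategy and stopping time.
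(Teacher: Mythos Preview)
Your proposal is correct and follows essentially the same two-step route as the paper: first fix $\tau$ and use the indirect-utility substitution $\overline{U}=U-X^{0;\pi,c}$ together with BSDE comparison to identify $(\pi^*,c^*)$ and the optimized driver, then invoke the reflected-BSDE representation of El Karoui et al.\ for the remaining supremum over $\tau$, and finally subtract $\mathcal{R}$ (the paper postpones this subtraction to the very last line, working with obstacle $\mathcal{R}(\cdot)+\Gamma(\cdot,S_\cdot)$ until then, but this is only a cosmetic reordering). One small remark: your ``main obstacle'' paragraph slightly overstates the difficulty---the two suprema are nested rather than interchanged, so no simultaneous inner-optimality for all $\tau$ is required; it suffices that the inner optimum for each fixed $\tau$ is $\overline{U}^*_t(\tau)$ and that the reflected BSDE delivers $\sup_\tau\overline{U}^*_t(\tau)$ together with the specific $\tau^*$ at which the pair $(\pi^*,c^*)$ built from $Z^b$ attains the joint supremum.
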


\begin{proof}
Similar to the proof of Theorem of \ref{theorem2}, define an
indirect utility $\overline{U}$ by subtracting the wealth
$X^{0;\pi,c}$ from the original utility $U$ as \bee
\overline{U}_{s}&\!\!\!=\!\!\!&\ U_{s}\left(\int_t^{\tau\wedge
T}\varrho(u)du+\Big(\,\mathcal{R}(\tau)
+X_{\tau}^{0;\pi,c}+\Gamma(\tau,S_{\tau})\,\Big)\,\mathbf{1}_{\{\tau<T\}}+\Big(\,X_{T}^{0;\pi,c}+\Psi(S_{T})\,\Big)\,\mathbf{1}_{\{\tau=T\}}\right)-X_{s}^{0;\pi,c};\qquad\\
\overline{Z}_s^j&\!\!\!=\!\!\!&\
Z_s^j-\sum_{i=1}^n\sigma_{ij}(s)\pi^i_{s} \eee for $s\in[t,T]$. By
(\ref{SDE1}) and (\ref{utility}), it is easy to verify that
$(\overline{U},\overline{Z})$ is the solution of the following BSDE:
\begin{align*}
 \overline{U}_t=&\ (\,\mathcal{R}(\tau)+\Gamma(\tau,S_{\tau})\,)\,\mathbf{1}_{\{\tau<T\}}+
 \Psi(S_T)\,\mathbf{1}_{\{\tau=T\}}
 -\int_t^{\tau\wedge T} (\overline{Z}_u)^TdW_u\\[2mm]
 &\ +\int_t^{\tau\wedge T} \left\{\,\varrho(u)-r(u)\overline{U}_u+[\,v(u,c_u)-c_u\,]
 -\delta_{\mathcal{O}}(\overline{Z}_u,(\sigma(u))^T\pi_u)\,\right\}\,du.
\end{align*}

By the BSDE comparison principle, $\overline{U}_t\leq
\overline{U}^*_t$ for any $(\pi,c)\in\Pi[t,\tau]$, where
$\overline{U}^*$ is the solution to BSDE:
\begin{align*}
 \overline{U}_t^*=&\ (\mathcal{R}(\tau)+\Gamma(\tau,S_{\tau}))\mathbf{1}_{\{\tau<T\}}+
 \Psi(S_T)\mathbf{1}_{\{\tau=T\}}\\
 &\ +\int_t^{\tau\wedge T} \left\{\,\varrho(u)-r(u)\overline{U}^*_u+v^*(u)
 -d_{\mathcal{O}}(\overline{Z}_u^*,\mathcal{B}_u)\,\right\}\,du
 -\int_t^{\tau\wedge T} (\overline{Z}_u^*)^TdW_u.
\end{align*}
and
$(\overline{U}_t,\overline{Z}_t)=(\overline{U}_t^{*},\overline{Z}_t^{*})$
for $(\pi,c)=((\sigma^{T})^{-1}{\rm argmin}(\overline{Z},{\cal
B}),\hat{c})$.

Furthermore, by Proposition 2.3 of El Karoui et al \cite{ElKaroui2},
$\overline{U}_t^*\leq \overline{U}_t^{**}$ for any stopping time
$\tau\in\mathcal{U}[t,T]$, where
$\overline{U}^{**}_t\geq\mathcal{R}(t)+\Gamma(t,S_t)$ is the
solution of the following reflected BSDE:
$$
 \overline{U}^{**}_t=\Psi(S_T)
 +\int_t^T \Big[\,\varrho(u)-r(u)\overline{U}^{**}_u+v^*(u)-d_{\mathcal{O}}\,(\overline{Z}^{**}_u,{\cal B}_u)\,\Big]\,du+\int_t^TdK_u
 -\int_t^T (\overline{Z}^{**}_u)^TdW_u,
$$
with the Skorohod condition:
$$\int_0^T[\,\overline{U}^{**}_u-\mathcal{R}(u)-\Gamma(u,S_u)\,]\,dK_u=0,$$
and
$(\overline{U}_t^{*},\overline{Z}_t^{*})=(\overline{U}_t^{**},\overline{Z}_t^{**})$
for $\tau=\tau^*$. Therefore,
\begin{align*}
&\mathop{{\rm
 ess.sup}}\limits_{\tau\in\mathcal{U}[t,T]}
 \mathop{{\rm
 ess.sup}}\limits_{(\pi,c)\in\Pi[t,\tau]}
 U_t\left(\int_t^{\tau\wedge T}\varrho(u)du+(\mathcal{R}(\tau)+X_{\tau}^{0;\pi,c}+\Gamma(\tau,S_{\tau}))\mathbf{1}_{\{\tau<T\}}\right.\\
 &\ \ \ \ \ \ \ \ \ \ \ \ \ \ \ \ \ \ \ \ \ \ \ \ \ \ \ \ \left.+(X_{T}^{0;\pi,c}+\Psi(S_{T}))\mathbf{1}_{\{\tau=T\}}\right)\\
=&\mathop{{\rm
 ess.sup}}\limits_{\tau\in\mathcal{U}[t,T]}
 \mathop{{\rm
 ess.sup}}\limits_{(\pi,c)\in\Pi[t,\tau]}
 \overline{U}_t=\mathop{{\rm
 ess.sup}}\limits_{\tau\in\mathcal{U}[t,T]}\overline{U}_t^{*}=\overline{U}_t^{**}
\end{align*}
with the optimal portfolio-consumption strategy
$(\pi^*,c^*)=((\sigma^{T})^{-1}{\rm argmin}(\overline{Z}^{**},{\cal
B}),\hat{c})$ and the optimal exercise time $\tau^*$.

Finally, it is easy to verify that
$(\overline{U}^{**}_t-\mathcal{R}(t),\overline{Z}^{**}_t)=(U^b_t,Z_t^b)$,
which is the unique solution to the reflected BSDE (\ref{Pricing7}).
\end{proof}

It is important to recall that it is the buyer of the claim who
decides when the contract is exercised. The writer of the derivative
does not have this opportunity and, therefore, she will have to
maximize his utility contingently on the buyer's optimal actions. In
a sense, the valuation problem of the writer reduces to a barrier
type with expiration given by the buyer's optimally chosen
exercise time $\tau^*$, and the payoff is $\mathbf{1}_{\{\tau^*<T\}}\Gamma(\tau,S_{\tau})+
\mathbf{1}_{\{\tau^*=T\}}\Psi(S_{T})$. This asymmetry is not observed in complete
markets where there is a unique price. However, in incomplete
markets such asymmetries naturally emerge and give rise to realistic
price spreads.

\begin{proposition} The ask
price $P^s(t,S_t;\Gamma,\Psi)$ of the American option with the
payoff $\int_t^{\tau^*\wedge
T}\varrho(u)du+\mathbf{1}_{\{\tau^*<T\}}\Gamma(\tau^*,S_{\tau^*})+
\mathbf{1}_{\{\tau^*=T\}}\Psi(S_{T})$ is defined implicitly by the
requirement that
\begin{align}\label{def2}
\mathop{{\rm
 ess.sup}}\limits_{(\pi,c)\in\Pi[t,T]}U_t(X^{X_t;\pi,c}_T)=&\
 \mathop{{\rm
 ess.sup}}\limits_{(\pi,c)\in\Pi[t,\tau^*]}
 U_t\left(\int_t^{\tau^*\wedge
 T}\varrho(u)du+\Big(\mathcal{R}(\tau^*)+X_{\tau^*}^{X_t-P^b(t,S_t;\Gamma,\Psi);\pi,c}\right.\nonumber\\[2mm]
 &\ \left.+\Gamma(\tau^*,S_{\tau^*})\Big)\mathbf{1}_{\{\tau^*<T\}}+\Big(X_{T}^{X_t-P^b(t,S_t;\Gamma,\Psi);\pi,c}+\Psi(S_{T})\Big)\mathbf{1}_{\{\tau^*=T\}}\right).
\end{align}
Suppose that Assumptions \ref{Assumption1}, \ref{Assumption2},
\ref{Assumption3} and \ref{Assumption4} are satisfied. Then the ask
price $P^s(t,S_t;\Gamma,\Psi)=U^s_t$, where $U^s$ is the unique
solution to BSDE:
\begin{align}\label{Pricing8}
U_t^s=&\ \Gamma(\tau^*,S_{\tau^*})\mathbf{1}_{\{\tau^*<T\}}+
 \Psi(S_T)\mathbf{1}_{\{\tau^*=T\}}\nonumber\\
 &\ +\int_t^{\tau^*\wedge T} \Big[\,\varrho(u)-r(u)U^s_u
 +d_{\mathcal{O}}(-Z_u^s,\mathcal{B}_u)\,\Big]\,du
 -\int_t^{\tau^*\wedge T} (Z_u^s)^TdW_u
\end{align}
for $(U^s,Z^s)\in L^2_{\mathbb{F}}(0,T;\mathbb{R})\times
L^2_{\mathbb{F}}(0,T;\mathbb{R}^n)$. The optimal
portfolio-consumption strategy for the ask price is
$$(\pi^*,c^*)=((\sigma^{T})^{-1}{\rm argmin}(-Z^s,{\cal B}\,),
\widehat{c}).$$
\end{proposition}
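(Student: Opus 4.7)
The strategy closely mirrors the proof of Theorem \ref{theorem3} for the American bid price, with the key simplification that from the writer's viewpoint the exercise time $\tau^*$ is exogenously fixed (it is the buyer's optimal stopping time obtained in Theorem \ref{theorem3}). Consequently there is no need for a reflected BSDE: the optimization reduces to a Lipschitz BSDE on the random horizon $[t,\tau^*\wedge T]$, handled essentially as in Theorem \ref{theorem2}.

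The plan is to first apply \eqref{LHS} to the LHS of the defining equation, yielding $\mathcal{R}(t)+X_t$. The RHS is translation-invariant in the initial wealth (mirroring \eqref{RHS}), so after absorbing $X_t+P^s(t,S_t;\Gamma,\Psi)$ one is left with a pure control problem over $(\pi,c)\in\Pi[t,\tau^*]$ with zero initial wealth:
\begin{equation*}
 P^s(t,S_t;\Gamma,\Psi)=\mathcal{R}(t)-\mathop{{\rm ess.sup}}\limits_{(\pi,c)\in\Pi[t,\tau^*]}U_t\Bigl(\bigl(\mathcal{R}(\tau^*)-\Gamma(\tau^*,S_{\tau^*})\bigr)\mathbf{1}_{\{\tau^*<T\}}-\Psi(S_T)\mathbf{1}_{\{\tau^*=T\}}-\int_t^{\tau^*\wedge T}\varrho(u)du+X^{0;\pi,c}_{\tau^*\wedge T}\Bigr).
\end{equation*}

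Then, as in the proofs of Theorems \ref{theorem2} and \ref{theorem3}, introduce the indirect utility $\overline U_s=U_s(\cdot)-X^{0;\pi,c}_s$ and $\overline Z^{\,j}_s=Z^j_s-\sum_i\sigma_{ij}(s)\pi^i_s$. Using \eqref{SDE1} and \eqref{utility}, $(\overline U,\overline Z)$ solves a Lipschitz BSDE on $[t,\tau^*\wedge T]$ whose driver is $-\varrho(u)-r(u)\overline U_u+[v(u,c_u)-c_u]-\delta_{\mathcal{O}}(\overline Z_u,(\sigma(u))^T\pi_u)$. Pointwise optimization yields $v(u,c_u)-c_u\le v^*(u)$, with equality at $c_u=\widehat c(u)$, and $-\delta_{\mathcal{O}}(\overline Z_u,(\sigma(u))^T\pi_u)\le -d_{\mathcal{O}}(\overline Z_u,\mathcal{B}_u)$, with equality at $(\sigma(u))^T\pi_u={\rm argmin}(\overline Z_u,\mathcal{B}_u)$. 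Since $z\mapsto d_{\mathcal{O}}(z,\mathcal{B}_u)$ is Lipschitz by the compactness of $\mathcal{O}$, the BSDE comparison principle (Theorem 2.2 of \cite{ElKaroui}) delivers $\overline U_t\le\overline U^*_t$ for every admissible pair, with equality achieved by the proposed optimizers, where $\overline U^*$ satisfies the BSDE with the optimized driver and the same terminal value.

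Finally, set $U^s_t:=\mathcal{R}(t)-\overline U^*_t$ and $Z^s_t:=-\overline Z^*_t$. Using the ODE $\dot{\mathcal{R}}(u)=r(u)\mathcal{R}(u)-v^*(u)$ with $\mathcal{R}(T)=0$ to cancel the $v^*$ and $r\mathcal{R}$ contributions, together with the trivial identities $d_{\mathcal{O}}(\overline Z^*,\mathcal{B})=d_{\mathcal{O}}(-Z^s,\mathcal{B})$ and ${\rm argmin}(\overline Z^*,\mathcal{B})={\rm argmin}(-Z^s,\mathcal{B})$, a direct substitution shows that $(U^s,Z^s)$ solves \eqref{Pricing8} and recovers the claimed optimal strategy $(\pi^*,c^*)=((\sigma^T)^{-1}{\rm argmin}(-Z^s,\mathcal{B}),\widehat c)$. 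Uniqueness in the stated space is standard for Lipschitz BSDEs on random horizons. The only real obstacle is sign bookkeeping, in particular verifying that $\varrho$ and the payoff enter \eqref{Pricing8} with the signs opposite to the bid case; no reflection appears, in contrast to Theorem \ref{theorem3}.
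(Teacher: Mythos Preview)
Your proposal is correct and follows exactly the route the paper intends: the paper's own proof is omitted with the remark that it is similar to Theorem~\ref{theorem2}, and your sketch fills in precisely those details, correctly observing that with $\tau^*$ fixed no reflection is needed and that the BSDE comparison on the random horizon $[t,\tau^*\wedge T]$ suffices. (You have also implicitly corrected an apparent typo in~\eqref{def2}, where the initial wealth should read $X_t+P^s(t,S_t;\Gamma,\Psi)$ and the payoff terms should carry minus signs, consistent with~\eqref{Pricing4}.)
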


\begin{proof} The proof is similar to the proof of Theorem \ref{theorem2}, so we omit it.
\end{proof}

By the nonlinear Feynman-Kac formula, $P^b(t,S;\Gamma,\Psi)$ and $P^s(t,S;\Gamma,\Psi)$ are the unique viscosity solutions of the following variational inequality and semi-linear PDE respectively:
\begin{align}\label{VI2}
 \left\{
 \begin{array}{l}
 -\p_t P^b-{\cal L}_0 P^b=\varrho(t)-d_{\mathcal{O}}\,((\sigma(t))^T SD_S P^b,{\cal B}_t)\;\;
 \mbox{if}\;\;P^b>\Gamma\;\mbox{and}\;(t,S)\in{\cal N}_T;
 \vspace{2mm} \\
 -\p_t P^b-{\cal L}_0 P^b\geq \varrho(t)-d_{\mathcal{O}}\,((\sigma(t))^T SD_S P^b,{\cal B}_t)\;\;
 \mbox{if}\;\;P^b=\Gamma\;\mbox{and}\;(t,S)\in{\cal N}_T;
 \vspace{2mm} \\
 P^b(T,S;\Gamma,\Psi)=\Psi(S),
 \end{array}
 \right.
\end{align}
and
\begin{align}\label{VI3}
 \left\{
 \begin{array}{l}
 -\p_t P^s-{\cal L}_0 P^s=\varrho(t)+d_{\mathcal{O}}\,(-(\sigma(t))^T SD_S P^s,{\cal B}_t)\;\;
 \mbox{in}\;\;\{P^b>\Gamma\};
 \vspace{2mm} \\
 P^s=\Gamma\ \ \text{in}\ \{P^b=\Gamma\};\ \ \ \ \ \ P^s(T,S;\Gamma, \Psi)=\Psi(S).
 \end{array}
 \right.
\end{align}

We have the following existence and regularity results for the
strong solutions of (\ref{VI2}) and (\ref{VI3}). Note that the early
payoff function $\Gamma$ usually takes the form
$\Gamma=\max\{\Gamma_1,0\}$, where $\Gamma_1$ is some payoff if the
investor exercises the option. For example, the early payoff
function of American call/put option is just $(S-K)^+/(K-S)^+$ with
the strike price $K$.

\begin{proposition}\label{americanexistence}
Suppose that Assumptions \ref{Assumption1}, \ref{Assumption2}, and
\ref{Assumption3} are satisfied. Moreover, the early payoff function
$\Gamma$ has the form $\Gamma=\max\,\{\,\Gamma_1,\Gamma_2\,\}$,
where both $\Gamma_1$ and $\Gamma_2$ satisfy Assumptions
\ref{Assumption4}, with $D_S^2 \Gamma_1$ and $D_S^2 \Gamma_2$ having
polynomial growth, i.e., there exist a positive constant $C$ and a
positive integer $N$ such that
$$
 |D_S^2 \Gamma_1|+|D_S^2 \Gamma_2|\leq C(1+|S|^N).
$$

Then both (\ref{VI2}) and (\ref{VI3}) have unique strong solutions
with linear growth. Concretely speaking,
\begin{align*}
P^{b}(t,S;\Gamma, \Psi)&\in W^{2,\,1}_{p,\,loc}({\cal N}_T)\cap C(\overline{\cal N}_T);\\
P^{s}(t,S;\Gamma, \Psi)&\in W^{2,\,1}_{p,\,loc}({\cal N}_T\cap\{P^b\geq\Gamma\})\cap C(\overline{\cal N}_T),
\end{align*}
and there exists a constant $C$ such that
$$
 |P^{b}(t,S;\Gamma, \Psi)|+|P^{s}(t,S;\Gamma, \Psi)|\leq C(\,1+|S|\,).
$$
\end{proposition}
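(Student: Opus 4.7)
The plan is to construct both strong solutions via a penalization--truncation scheme, pass to the limit on a bounded cylinder, and then extend the estimates to all of $\mathcal{N}_T$ using the linear growth of the data. As a preliminary step, I apply the logarithmic substitution $x_i=\ln S_i$, which converts the degenerate operator $\mathcal{L}_0$ (whose principal coefficients vanish at $S_i=0$) into a uniformly parabolic operator on $\mathbb{R}^n$ with continuous bounded coefficients (Assumption \ref{Assumption1}); under this substitution, the Lipschitz structure of $\Psi$ and $\Gamma$, the polynomial growth of $D_S^2\Gamma_i$, and the Lipschitz continuity of $d_{\mathcal{O}}(\cdot,\mathcal{B}_t)$ in its first argument are all preserved locally. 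Uniqueness for both (\ref{VI2}) and (\ref{VI3}) will follow from the Alexandrov--Bakel'man--Pucci (ABP) comparison principle alluded to in the Introduction, while the linear-growth bound will come from comparison against an explicit affine super-solution $C(1+|S|)$ dominating $\Psi\vee\Gamma$.

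For the variational inequality (\ref{VI2}) governing $P^b$, I would proceed in four stages. First, truncate to a bounded cylinder $Q_R=[0,T)\times B_R$ in the $x$-coordinates and solve the penalized semi-linear Cauchy--Dirichlet problem
$$
 -\partial_t P^{b,\epsilon}-\mathcal{L}_0 P^{b,\epsilon}=\varrho(t)-d_{\mathcal{O}}\bigl((\sigma(t))^T SD_S P^{b,\epsilon},\mathcal{B}_t\bigr)-\tfrac{1}{\epsilon}\bigl(P^{b,\epsilon}-\Gamma\bigr)^-,
$$
with terminal data $\Psi$ and lateral data taken from the unconstrained semi-linear PDE. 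Because the nonlinearity is Lipschitz in $(P,D_SP)$, the classical $L^p$-theory of Ladyzhenskaya--Solonnikov--Ural'tseva yields a unique $W^{2,1}_p$ solution for each $\epsilon>0$. Second, I would derive $W^{2,1}_p$ estimates for $P^{b,\epsilon}$ uniform in $\epsilon$ by testing the equation against $\mathcal{L}_0(P^{b,\epsilon}-\Gamma)$ and using the representation $\Gamma=\max\{\Gamma_1,\Gamma_2\}$, which via the polynomial-growth hypothesis ensures $\mathcal{L}_0\Gamma_i\in L^\infty_{\text{loc}}$ and thus controls the penalty mass $\epsilon^{-1}(P^{b,\epsilon}-\Gamma)^-$ in $L^p_{\text{loc}}$. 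Third, pass to the limit $\epsilon\to0^+$ in $W^{2,1}_{p,\text{loc}}(Q_R)$ weak-$\ast$. Finally, let $R\to\infty$ and use the reflected-BSDE representation of Theorem \ref{theorem3} as a global comparison function with linear growth, obtaining a strong solution of (\ref{VI2}) on all of $\mathcal{N}_T$ with the claimed regularity.

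Once $P^b$ is in hand, the continuation region $\mathcal{C}\triangleq\{P^b>\Gamma\}\cap\mathcal{N}_T$ is open by continuity of $P^b$, and on $\mathcal{C}$ the system (\ref{VI3}) reduces to a genuine semi-linear Cauchy--Dirichlet problem with terminal data $\Psi$ on $\{T\}\times(0,\infty)^n$ and lateral data $\Gamma$ on $\partial\mathcal{C}\setminus\bigl(\{T\}\times(0,\infty)^n\bigr)$. Since $d_{\mathcal{O}}(-(\sigma(t))^T SD_S\cdot,\mathcal{B}_t)$ is Lipschitz in $D_SP^s$ and the boundary data are continuous, standard interior parabolic $L^p$-theory delivers a unique strong solution $P^s\in W^{2,1}_{p,\text{loc}}(\mathcal{C})$; setting $P^s\equiv\Gamma$ on the exercise region and matching continuously across $\partial\mathcal{C}$ produces the global solution in $C(\overline{\mathcal{N}}_T)$, and the same affine comparison delivers the linear-growth bound for $P^s$.

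The main obstacle will be establishing the uniform $W^{2,1}_{p,\text{loc}}$ estimate for the penalized approximants $P^{b,\epsilon}$ across the free boundary. The standard Bensoussan--Lions obstacle estimate requires $\mathcal{L}_0\Gamma\in L^p_{\text{loc}}$, while the bare Lipschitz hypothesis on $\Gamma$ only yields a distributional bound. The hypothesis that $\Gamma=\max\{\Gamma_1,\Gamma_2\}$ with $D_S^2\Gamma_i$ of polynomial growth is precisely tailored to supply an $L^\infty_{\text{loc}}$ bound on $\mathcal{L}_0\Gamma_i$, which governs $\mathcal{L}_0\Gamma$ piecewise on $\{\Gamma_1>\Gamma_2\}$ and $\{\Gamma_1<\Gamma_2\}$; the technical point that needs extra care is the singular distributional contribution concentrated on the kink set $\{\Gamma_1=\Gamma_2\}$, where one must verify (using the maximum structure) that this contribution has the correct sign to be absorbed into, rather than oppose, the penalty term. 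Once this uniform estimate is in place, all remaining steps are fairly standard applications of parabolic $L^p$-theory and the ABP principle.
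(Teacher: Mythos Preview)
Your overall architecture---penalization of the obstacle problem, solution on bounded cylinders, uniform $W^{2,1}_p$ estimates, and a double limit---matches the paper's proof. The paper does not pass to logarithmic coordinates but works directly on domains $\mathcal{N}^k_T=[0,T)\times\{1/k\leq S_i\leq k\}$ bounded away from the coordinate axes (where $\mathcal{L}_0$ is uniformly parabolic anyway), and it mollifies $\Psi$ rather than using BSDE comparison for the global linear-growth bound; these are cosmetic differences.

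The substantive divergence, and the place where your proposal carries an unresolved gap, is the uniform penalty estimate. You propose to bound $\epsilon^{-1}(P^{b,\epsilon}-\Gamma)^-$ in $L^p_{\mathrm{loc}}$ by a Bensoussan--Lions testing argument, which needs control of $\mathcal{L}_0\Gamma$ and forces you to confront the singular distributional contribution of $D_S^2\Gamma$ along the kink $\{\Gamma_1=\Gamma_2\}$. You correctly flag this as the crux but do not resolve it. The paper sidesteps the kink entirely: it never applies $\mathcal{L}_0$ to $\Gamma$. Instead, for each \emph{smooth} branch $\Gamma_i$ it builds the barrier $w^*=\Gamma_i-W/m$ with $W=C_1e^{C_2(T-t)}(1+|S|^{N+2})$, checks via the polynomial-growth hypothesis on $D_S^2\Gamma_i$ that $w^*$ is a subsolution of the penalized equation, and concludes by the comparison principle that $P_m\geq\Gamma_i-W/m$ for $i=1,2$. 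Taking the maximum over $i$ gives the pointwise bound $P_m\geq\Gamma-W/m$, hence $m(P_m-\Gamma)^-\leq W$ in $L^\infty_{\mathrm{loc}}$ uniformly in $m$. This is both sharper (an $L^\infty$ rather than $L^p$ bound) and cleaner (no distributional analysis on the kink set). If you retain your framework, I would swap out the testing step for this two-branch barrier comparison; everything else in your outline then goes through.
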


We leave its proof in the Appendix.\smallskip\\

In the rest of this section, we consider a concrete example with the
priors set $\Theta_1$ and the admissible set $\Pi_1$, where
$\Theta_1$ and $\Pi_1$ are given in the last section. We denote the
risk-neutral American option price with the dividend rate $q(\cdot)$
and the payoff $\int_t^{\tau\wedge
T}\varrho(u)du+\mathbf{1}_{\{\tau<T\}}\Gamma(\tau,S_{\tau})+
\mathbf{1}_{\{\tau=T\}}\Psi(S_{T})$ as $P^0(t,S;q,\Gamma,\Psi)$, and
the indifference prices with the priors set $\Theta_1$ and the
admissible set $\Pi_1$ as $P^{1m}(t,S;\Gamma,\Psi)$ for
$m\in\{b,s\}$. We can regard the standard American option pricing
framework as a special case of our indifference pricing model for
the bid price by assuming $\delta_{\mathcal{O}_0}(\bar{z},z)=0$.

\begin{proposition} \label{americaneq}
Suppose that the assumptions in Proposition \ref{americanexistence}
are satisfied, the priors set is $\Theta_1$, and the admissible set
is $\Pi_1$.
\begin{itemize}
\item If both $\Gamma(S)$ and $\Psi(S)$ are increasing in each component $S_i$, then the bid price is given by
$$
P^{1b}(t,S;\Gamma,\Psi)=P^0(t,S;\sigma\overline\kappa,\Gamma, \Psi)
$$
with the optimal portfolio-consumption strategy
$(\pi^*,c^*)=(0,\hat{c})$, and the optimal exercise time $$
\tau^*=\widehat{\tau}\triangleq\inf\{s\geq t:P^{1b}(s,S_s;\Gamma,
\Psi)=\Gamma(s,S_s)\}\wedge T.$$
 The ask price is bounded by
$$P^{1b}(t,S;\Gamma,\Psi) \leq P^{1s}(t,S;\Gamma, \Psi)\leq P^0(t,S;0,\Gamma, \Psi)$$ with
the optimal portfolio-consumption strategy
$(\pi^*,c^*)=(SD_SP^{1s},\hat{c})$.

\item If both $\Gamma(S)$ and $\Psi(S)$ are decreasing in each component $S_i$, then the bid price is given by
$$
P^{1b}(t,S;\Gamma,\Psi)=P^0(t,S;0,\Gamma,\Psi)$$ with the optimal
portfolio-consumption strategy $(\pi^*,c^*)=(-SD_SP^0,\hat{c})$, and
the optimal exercise time $\tau^*=\widehat{\tau}$. The ask price is
bounded by
$$
P^{1b}(t,S;\Gamma,\Psi)\leq P^{1s}(t,S;\Gamma,\Psi)\leq P^0(t,S;\sigma\overline{\kappa},\Gamma,\Psi)$$
with the
optimal portfolio-consumption strategy $(\pi^*,c^*)=(0,\hat{c})$.
\end{itemize}
\end{proposition}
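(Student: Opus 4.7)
The plan is to mirror Proposition~\ref{europeaneq}, treating only the case where $\Gamma$ and $\Psi$ are both nondecreasing in each $S_i$, since the decreasing case is entirely symmetric. I would work with the variational inequality (\ref{VI2}) for $P^{1b}$ and the (buyer-continuation-region) PDE (\ref{VI3}) for $P^{1s}$, exploiting the strong regularity from Proposition~\ref{americanexistence} to make the evaluation of $d_{\mathcal{O}_1}$ pointwise.

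First I would establish $\partial_{S_i}P^{1b}\ge 0$. Along the monotone flow of (\ref{stateequation}) each $S_s^i$ is nondecreasing in $S_t^i$, and since $\Gamma(s,\cdot)$ and $\Psi(\cdot)$ are componentwise nondecreasing, the reflected-BSDE comparison applied to (\ref{Pricing7}) gives $P^{1b}(t,S;\Gamma,\Psi)$ nondecreasing in each $S_i$. Sobolev embedding with $p>n+2$ then upgrades this to a pointwise inequality. With $\mathcal{B}_t=[0,\infty)^n$ and $SD_SP^{1b}$ componentwise nonnegative, the explicit form of $\delta_{\mathcal{O}_1}$ yields
$$d_{\mathcal{O}_1}((\sigma(t))^TSD_SP^{1b},\mathcal{B}_t)=(\sigma(t)\overline{\kappa})^TSD_SP^{1b},$$
attained at $z^*=0$, i.e.\ $\pi^*=0$. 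Substituting this into (\ref{VI2}) turns the bid-price VI into exactly the standard risk-neutral American pricing VI with dividend rate $q=\sigma\overline\kappa$, obstacle $\Gamma$ and terminal payoff $\Psi$; the uniqueness in Proposition~\ref{americanexistence} identifies $P^{1b}$ with $P^0(t,S;\sigma\overline\kappa,\Gamma,\Psi)$, and the optimal exercise time from Theorem~\ref{theorem3} specialises to $\tau^*=\widehat{\tau}$.

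For the ask price I would first show $\partial_{S_i}P^{1s}\ge 0$. Conditional on the now-known buyer rule $\tau^*=\widehat\tau$, the BSDE (\ref{Pricing8}) has terminal data $\Gamma(\widehat\tau,S_{\widehat\tau})\mathbf{1}_{\{\widehat\tau<T\}}+\Psi(S_T)\mathbf{1}_{\{\widehat\tau=T\}}$, which is nondecreasing in each $S_t^i$ via the monotone flow, so BSDE comparison plus Sobolev embedding on $\mathcal{C}\triangleq\{P^{1b}>\Gamma\}$ gives the componentwise monotonicity. Then
$$d_{\mathcal{O}_1}(-(\sigma(t))^TSD_SP^{1s},\mathcal{B}_t)=0,$$
attained at $z^*=(\sigma(t))^TSD_SP^{1s}$, equivalently $\pi^*=SD_SP^{1s}$, so on $\mathcal{C}$ the function $P^{1s}$ solves the standard Black--Scholes PDE $-\partial_tP^{1s}-\mathcal{L}_0P^{1s}=\varrho$. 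From the previous step and the monotonicity of $P^0$ in its dividend rate one has $P^{1b}\le P^0(\cdot;0,\Gamma,\Psi)$, so the exercise region of $P^0(\cdot;0,\Gamma,\Psi)$ is contained in $\{P^{1b}=\Gamma\}$; hence $P^0(\cdot;0,\Gamma,\Psi)\ge\Gamma$ throughout $\mathcal{C}$. The difference $P^0(\cdot;0,\Gamma,\Psi)-P^{1s}$ then satisfies a homogeneous linear parabolic equation on $\mathcal{C}$ with nonnegative parabolic-boundary data (equal to $P^0-\Gamma\ge 0$ on the lateral part and $0$ at $t=T$), and the A-B-P maximum principle for strong solutions delivers $P^{1s}\le P^0(\cdot;0,\Gamma,\Psi)$ on $\mathcal{C}$; the inequality extends trivially to $\{P^{1b}=\Gamma\}$ where $P^{1s}=\Gamma\le P^0$. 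The lower bound $P^{1s}\ge P^{1b}$ is immediate from comparing (\ref{Pricing7}) and (\ref{Pricing8}): the nonnegative reflection $dK$ and the sign-reversed driver both push $U^b$ below $U^s$.

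The hard part will be the ask-price comparison. Unlike the bid, $P^{1s}$ lives on the \emph{buyer}-determined free-boundary domain $\mathcal{C}$ and therefore cannot be identified with any single risk-neutral American pricing problem via a uniqueness-of-solution argument. The workable substitute is the inclusion of exercise regions produced in the bid-price step, combined with the A-B-P strong-solution maximum principle on $\mathcal{C}$; the $W^{2,1}_{p,loc}$ regularity needed to invoke A-B-P is already supplied by Proposition~\ref{americanexistence}.
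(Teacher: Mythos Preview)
Your proposal is correct and follows the paper's approach for the bid price essentially verbatim. For the ask-price bounds there are small tactical differences: you obtain $P^{1b}\le P^{1s}$ by BSDE comparison on $[t,\tau^*]$, whereas the paper compares the two PDEs directly on $\mathcal{C}=\{P^{1b}>\Gamma\}$ via Lemma~\ref{comparison} (noting $-\partial_tP^{1b}-\mathcal{L}_0P^{1b}=\varrho-d_{\mathcal{O}_1}\le\varrho=-\partial_tP^{1s}-\mathcal{L}_0P^{1s}$ there, with matching boundary data). For the upper bound you first invoke dividend-rate monotonicity $P^{1b}=P^0(\cdot;\sigma\overline\kappa,\Gamma,\Psi)\le P^0(\cdot;0,\Gamma,\Psi)$ to obtain the exercise-region inclusion, so that $P^0(\cdot;0,\Gamma,\Psi)$ satisfies the \emph{equation} on $\mathcal{C}$ and the difference $P^0-P^{1s}$ solves a homogeneous problem; the paper's route is slightly more direct, using only that $P^0$, being a VI solution, is automatically a global supersolution $-\partial_tP^0-\mathcal{L}_0P^0\ge\varrho$, which already suffices for the one-sided comparison on $\mathcal{C}$ without any inclusion step. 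Both routes are valid. One caution on your monotonicity argument for $P^{1s}$: the stopping time $\widehat\tau$ itself depends on $S_t$, so the terminal datum of (\ref{Pricing8}) is not obviously nondecreasing in $S_t^i$ ``via the monotone flow'' alone; the paper is equally terse here, simply asserting $D_{S_i}P^{1m}\ge 0$ for $m\in\{b,s\}$ ``as in the proof of Proposition~\ref{europeaneq}'', so your level of rigor matches the paper's on this point.
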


\begin{proof}
We only prove the case that $\Psi(S)$ is increasing in each
component $S_i$, while the decreasing case is similar. Let
$P^0,\,P^{1s}$ and $P^{1b}$ denote $P^0(t,S;0,\Gamma,\Psi),$
$\,P^{1s}(t,S;\Gamma,\Psi)$ and $P^{1b}(t,S;\Gamma,\Psi)$,
respectively.

First, as in the proof of Proposition \ref{europeaneq}, by the BSDE
comparison theorem and the regularity of the strong solution, we
deduce that $D_{S_i}P^{1m}\geq 0$ for $m\in\{b,s\}$. Therefore, for
the case of the bid price,
\begin{equation*}
d_{\mathcal{O}_1}((\sigma(t))^TSD_SP^{1b},\mathcal{B}_t)
=(\sigma(t)\overline{\kappa})^TSD_SP^{1b}
\end{equation*}
with the optimizer $\pi^*=0$, and for the case
of the ask price:
\begin{equation*}
d_{\mathcal{O}_1}(-(\sigma(t))^TSD_SP^{1s},\mathcal{B}_t)=0
\end{equation*}
with the optimizer $\pi^*=SD_SP^{1s}$. Then
$P^{1b}(t,S;\Gamma,\Psi)$ $=P^0(t,S;\sigma\overline\kappa,\Gamma,
\Psi) $ follows from the pricing equation (\ref{VI2}).

Different from the European option case, the optimal trading
strategy of the American option's holder affects the seller's ask
price. In other words, the solution of (\ref{VI2}) affects the
solution of (\ref{VI3}), so we can not expect that the ask price is
equal to the price in the standard Black-Scholes market.

We first prove the lower bound of the ask price. It is clear that
$P^{1s}=P^{1b}=\Gamma$ in the domain $\{P^{1b}=\Gamma\}$ from
(\ref{VI3}). In the domain $\{P^{1b}>\Gamma\}$, note that
$d_{\mathcal{O}_1}(\cdot,\mathcal{B}_u)\geq0$, then we have \bee
 -\p_t P^{1b}-{\cal L}_0 P^{1b}&=&\varrho(t)-d_{\mathcal{O}_1}\,((\sigma(t))^T SD_S P^{1b},\mathcal{B}_u)
 \\[2mm]
 &\leq& \varrho(t)+d_{\mathcal{O}_1}\,(-(\sigma(t))^T SD_S P^{1s},\mathcal{B}_u)=-\p_t P^{1s}-{\cal L}_0 P^{1s}.
\eee
The continuity and the terminal and boundary conditions of $P^{1b},\,P^{1s}$ imply $P^{1b}=P^{1s}$ on the parabolic boundary of $\{P^{1b}>\Gamma\}$. By applying Lemma \ref{comparison}, we deduce $P^{1b}\leq P^{1s}$.

Next, we use the similar method to prove the upper bound of the ask price. Since $P^0\geq \Gamma$, we derive that $P^0\geq P^{1s}$ in the domain $\{P^{1b}=\Gamma\}$ from (\ref{VI3}).  Note that $d_{\mathcal{O}_0}(\cdot,\mathcal{B}_u)=0=d_{\mathcal{O}_1}(-(\sigma(t))^TSD_SP^{1s},\mathcal{B}_t)$, we deduce that in the domain $\{P^{1b}>\Gamma\}$
$$
 -\p_t P^{1s}-{\cal L}_0 P^{1s}=\varrho(t)\leq -\p_t P^0-{\cal L}_0 P^0.
$$
 Moreover, the continuity of $P^{1s}$ implies that $P^{1s}=\Gamma\leq P^0$ on $\p_p\{P^{1b}>\Gamma\}\cap{\cal N}_T$.
 On the other hand, $P^{1s}=P^0$ on $\p_p\{P^{1b}>\Gamma\}\cap\p_p {\cal N}_T$. By applying Lemma \ref{comparison}, we deduce $P^{1s}\leq P^0$.
\end{proof}

\begin{proposition}\label{viinequality}
Suppose that the assumptions in Proposition \ref{americanexistence}
are satisfied, the priors set is $\Theta_1$, and the admissible set
is $\Pi_1$. Then we have
\begin{itemize}
\item The bid price satisfies the following inequality:
\begin{align*}
&\ \max\left\{P^0(t,S;\sigma\overline{\kappa},\underline{\Gamma}^+,\underline{\Psi}^+),
P^0(t,S;0,\underline{\Gamma}^-,\underline{\Psi}^-)\right\}\\
\leq&\ P^{1b}(t,S;\Gamma,\Psi)\leq
P^0(t,S;q,\Gamma,\Psi),\;\;\forall\;q\in[\,0,\sigma\overline{\kappa}\,].
\end{align*} The optimal
portfolio-consumption strategy is
$(\pi^*,c^*)=((SD_SP^{1b})^{-},\hat{c})$, and the optimal exercise
time $\tau^*=\widehat{\tau}$.
\item The ask price satisfies the following inequality:
\begin{align*}
&\ P^{1b}(t,S;\Gamma,\Psi)\leq P^{1s}(t,S;\Gamma,\Psi)\\
\leq&\
\min\left\{P^0(t,S;\sigma\overline{\kappa},\overline{\Gamma}\,^-,\overline{\Psi}\,^-),
P^0(t,S;0,\overline{\Gamma}\,^+,\overline{\Psi}\,^+)\right\}.
\end{align*}
The optimal portfolio-consumption strategy is
$(\pi^*,c^*)=((SD_SP^{1s})^{+},\hat{c})$,
\end{itemize}
where $\underline{\Gamma}^{+}/\underline{\Gamma}^{-}$ is any
increasing/decreasing function bounded above by $\Gamma$, and
$\overline{\Gamma}\,^{+}/\overline{\Gamma}\,^{-}$ is any
increasing/decreasing function bounded below by $\Gamma$.
\end{proposition}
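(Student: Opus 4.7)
I would establish each bound by constructing, from a listed risk-neutral price $P^0(t,S;q,\cdot,\cdot)$, an explicit sub- or super-solution of the bid-price variational inequality (\ref{VI2}) or the ask-price PDE (\ref{VI3}), and then invoke the comparison principle (Lemma \ref{comparison}) on a suitable domain. The key computational tool, extending the pattern from Propositions \ref{europeaneq} and \ref{americaneq}, is the support-function identity
$$d_{\mathcal{O}_1}(\bar{z},\mathcal{B}_t)=\min_{\pi\in[0,\infty)^n}\left[\,\overline{\kappa}^T(\bar{z}+\sigma(t)^T\pi)^++\underline{\kappa}^T(\bar{z}+\sigma(t)^T\pi)^-\,\right],$$
whose pointwise minimizer can be read off explicitly whenever $\bar{z}$ has a definite componentwise sign.

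\textbf{Lower bounds for $P^{1b}$.} For $P:=P^0(t,S;\sigma\overline{\kappa},\underline{\Gamma}^+,\underline{\Psi}^+)$, the BSDE/VI comparison theorem applied to the monotone data gives $\partial_{S_i}P\ge 0$, hence $d_{\mathcal{O}_1}(\sigma^TSD_SP,\mathcal{B}_t)=(\sigma\overline{\kappa})^TSD_SP$ with minimizer $\pi=0$; using $\mathcal{L}_{\sigma\overline{\kappa}}=\mathcal{L}_0-(\sigma\overline{\kappa})^TSD_S$, $P$ itself satisfies the same VI as $P^{1b}$ with the strictly smaller obstacle $\underline{\Gamma}^+\le\Gamma$ and smaller terminal $\underline{\Psi}^+\le\Psi$, so Lemma \ref{comparison} gives $P\le P^{1b}$. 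For $P:=P^0(t,S;0,\underline{\Gamma}^-,\underline{\Psi}^-)$, decreasing monotonicity of $P$ yields $SD_SP\le 0$, so the minimizer in $d_{\mathcal{O}_1}$ is attained at $\pi^*=-SD_SP\in[0,\infty)^n$ with value $0$; again $P$ solves the same VI as $P^{1b}$ with smaller data, and comparison concludes.

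\textbf{Upper bound for $P^{1b}$.} For any $q\in[\,0,\sigma\overline{\kappa}\,]$ and $P:=P^0(t,S;q,\Gamma,\Psi)$, one checks that $\sigma^{-1}q\in\mathcal{O}_1$ under the stated bound on $q$; using it as a test element in the support function gives
$$d_{\mathcal{O}_1}(\sigma^TSD_SP,\mathcal{B}_t)\ge\min_{\pi\ge 0}q^T(SD_SP+\pi)=q^TSD_SP,$$
since $q\ge 0$ makes the minimum over $\pi\ge 0$ attained at $\pi=0$. Combined with $\mathcal{L}_q=\mathcal{L}_0-q^TSD_S$, this shows on $\{P>\Gamma\}$ that $-\partial_tP-\mathcal{L}_0P+d_{\mathcal{O}_1}(\sigma^TSD_SP,\mathcal{B}_t)\ge -\partial_tP-\mathcal{L}_qP=\varrho$; since $P\ge\Gamma$ and $P(T)=\Psi$, $P$ is a supersolution of the VI (\ref{VI2}), and Lemma \ref{comparison} yields $P^{1b}\le P$. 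The optimal strategy $\pi^*=(SD_SP^{1b})^-$ and $\tau^*=\widehat{\tau}$ follow from Theorem \ref{theorem3} and the explicit pointwise minimizer of $d_{\mathcal{O}_1}(\sigma^TSD_SP^{1b},\mathcal{B}_t)$.

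\textbf{Ask-price bounds and main obstacle.} The lower bound $P^{1b}\le P^{1s}$ was already proved in Proposition \ref{americaneq} without any monotonicity assumption, so it applies verbatim. For the upper bounds, the argument is symmetric on the open region $\{P^{1b}>\Gamma\}$ where (\ref{VI3}) is a semi-linear PDE: monotone-decreasing $P=P^0(t,S;\sigma\overline{\kappa},\overline{\Gamma}^-,\overline{\Psi}^-)$ gives $d_{\mathcal{O}_1}(-\sigma^TSD_SP,\mathcal{B}_t)=-(\sigma\overline{\kappa})^TSD_SP$, and the same rewriting of $\mathcal{L}_{\sigma\overline{\kappa}}$ shows $P$ is a supersolution of the $P^{1s}$-PDE; monotone-increasing $P=P^0(t,S;0,\overline{\Gamma}^+,\overline{\Psi}^+)$ gives $d_{\mathcal{O}_1}(-\sigma^TSD_SP,\mathcal{B}_t)=0$ and the PDE matches directly. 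Parabolic-boundary matching uses $P^{1s}=\Gamma\le\overline{\Gamma}^{\pm}\le P$ on $\{P^{1b}=\Gamma\}$, $\Psi\le\overline{\Psi}^\pm$, and continuity from Proposition \ref{americanexistence}, so Lemma \ref{comparison} concludes. The main technical obstacle is precisely this boundary matching: the free boundary $\{P^{1b}=\Gamma\}$ is determined by $P^{1b}$ rather than by $P^{1s}$ or the comparison function $P$, so the comparison must be performed on a domain whose shape is governed by a third function---this is where the up-to-the-boundary regularity of $P^{1b}$ from Proposition \ref{americanexistence} enters essentially.
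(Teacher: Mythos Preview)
Your proof is correct and follows the same overall strategy as the paper: each bound is obtained by exhibiting the listed risk-neutral price $P^0$ as a sub- or super-solution of the relevant variational inequality and then invoking the comparison principle (Lemma \ref{comparison}); the lower bounds for $P^{1b}$ via the monotone-data cases, the ask-price lower bound imported from Proposition \ref{americaneq}, and the ask-price upper bounds on the domain $\{P^{1b}>\Gamma\}$ with boundary matching are all handled exactly as the paper does.

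The only noteworthy difference is in the upper bound $P^{1b}\le P^0(\cdot;q,\Gamma,\Psi)$. The paper does not test an element $\sigma^{-1}q\in\mathcal{O}_1$; instead it first computes the nonlinear term explicitly for $P^{1b}$, namely $d_{\mathcal{O}_1}((\sigma(t))^TSD_SP^{1b},\mathcal{B}_t)=(\sigma(t)\overline{\kappa})^T(SD_SP^{1b})^+$, and then rewrites the variational inequality for $P^{1b}$ in terms of the \emph{linear} operator $\mathcal{L}_q$ with a modified source $\widetilde{\varrho}(t)\le\varrho(t)$, so that Lemma \ref{comparison} applies directly with both $P^{1b}$ and $P^0$ solving a VI driven by the same $\mathcal{L}_q$. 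This sidesteps your claim that $q\in[0,\sigma\overline{\kappa}]$ forces $\sigma^{-1}q\in\mathcal{O}_1$, which is not clear for non-diagonal $\sigma$; the inequality you actually need, $d_{\mathcal{O}_1}\ge q^TSD_SP$, follows more directly from the algebraic identity $(\sigma\overline{\kappa})^Tx^+-q^Tx=(\sigma\overline{\kappa}-q)^Tx^++q^Tx^-\ge 0$, which is precisely the paper's computation of $\widetilde{\varrho}\le\varrho$.
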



\begin{proof} We only prove the case of the bid price, as the case of the ask price is similar.
Let $P^0$ and $P^{1b}$ denote $P^0(t,S;q,\Gamma,\Psi)$ and
$P^{1b}(t,S;\Gamma,\Psi)$, respectively.

Note that if the final payoff $\Psi$ does not have any monotone
property, the sign of $D_SP^{1b}$ is indefinite. In this case,
\begin{align*}
&\ d_{\mathcal{O}_1}((\sigma(t))^TSD_SP^{1b},\mathcal{B}_t)\\
=&\ \min_{z\in\mathcal{B}_t}\left\{\overline\kappa^T((\sigma(t))^TSD_SP^{1b}+z)^+
 +\underline\kappa^T((\sigma(t))^TSD_SP^{1b}+z)^-\right\}\\
=&\ (\sigma(t)\overline{\kappa})^T(SD_SP^{1b})^+
\end{align*}
with the optimizer $z^*=(\sigma(t))^T(SD_SP^{1b})^-$, or equivalently, $\pi^*=(SD_SP^{1b})^-$. Denote
$$
 \widetilde{\varrho}(t)\triangleq(-\p_t P^{1b}-{\cal L}_q P^{1b})-
 (-\p_t P^{1b}-{\cal L}_0 P^{1b})+\varrho(t)-d_{\mathcal{O}_1}\,((\sigma(t))^T SD_S P^{1b},{\cal B}_t).
$$
Then we can rewrite the variational inequality (\ref{VI2}) as
follows \bee
 \left\{
 \begin{array}{l}
 -\p_t P^{1b}-{\cal L}_q P^{1b}=\widetilde{\varrho}(t)\;\;
 \mbox{if}\;\;P^{1b}>\Gamma\;\mbox{and}\;(t,S)\in{\cal N}_T;
 \vspace{2mm} \\
 -\p_t P^{1b}-{\cal L}_q P^{1b}\geq \widetilde{\varrho}(t)\;\;
 \mbox{if}\;\;P^{1b}=\Gamma\;\mbox{and}\;(t,S)\in{\cal N}_T;
 \vspace{2mm} \\
 P^{1b}(T,S)=\Psi(S).
 \end{array}
 \right.
\eee
 On the other hand, $P^0$ satisfies
\bee
 \left\{
 \begin{array}{l}
 -\p_t P^0-{\cal L}_q P^0=\varrho(t)\;\;
 \mbox{if}\;\;P^0>\Gamma\;\mbox{and}\;(t,S)\in{\cal N}_T;
 \vspace{2mm} \\
 -\p_t P^0-{\cal L}_q P^0\geq \varrho(t)\;\;
 \mbox{if}\;\;P^0=\Gamma\;\mbox{and}\;(t,S)\in{\cal N}_T;
 \vspace{2mm} \\
 P^0(T,S)=\Psi(S),
 \end{array}
 \right.
\eee
 Moreover, it is not difficult to check that
$$
 \widetilde{\varrho}(t)=-\Big[\,(\sigma(t)\overline{\kappa})^T-q(t)^T\,\Big]\,(SD_SP^{1b})^+
 -q(t)^T\Big[\,(SD_SP^{1b})^+-SD_SP^{1b}\,\Big]+\varrho(t)
 \leq \varrho(t).
$$
By applying Lemma \ref{comparison}, we have the upper bound of the
bid price:
$$
 P^{1b}(t,S_t;\Gamma,\Psi)\leq P^0(t,S_t;q,\Gamma,\Psi),\;\;\forall\;q\in[\,0,\sigma\overline{\kappa}\,].
$$

On the other hand, $P^{1b}(t,S;\Gamma,\Psi)$ and
$P^{1b}(t,S_t;\underline{\Gamma}^+/\underline{\Gamma}^-,\,\underline{\Psi}^+/\underline{\Psi}^-)$
satisfy the same differential equation, but with different terminal
values. Lemma \ref{comparison} implies that
$P^{1b}(t,S;\Gamma,\Psi)$ is larger since it has larger terminal
value and obstacle. By Proposition \ref{americaneq}, the bid price
associated with the payoff
$\underline{\Gamma}^+,\,\underline{\Psi}^+$ is
$P^0(t,S_t;\sigma\overline{\kappa},\underline{\Gamma}^+,\underline{\Psi}^+)$,
and the bid price associated with the payoff
$\underline{\Gamma}^-,\,\underline{\Psi}^-$ is
$P^0(t,S_t;0,\underline{\Gamma}^-,\underline{\Psi}^-)$. Hence, we
have the lower bound of the bid price:
$$
P^{1b}(t,S;\Gamma,\Psi)\geq
\max\left\{P^0(t,S_t;\sigma\overline{\kappa},\underline{\Gamma}^+,\underline{\Psi}^+),
P^0(t,S_t;0,\underline{\Gamma}^-,\underline{\Psi}^-)\right\}.$$
\end{proof}

To finish this section, we present a converge result of the
indifference price to the risk-neutral price for the American option
as in Proposition 3.5 for the European option case.

\begin{proposition}\label{americanconvergence}
Suppose the assumptions in Proposition \ref{americanexistence} are
satisfied, the priors set is $\Theta_1$, and the admissible set is
$\Pi_1$. Then the bid price $P^{1b}$ and the ask price $P^{1s}$
converge to the risk-neutral price $P^0$ when the upper bound
$\overline{\kappa}$ in the priors set $\Theta_1$ converges to zero,
Concretely speaking, \bee |P^{1b}-P^0|+|P^{1s}-P^0|\leq
C\overline{\kappa}^*(1+|S|)\;\;\mbox{in}\;{\cal N}_T, \eee where $C$
is a constant independent of $\overline{\kappa}^*$.
\end{proposition}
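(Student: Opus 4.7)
The plan is to adapt the barrier-and-comparison strategy of Proposition \ref{Europeanconvergence} to the American setting. Proposition \ref{viinequality} already supplies $P^{1b}\le P^{0}$ and $P^{1b}\le P^{1s}$, so the task reduces to producing matching estimates $P^{1b}\ge P^{0}-O(\overline\kappa^{*}(1+|S|))$ and $P^{1s}\le P^{0}+O(\overline\kappa^{*}(1+|S|))$.

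The basic estimate is
\[
0\le d_{\mathcal O_{1}}\bigl((\sigma(t))^{T}SD_{S}P,\mathcal B_{t}\bigr)
=\sum_{i=1}^{n}\overline\kappa_{i}\bigl((\sigma(t))^{T}SD_{S}P\bigr)_{i}^{+}
\le C\overline\kappa^{*}\,|SD_{S}P|,
\]
so the variational inequality for $P^{1b}$ differs from that for $P^{0}$ by a source of order $\overline\kappa^{*}|SD_{S}P|$, and similarly for (\ref{VI3}). Coupled with the linear-growth gradient bound $|SD_{S}P^{0}|\le C(1+|S|)$, which follows from the Lipschitz continuity of $\Gamma,\Psi$ via the $W^{2,1}_{p,\mathrm{loc}}$-regularity of Proposition \ref{americanexistence} and Sobolev embedding with $p>n+2$, the discrepancy between the source terms is pointwise of size $C\overline\kappa^{*}(1+|S|)$.

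To exploit this I introduce the barrier $\phi(t,S)\triangleq A\overline\kappa^{*}e^{\alpha(T-t)}\bigl(1+\sum_{i=1}^{n}S_{i}\bigr)$ and choose $A,\alpha$ large enough (depending only on $r,\sigma,C$) so that $-(\partial_{t}\phi+\mathcal L_{0}\phi)\ge C\overline\kappa^{*}(1+|S|)$, a direct computation since $\phi$ is linear in $S$. Setting $\widetilde P\triangleq P^{0}-\phi$, on $\{\widetilde P>\Gamma\}\subseteq\{P^{0}>\Gamma\}$ the differential inequality $-\partial_{t}\widetilde P-\mathcal L_{0}\widetilde P-\varrho+d_{\mathcal O_{1}}((\sigma(t))^{T}SD_{S}\widetilde P,\mathcal B_{t})\le 0$ then holds; on $\{\widetilde P\le\Gamma\}$ the subsolution branch of (\ref{VI2}) is automatic; and $\widetilde P(T,\cdot)=\Psi-\phi(T,\cdot)\le\Psi$. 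The A--B--P-type comparison Lemma \ref{comparison} yields $P^{1b}\ge\widetilde P$, giving the bid-price lower bound. For the ask price, set $\overline V\triangleq P^{0}+\phi$: on $\{P^{1b}>\Gamma\}$ (where $P^{0}\ge P^{1b}>\Gamma$), the same calculation shows $\overline V$ is a supersolution of the semi-linear PDE in (\ref{VI3}); on $\{P^{1b}=\Gamma\}$ the defining relation gives $P^{1s}=\Gamma\le P^{0}\le\overline V$; and $\overline V(T,\cdot)\ge\Psi$. Lemma \ref{comparison} yields $P^{1s}\le\overline V$, and the lower bound $P^{1s}\ge P^{1b}\ge P^{0}-\phi$ is immediate from the bid-price estimate just proved.

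The main obstacle is not the barrier construction, which essentially copies the European case, but the free-boundary mismatch between the continuation regions of $P^{0}$, $P^{1b}$, and $P^{1s}$: the exercise set $\{P^{1b}=\Gamma\}$ on which $P^{1s}$ is forced to equal $\Gamma$ itself varies with $\overline\kappa$. Region-by-region verification of the subsolution/supersolution inequalities, together with the terminal compatibility and the continuity of all three functions, is what allows Lemma \ref{comparison} to absorb these mismatches cleanly.
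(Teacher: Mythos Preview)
Your overall strategy---barrier plus A--B--P comparison, with the ask price handled by splitting into $\{P^{1b}=\Gamma\}$ and $\{P^{1b}>\Gamma\}$---is exactly the paper's approach, and your treatment of the free-boundary mismatch is essentially correct.

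The gap is in your justification of the gradient bound $|SD_{S}P^{0}|\le C(1+|S|)$. You claim it ``follows from the Lipschitz continuity of $\Gamma,\Psi$ via the $W^{2,1}_{p,\mathrm{loc}}$-regularity of Proposition~\ref{americanexistence} and Sobolev embedding with $p>n+2$.'' This does not work: $W^{2,1}_{p,\mathrm{loc}}$ regularity combined with Sobolev embedding yields only that $D_{S}P^{0}$ is continuous with a bound depending on each compact subset, not a global estimate with the stated linear growth in $|S|$. Without such a uniform bound the nonlinear term $d_{\mathcal O_{1}}\bigl((\sigma(t))^{T}SD_{S}(P^{0}-\phi),\mathcal B_{t}\bigr)$ cannot be absorbed by your barrier for large $|S|$, and the subsolution inequality on $\{\widetilde P>\Gamma\}$ fails.

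The paper obtains the sharper bound $|\partial_{S_{l}}P^{0}|\le K$ (the Lipschitz constant of $\Gamma,\Psi$) by an argument that is not available at the level of the variational inequality itself: one cannot differentiate \eqref{VI2} across the free boundary. Instead the paper returns to the penalty approximation $P^{0}_{m}$, differentiates the penalized PDE in $S_{l}$ to obtain a linear equation for $Q_{l}=\partial_{S_{l}}P^{0}_{m}$ with zeroth-order coefficient $-m\mathbf{1}_{\{P^{0}_{m}<\Gamma\}}\le 0$ and bounded inhomogeneity $m\,\partial_{S_{l}}\Gamma\,\mathbf{1}_{\{P^{0}_{m}<\Gamma\}}$, applies Lemma~\ref{comparison} to get $|Q_{l}|\le K$ uniformly in $m$, and then passes to the limit $m\to\infty$. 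This step is the essential new ingredient beyond the European case; once you have it, the rest of your argument goes through as written.
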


We leave its proof in the Appendix.

\appendix
\section*{Appendix}
\setcounter{section}{0}
\section{Some Results of Relevant PDEs}

In this Appendix, we provide some technical details on the results
of relevant pricing PDEs for the utility indifference prices. The
main references are Lieberman \cite{Lieberman} and Ladyzenskaja et
al \cite{Ladyzenskaja}.

We consider the following semi-linear parabolic PDE in a general form:
\be\label{semiPDE}
\left\{
 \begin{array}{ll}
 -\p_t P-L P=0\;\;&\mbox{in}\;\;{\cal Q}_T,\vspace{2mm}\\
 P=\Psi\;\;&\mbox{on}\;\;\p_p{\cal Q}_T,
\end{array}
\right.
\ee
where $\p_p{\cal Q}_T$ is the backward parabolic boundary of ${\cal Q}_T$, which is a bounded or unbounded backward parabolic domain, and  the differential operator
$$
 L P\triangleq\sum\limits_{i,j=1}^n a_{ij}\p_{S_iS_j}P +\sum\limits_{i=1}^n b_{i}\p_{S_i}P
 +c P+F(t,S,P,D_S P).
$$

\begin{assumption}\label{Assumption5}
The coefficient function $a$ is continuous in $\overline{{\cal
Q}}_T$, and there exists a positive constant $K$ such that
$$
 |a(t,S)|(1+|S|)^{-2}+|b(t,S)|(1+|S|)^{-1}+|c(t,S)|\leq K\;\;\mbox{for any}\;\;(t,S)\in{\cal Q}_T.
$$
\end{assumption}
\begin{assumption}\label{Assumption6}
There exists a positive constant $K$ such that
$$
 |F(t,S,u_1,v_1)-F(t,S,u_2,v_1)|\leq K(1+|S|)(|u_1-u_2|+|v_1-v_2|)
$$
for any $(t,S)\in{\cal Q}_T,\,u_1,\,u_2\in\mathbb{R},\,v_1,\,v_2\in\mathbb{R}^n$.
\end{assumption}

First, we present the existence and uniqueness result for the strong
solution of (\ref{semiPDE}):

\begin{lemma}\label{existence}
Let ${\cal Q}_T=[\,0,T)\times{\cal Q}$ where ${\cal Q}$ is a bounded
open domain in $\mathbb{R}^n$ with $C^2$ boundary. Suppose that
Assumptions \ref{Assumption5} and \ref{Assumption6} are satisfied,
and $a$ satisfies the uniformly positive definite condition in
${\cal Q}_T$, i.e.,
$$
 \sum\limits_{i,j=1}^na_{ij}(t,S)\xi_i\xi_j\geq |\xi|^2/K\;\;\mbox{for any}\;
 (t,S)\in{\cal Q}_T,\;\xi\in\mathbb{R}^n.
$$
Moreover, $F(\cdot,\cdot,0,0)\in L^p({\cal Q}_T),\,\Psi\in
W^{2,\,1}_p({\cal Q}_T)$ with some $p\geq 1$.

Then  (\ref{semiPDE}) has a unique strong solution $P\in
W^{2,\,1}_p({\cal Q}_T)$. Moreover, the following estimate holds
$$
 \|P\,\|_{W^{2,\,1}_p({\cal Q}_T)}\leq C\Big(\,\|F(\cdot,\cdot,0,0)\|_{L^p({\cal Q}_T)}
 +\|\Psi\|_{W^{2,\,1}_p({\cal Q}_T)}\,\Big),
$$
where the constant $C$ depends on $K,\,p,\,n,\,{\cal Q}_T$.
\end{lemma}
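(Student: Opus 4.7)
The idea is to reduce to a zero-data problem, invoke the classical $L^p$-theory for linear parabolic equations, and then close the nonlinearity by a contraction argument on short time slabs.

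First I would substitute $u := P - \Psi$, turning the problem into
$$-\partial_t u - L_0 u = \widetilde F(t,S,u,D_Su) \text{ in } {\cal Q}_T, \qquad u = 0 \text{ on } \partial_p {\cal Q}_T,$$
where $L_0$ denotes the linear part of $L$ (i.e., without $F$) and $\widetilde F(t,S,u,p) := F(t,S,u+\Psi,p+D_S\Psi) + \partial_t\Psi + L_0\Psi$. Since ${\cal Q}$ is bounded, $|S|$ is bounded on $\overline{\cal Q}$, so Assumption \ref{Assumption5} makes $a,b,c$ bounded and Assumption \ref{Assumption6} makes $\widetilde F$ Lipschitz in $(u,p)$ with some constant $K'$ depending on $K$ and ${\cal Q}$; moreover $\widetilde F(\cdot,0,0) \in L^p({\cal Q}_T)$ thanks to $F(\cdot,0,0) \in L^p$ and $\Psi \in W^{2,1}_p$.

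Next, the standard $L^p$-theory for uniformly parabolic equations with bounded coefficients on a bounded $C^2$ domain (Ladyzenskaja et al.\ \cite{Ladyzenskaja}, Chapter IV, or Lieberman \cite{Lieberman}, Chapter 7) supplies a unique strong solution $w \in W^{2,1}_p({\cal Q}_T)$ to the linear problem $-\partial_t w - L_0 w = g$, $w|_{\partial_p{\cal Q}_T} = 0$, together with the a priori estimate $\|w\|_{W^{2,1}_p} \leq C_0 \|g\|_{L^p}$. I would use this to define the solution map $\mathcal{T}: W^{2,1}_p \to W^{2,1}_p$ by $\mathcal{T}v := w$ with $g = \widetilde F(t,S,v,D_Sv)$; the map is well defined since the Lipschitz control gives $\|\widetilde F(\cdot,v,D_Sv)\|_{L^p} \leq \|\widetilde F(\cdot,0,0)\|_{L^p} + K'(\|v\|_{L^p}+\|D_Sv\|_{L^p})$.

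Then I would partition $[0,T]$ into finitely many short sub-intervals $[T_{k+1},T_k]$ of length $\delta$. On each slab, for $v_1,v_2$ with vanishing trace at $t = T_k$, the representation $w(t) = -\int_t^{T_k} \partial_s w\,ds$ combined with parabolic Sobolev interpolation yields $\|v_1-v_2\|_{W^{1,0}_p} \leq C\delta^\alpha \|v_1-v_2\|_{W^{2,1}_p}$ for some $\alpha > 0$, and hence $\|\mathcal{T}v_1 - \mathcal{T}v_2\|_{W^{2,1}_p} \leq C_0 K' C \delta^\alpha \|v_1-v_2\|_{W^{2,1}_p}$. Choosing $\delta$ small, $\mathcal{T}$ is a contraction, and Banach's fixed-point theorem delivers a unique local solution on $[T_{k+1},T_k]$. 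I would then propagate backward in time, using the trace of the solution at $T_{k+1}$ as the new terminal datum; finitely many iterations produce a global solution $u$, and $P := u + \Psi$ is the desired strong solution. The $W^{2,1}_p$-estimate in the statement follows by summing the local linear bounds. Uniqueness is automatic: the difference $w = P_1 - P_2$ of two strong solutions satisfies a linear parabolic equation with zero data and RHS pointwise controlled by $K'(|w|+|D_Sw|)$, and the short-slab linear estimate forces $w \equiv 0$.

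The main obstacle is the short-time contraction step: one must extract genuine smallness from the short slab to absorb the Lipschitz constant of $\widetilde F$ without losing a derivative, which hinges on the parabolic Sobolev interpolation relating $W^{1,0}_p$ and $W^{2,1}_p$ on slabs with vanishing terminal trace. A subsidiary technicality is ensuring that the trace of $u$ at each junction $T_{k+1}$ lies in a space compatible with the $W^{2,1}_p$-theory (the natural real-interpolation space between $W^{2,1}_p$ and $L^p$), which is covered by the standard parabolic trace theorem.
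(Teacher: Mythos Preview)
The paper does not actually prove this lemma; it is stated as a standard result, with the surrounding text pointing to Lieberman \cite{Lieberman} and Ladyzenskaja et al.\ \cite{Ladyzenskaja} as the references from which Lemmas \ref{existence}--\ref{comparison} are drawn. So there is no ``paper's own proof'' to compare against.

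Your proof sketch is a perfectly reasonable and standard route to the result: subtract the boundary datum, invoke the linear $L^p$ parabolic estimate on a bounded $C^2$ cylinder, and close the Lipschitz nonlinearity by a short-slab contraction iterated finitely many times. This is essentially the argument one finds, in various guises, in the cited references. Two small technical remarks. First, the statement allows $p\geq 1$, but the Calder\'on--Zygmund machinery behind the linear $W^{2,1}_p$ estimate normally requires $p>1$; you may want to flag this (the paper's applications only use large $p$ anyway). Second, for the contraction step it is cleaner to get smallness directly: for $w$ vanishing at $t=T_k$ one has $\|w\|_{L^p(\text{slab})}\leq \delta\,\|\partial_t w\|_{L^p(\text{slab})}$ by integrating in time, and then the elliptic interpolation $\|D_Sw\|_{L^p}\leq \varepsilon\|D_S^2 w\|_{L^p}+C_\varepsilon\|w\|_{L^p}$ absorbs the gradient term; this avoids having to appeal to a general ``parabolic Sobolev interpolation'' black box and makes the trace compatibility at the junctions automatic. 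The final a priori bound in the lemma then follows either by summing the slab estimates or, equivalently, by a Gronwall argument after substituting the fixed point back into the linear estimate.
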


Next, we present the interior $W^{2,\,1}_p$ estimate and $C^\alpha$
estimate with local boundary for PDE \eqref{semiPDE}, which is the
key tool to study the problem with low regularity on the boundary,
or the problem in unbounded domain.

\begin{lemma}\label{interiorlp}
Let ${\cal Q}_T$ be a bounded or unbounded backward parabolic
domain, and ${\cal Q}^*_T$ be a compact subset of ${\cal Q}_T$.
Suppose that Assumptions \ref{Assumption5} and \ref{Assumption6} are
satisfied, and $a$ satisfies the uniformly positive definite
condition in ${\cal Q}^*_T$. Moreover, $F(\cdot,\cdot,0,0)\in
L^p({\cal Q}^*_T)$ with some $p\geq 1$.

Then the following estimate holds
$$
 \|P\,\|_{W^{2,\,1}_p({\cal Q}^{**}_T)}\leq C\Big(\,\|F(\cdot,\cdot,0,0)\|_{L^p({\cal Q}^*_T)}
 +\|P\|_{L^p({\cal Q}^*_T)}\,\Big),
$$
where ${\cal Q}^{**}_T$ is a compact subset of ${\cal Q}^*_T$, and
$C$ depends on $K,\,p,\,n,\,{\cal Q}^{**}_T$, and ${\rm dist}({\cal
Q}_T,{\cal Q}^*_T),\,{\rm dist}({\cal Q}^*_T,{\cal Q}^{**}_T)$.
\end{lemma}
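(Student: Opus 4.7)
My plan is to view \eqref{semiPDE} as a linear parabolic equation with forcing $f(t,S):=F(t,S,P(t,S),D_SP(t,S))$ treated as a known source, apply a localized form of the classical linear interior $W^{2,1}_p$ estimate (essentially Lemma~\ref{existence} after extending by zero via a cut-off, see also Lieberman \cite{Lieberman} or Ladyzenskaja et al.\ \cite{Ladyzenskaja}), and then use Assumption~\ref{Assumption6} to dominate the forcing pointwise. This is the standard semilinear $L^p$ bootstrap.

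Concretely, first I would pick an intermediate domain $\mathcal{Q}^{**}_T\Subset\mathcal{Q}'_T\Subset\mathcal{Q}^*_T$ and a smooth spacetime cut-off $\eta$ equal to $1$ on $\mathcal{Q}^{**}_T$, supported in $\mathcal{Q}'_T$, with derivatives controlled by the relevant distances. Extending by zero, $v:=\eta P$ satisfies a linear Dirichlet problem on a bounded smooth domain with forcing
\begin{equation*}
g = \eta F(t,S,P,D_SP) - (\partial_t\eta)P - \sum_{i,j}a_{ij}\bigl(2D_i\eta\,D_jP + P\,D_{ij}\eta\bigr) - \sum_i b_i\,P\,D_i\eta.
\end{equation*}
Applying Lemma~\ref{existence} together with Assumption~\ref{Assumption5} and the uniform parabolicity of $a$ on $\mathcal{Q}^*_T$ gives $\|P\|_{W^{2,1}_p(\mathcal{Q}^{**}_T)}\leq\|v\|_{W^{2,1}_p}\leq C\|g\|_{L^p}$, and Assumption~\ref{Assumption6} together with compactness of $\mathcal{Q}^*_T$ (so $1+|S|$ is bounded there) yields
\begin{equation*}
\|P\|_{W^{2,1}_p(\mathcal{Q}^{**}_T)} \leq C\bigl(\|F(\cdot,\cdot,0,0)\|_{L^p(\mathcal{Q}^*_T)} + \|P\|_{L^p(\mathcal{Q}^*_T)} + \|D_SP\|_{L^p(\mathcal{Q}^*_T)}\bigr).
\end{equation*}

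To remove the residual $\|D_SP\|_{L^p(\mathcal{Q}^*_T)}$ on the right, I would iterate the argument along a finite chain $\mathcal{Q}^{**}_T=\mathcal{Q}^0_T\Subset\mathcal{Q}^1_T\Subset\cdots\Subset\mathcal{Q}^N_T=\mathcal{Q}^*_T$ with comparable successive distances, and on each scale apply the parabolic Gagliardo--Nirenberg interpolation $\|D_SP\|_{L^p(\mathcal{Q}^k_T)}\leq\varepsilon\|P\|_{W^{2,1}_p(\mathcal{Q}^k_T)} + C_\varepsilon\|P\|_{L^p(\mathcal{Q}^k_T)}$, absorbing the $\varepsilon$-term into the left side at each step. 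After finitely many iterations only $\|F(\cdot,\cdot,0,0)\|_{L^p(\mathcal{Q}^*_T)}$ and $\|P\|_{L^p(\mathcal{Q}^*_T)}$ remain on the right, with a constant of the prescribed form.

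The hard part will be this iteration: one must organize the chain of subdomains and the choice of $\varepsilon$ so that the accumulated constants $C_\varepsilon$ stay finite and depend only on $K$, $p$, $n$, $\operatorname{dist}(\mathcal{Q}_T,\mathcal{Q}^*_T)$ and $\operatorname{dist}(\mathcal{Q}^*_T,\mathcal{Q}^{**}_T)$. This is routine semilinear bookkeeping, but it is the one place where the nonlinear nature of the equation actually enters; the rest of the argument is pure linear parabolic theory combined with the Lipschitz structure of $F$ from Assumption~\ref{Assumption6}.
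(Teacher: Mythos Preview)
The paper does not actually prove this lemma; it is stated as a standard result with reference to Lieberman \cite{Lieberman} and Ladyzenskaja et al.\ \cite{Ladyzenskaja}, and then used as a black box in the proofs of Propositions \ref{Europeanconvergence}, \ref{americanexistence} and \ref{americanconvergence}. So there is no ``paper's own proof'' to compare against.

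Your plan is the standard derivation of this estimate and is essentially what one finds in those references: localize with a cut-off, apply the linear $W^{2,1}_p$ estimate on a bounded smooth domain to $\eta P$, use the Lipschitz structure of $F$ (Assumption \ref{Assumption6}) together with boundedness of $1+|S|$ on the compact $\mathcal{Q}^*_T$ to control the forcing, and then interpolate away the first-order term via a chain of nested domains. Two minor remarks. First, when you invoke Lemma \ref{existence} you are really using only the a priori estimate part of it (for a linear equation with $F\equiv f$ depending only on $(t,S)$), not the existence; that is fine, but the boundary regularity hypothesis of Lemma \ref{existence} is satisfied trivially because $v=\eta P$ has zero Cauchy data. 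Second, the iteration with interpolation is indeed the one genuinely nonlinear step, and your description of it is correct; one clean way to organize it is to prove the inequality with an arbitrary small $\varepsilon$ multiplying $\|P\|_{W^{2,1}_p}$ on the right for a single pair $\mathcal{Q}^{**}_T\Subset\mathcal{Q}'_T$, then use a standard covering or dyadic-annulus argument rather than a long finite chain, which keeps the dependence of the constant on the two distances transparent.
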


\begin{lemma}\label{calpha}
Let ${\cal Q}_T\triangleq [\,0,T)\times{\cal Q}$ where ${\cal Q}$ is
an open domain in $\mathbb{R}^n$ with continuous boundary, and
${\cal Q}^*$ be a compact subset of ${\cal Q}$. Suppose that
Assumptions \ref{Assumption5} and \ref{Assumption6} are satisfied,
and $a$ is uniformly positive definite in ${\cal Q}_T^*\triangleq
[\,0,T)\times{\cal Q}^*$. Moreover, $F(\cdot,\cdot,0,0)\in L^p
({\cal Q}_T^*),\,\Psi\in
C^{\alpha,\,\alpha/2}(\,\overline{{\cal Q}^*}\,)$ with some
$p>n+2,\,\alpha\in(0,1)$.

Then there exists a constant $\beta\in(0,\alpha)$ such that the
following estimate holds
$$
 \|P\,\|_{C^{\beta,\,\beta/2}(\overline{{\cal Q}^{**}_T})}\leq C\Big(\,\|F(\cdot,\cdot,0,0)\|_{L^p({\cal Q}^*_T)}
 +\|P\|_{L^\infty({\cal Q}^*_T)}+\|\Psi\|_{C^{\alpha,\,\alpha/2}(\,\overline{{\cal Q}^*}\;)}\,\Big),
$$
where ${\cal Q}^{**}$ is a compact subset of ${\cal Q}^*$, and
$C,\,\beta$ depend on $K,\,p,\,n$ and $\alpha,\,T,\,{\rm dist}({\cal
Q},{\cal Q}^*),\,{\rm dist}({\cal Q}^*,{\cal Q}^{**})$.
\end{lemma}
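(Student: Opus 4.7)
My plan is to combine the interior $W^{2,1}_p$ estimate of Lemma \ref{interiorlp} (which controls $P$ away from $t=T$) with a boundary H\"older estimate at the terminal time $t=T$ obtained via parabolic barriers, and then patch the two together on their overlap.

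First I would linearize by absorbing the nonlinearity of $F$ into lower-order coefficients. By Assumption \ref{Assumption6}, one may write
$$
F(t,S,P,D_SP)=F(t,S,0,0)+\tilde b(t,S)\cdot D_S P+\tilde c(t,S)P,
$$
with $\tilde b,\tilde c$ measurable and bounded on $Q^*_T$ (since $Q^*$ is compact and $K(1+|S|)$ is bounded there). The PDE then becomes linear with an $L^p$ source and coefficients still satisfying Assumption \ref{Assumption5}, so for every $\epsilon>0$ Lemma \ref{interiorlp} yields
$$
 \|P\|_{W^{2,1}_p([0,T-\epsilon]\times Q^{**})}\leq C_\epsilon\bigl(\|F(\cdot,\cdot,0,0)\|_{L^p({\cal Q}^*_T)}+\|P\|_{L^\infty({\cal Q}^*_T)}\bigr).
$$
Since $p>n+2$, the parabolic Sobolev embedding $W^{2,1}_p\hookrightarrow C^{1+\beta',(1+\beta')/2}$ with $\beta'=1-(n+2)/p$ gives a uniform $C^{\beta,\beta/2}$ bound on every set $[0,T-\epsilon]\times\overline{Q^{**}}$, but the constant $C_\epsilon$ blows up as $\epsilon\to0$.

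For the neighborhood of $t=T$, I fix $S_0\in\overline{Q^{**}}$ and a radius $r>0$ with $B_r(S_0)\subset Q^*$, and on the parabolic cylinder $C_{r,\delta}(S_0)=(T-\delta,T]\times B_r(S_0)$ construct the barrier
$$
 w^+(t,S)=\Psi(S_0)+M\bigl(\mu(T-t)+|S-S_0|^2\bigr)^{\alpha/2}+M'(T-t)^{1-(n+2)/(2p)},
$$
with $\mu>0$ chosen large enough (in terms of the ellipticity upper bound on $a$ and of $K$) so that $-\partial_t w^+-\sum a_{ij}\partial_{ij}w^+\geq 0$ after the Laplacian term is absorbed, and with $M,M'$ chosen large enough (in terms of $\|\Psi\|_{C^{\alpha,\alpha/2}(\overline{{\cal Q}^*})}$, $\|F(\cdot,\cdot,0,0)\|_{L^p}$, $\|P\|_{L^\infty}$, $K$, $\mu$, and the ellipticity lower bound) so that $w^+$ is a supersolution of the linearized PDE on $C_{r,\delta}$. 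The extra term $M'(T-t)^{1-(n+2)/(2p)}$ absorbs the $L^p$ source via the Aleksandrov-Bakel'man-Pucci estimate. On $\{t=T\}\cap \overline{C_{r,\delta}}$ the dominance $w^+\geq P=\Psi$ follows from $\Psi\in C^\alpha$, and on the lateral face $\{|S-S_0|=r\}$ it follows by taking $M,M'$ large enough to beat $\|P\|_{L^\infty({\cal Q}^*_T)}$. The parabolic maximum principle then gives $P\leq w^+$ inside $C_{r,\delta}$, and a symmetric $w^-$ yields $P\geq w^-$, producing
$$
|P(t,S)-\Psi(S_0)|\leq C\bigl(|S-S_0|^\alpha+(T-t)^{\min(\alpha/2,\,1-(n+2)/(2p))}\bigr)
$$
uniformly in $S_0\in\overline{Q^{**}}$.

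Patching the interior estimate (valid for $t\leq T-\delta$) with the boundary estimate (valid for $t\in(T-\delta,T]$) yields the desired $C^{\beta,\beta/2}$ bound on $\overline{Q^{**}_T}$ for any $\beta<\min(\beta',\alpha)$, which in particular lies in $(0,\alpha)$ as the lemma asserts. The main obstacle is verification of the barrier supersolution inequality: the Laplacian produces a singular contribution of order $\rho^{\alpha/2-1}$ that must be dominated by $-\partial_t w^+$ via the scaling parameter $\mu$, while the lower-order contribution $\tilde b\cdot D_S w^+\sim |S-S_0|\,\rho^{\alpha/2-1}$ must be absorbed into the leading positive terms by enlarging $M$, and the $L^p$ (rather than $L^\infty$) source requires the separate ABP term. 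This is a standard but delicate balance; the full argument is a localized adaptation to the terminal time of the boundary H\"older regularity techniques in Lieberman (Ch.~VII) and Ladyzenskaja et al.\ (Ch.~III), both of which the authors already cite.
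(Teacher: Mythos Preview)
The paper does not prove this lemma: it is presented in the Appendix as a known technical ingredient, with a blanket reference at the top of the section to the monographs of Lieberman and Ladyzenskaja--Solonnikov--Ural'ceva. There is thus no proof in the paper to compare against.

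Your outline is nonetheless the standard route to such a local H\"older estimate up to the terminal boundary, and is essentially how it is obtained in those references. Two points deserve tightening. First, the term $M'(T-t)^{1-(n+2)/(2p)}$ cannot function as part of a pointwise supersolution against an $L^p$ source that may be unbounded; the correct mechanism is to build the barrier $w^+_0=\Psi(S_0)+M\bigl(\mu(T-t)+|S-S_0|^2\bigr)^{\alpha/2}$ as a supersolution of the \emph{homogeneous} linearized operator, and then apply the parabolic A--B--P estimate to $P-w^+_0$ on $C_{r,\delta}$, which contributes exactly the extra term $Cr^{\,2-(n+2)/p}\|F(\cdot,\cdot,0,0)\|_{L^p}$ you want (your exponent is correct, only the description of the mechanism is muddled). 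Second, the patching must be done via the usual scaling dichotomy: for two points at mutual parabolic distance $d$ and at distance $\rho$ from $\{t=T\}$, invoke the boundary barrier when $d\gtrsim\rho$ and the interior estimate rescaled to a cylinder of size $\sim\rho$ when $d\lesssim\rho$; merely overlapping the regions $[0,T-\epsilon]$ and $(T-\delta,T]$ does not by itself remove the $\epsilon$-dependence of $C_\epsilon$.
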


We also give the A-B-P comparison principle for the variational
inequality (\ref{VI}) (see Friedman \cite{Friedman2} and Tso
\cite{Tao}), which is similar to the comparison principle for the
classical solution. Since the solutions for variational inequalities
are generally strong solutions rather than classical solutions, the
A-B-P comparison theory is more appropriate for variational
inequalities. On the other hand, PDEs can be regarded as a special
case of variational inequalities if we choose a small enough lower
obstacle $\Gamma$ such that the solution $P>\Gamma$ (see Theorem
\ref{connection}). Hence, the following lemma also applies to PDEs
that we considered.

\begin{lemma}\label{comparison}
For $l=1,2$, let $P^l$  be the strong solutions of the following variational inequalities, respectively
\be\label{VI}
\left\{
 \begin{array}{l}
 -\p_t P^l-L P^l=f^l\;\;\mbox{if}\;\;P^l>\Gamma^l\;\mbox{and}\;(t,x)\in{\cal Q}_T,\vspace{2mm}\\
 -\p_t P^l-L P^l\geq f^l\;\;\mbox{if}\;\;P^l=\Gamma^l\;\mbox{and}\;(t,x)\in{\cal Q}_T,\vspace{2mm}\\
 P^l=\Psi^l\;\;\mbox{on}\;\;\p_p{\cal Q}_T.
\end{array}
\right.
\ee
Suppose that Assumptions \ref{Assumption5} and \ref{Assumption6} are satisfied, and $a$ satisfies the nonnegative definite condition in ${\cal Q}_T$, i.e.,
$$
 \sum\limits_{i,j=1}^na_{ij}(t,S)\xi_i\xi_j\geq0\;\;\mbox{for any}\;\;\xi\in\mathbb{R}^n,\;
 (t,S)\in{\cal Q}_T.
$$
Moreover, $\Psi^l,\,\Gamma^l\in C(\,\overline{{\cal Q}}_T)$, $P^l\in W^{2,\,1}_{p,\,loc}({\cal Q}_T)\cap C(\,\overline{{\cal Q}}_T)$ with some $p>n+2$,
and there exist a positive constant $C$ and a positive integer $N$ such that
$$
 |P^1|+|P^2|\leq C(1+|S|^N).
$$
Then we have $P^1\geq P^2$ in $\overline{{\cal Q}}_T$ if $\,f^1\geq f^2,\;\Psi^1\geq \Psi^2,\;\Gamma^1\geq \Gamma^2\;\;\mbox{in}\;\;{\cal Q}_T$.
\end{lemma}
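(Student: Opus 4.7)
The plan is to set $w:=P^{2}-P^{1}$, derive a linear (possibly degenerate) parabolic differential inequality for $w$ on its positivity set, and then combine the parabolic A-B-P principle on bounded cylinders with a weighted barrier to handle the unboundedness of $\mathcal{Q}_T$ and the polynomial growth of $P^{1},P^{2}$.

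First I would exploit the variational inequalities. On the open set $\Omega:=\{(t,S)\in\mathcal{Q}_T:w(t,S)>0\}$ (open by continuity), one has $P^{2}>P^{1}\ge\Gamma^{1}\ge\Gamma^{2}$, so (\ref{VI}) forces the obstacle branch to be inactive for $P^{2}$ and $-\partial_{t}P^{2}-LP^{2}=f^{2}$ a.e.\ on $\Omega$. Since $-\partial_{t}P^{1}-LP^{1}\ge f^{1}\ge f^{2}$ everywhere, Assumption \ref{Assumption6} provides a linearization
\[
F(t,S,P^{2},D_{S}P^{2})-F(t,S,P^{1},D_{S}P^{1})=\tilde{c}(t,S)\,w+\tilde{b}(t,S)\cdot D_{S}w,\qquad |\tilde{b}|+|\tilde{c}|\le K(1+|S|),
\]
and subtraction yields
\[
-\partial_{t}w-\sum_{i,j}a_{ij}\,\partial_{S_{i}S_{j}}w-(b+\tilde{b})\cdot D_{S}w-(c+\tilde{c})\,w\le 0\quad\text{a.e. in }\Omega,
\]
with $w=\Psi^{2}-\Psi^{1}\le 0$ on the parabolic boundary $\partial_{p}\mathcal{Q}_T$. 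The leading part $a$ is only nonnegative definite, so the inequality is degenerate parabolic, but $w\in W^{2,1}_{p,\mathrm{loc}}$ with $p>n+2$ which is precisely what the A-B-P theory in Friedman \cite{Friedman2} and Tso \cite{Tao} requires.

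Second, to globalize from bounded cylinders to the unbounded $\mathcal{Q}_T$, I would introduce an auxiliary barrier $v(t,S)=e^{\lambda(T-t)}\psi(S)$, with $\psi$ smooth, positive and radial, growing strictly faster at infinity than the polynomial bound $C(1+|S|^{N})$ available for $|w|$, and chosen together with $\lambda$ large so that
\[
-\partial_{t}v-\sum_{i,j}a_{ij}\,\partial_{S_{i}S_{j}}v-(b+\tilde{b})\cdot D_{S}v-(c+\tilde{c})\,v\ge \eta(S)>0\quad\text{in }\mathcal{Q}_T.
\]
A natural candidate is $\psi(S)=(1+|S|^{2})^{M}\exp\bigl(\mu(1+|S|^{2})^{1/2}\bigr)$ with $2M>N$ and $\mu>0$; using Assumption \ref{Assumption5} (the quadratic bound on $a$, the linear bound on $b$ and boundedness of $c$), one verifies that the quadratic growth of $a_{ij}\partial_{S_{i}S_{j}}v$ and the linear growth of $\tilde{c}v$ can both be absorbed by choosing $\mu$ large first and then $\lambda$ large.

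Third, for each $\varepsilon>0$ set $w_{\varepsilon}:=w-\varepsilon v$. By construction $w/v\to 0$ as $|S|\to\infty$ uniformly in $t$, so $\{w_{\varepsilon}>0\}\subset B_{R_{\varepsilon}}\times[0,T)$ for some finite $R_{\varepsilon}$; on the parabolic boundary of this cylinder one has $w_{\varepsilon}\le 0$ (either because $\varepsilon v\ge w$ on $\{|S|=R_{\varepsilon}\}$ or because $w\le 0$ on $\partial_{p}\mathcal{Q}_T$). On $\{w_{\varepsilon}>0\}\subset\Omega$ the two inequalities above combine to give $-\partial_{t}w_{\varepsilon}-(\cdots)w_{\varepsilon}\le -\varepsilon\eta<0$, and since the coefficients are bounded on the compact cylinder, the parabolic A-B-P estimate forces $w_{\varepsilon}\le 0$ throughout. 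Letting $\varepsilon\downarrow 0$ yields $w\le 0$ in $\overline{\mathcal{Q}}_T$, i.e.\ $P^{1}\ge P^{2}$.

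The main obstacle will be the barrier construction in the second step: the linear-in-$|S|$ Lipschitz coefficient permitted by Assumption \ref{Assumption6} creates a $\tilde{c}\,v\sim|S|v$ contribution that no polynomial-only $\psi$ together with a constant $\lambda$ can dominate, so the exponential-of-radius factor in $\psi$ is essential; simultaneously this factor must be mild enough that the quadratically growing principal part $\sum a_{ij}\partial_{S_{i}S_{j}}v$ remains comparable to $v$. After this weight is set up, steps one and three are a routine application of the A-B-P machinery invoked in the statement.
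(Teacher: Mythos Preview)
The paper does not actually prove this lemma: it is stated as a known A-B-P comparison principle with references to Friedman and Tso, and no argument is given. So there is nothing in the paper to compare your proof against; your plan has to be judged on its own merits.

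Your overall architecture is the standard one and is sound: linearize the difference of the nonlinearities on the open set $\{w>0\}$ (where the obstacle for $P^{2}$ is automatically inactive because $\Gamma^{1}\ge\Gamma^{2}$), reduce to a linear degenerate parabolic inequality for $w$ with $W^{2,1}_{p}$ regularity and $p>n+2$, then couple a growth-controlling barrier with the parabolic A-B-P estimate on bounded cylinders and pass to the limit. Steps one and three are fine.

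The gap is in step two. With your candidate $\psi(S)=(1+|S|^{2})^{M}\exp(\mu(1+|S|^{2})^{1/2})$, the exponential factor forces $|D_{S}^{2}\psi|\sim\mu^{2}\psi$ for large $|S|$, so under Assumption~\ref{Assumption5} (which allows $|a|\lesssim(1+|S|)^{2}$) one gets $\sum a_{ij}\partial_{S_{i}S_{j}}v\sim\mu^{2}|S|^{2}v$. This term cannot be dominated by $\lambda v$ for any fixed $\lambda$, and increasing $\mu$ only makes it worse; your assertion that ``the quadratic growth of $a_{ij}\partial_{S_{i}S_{j}}v$ \dots\ can be absorbed by choosing $\mu$ large first and then $\lambda$ large'' does not survive the computation. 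In short, the exponential weight that you introduce to kill the $\tilde{c}\,v\sim|S|v$ term creates a strictly larger $|S|^{2}v$ term in the principal part, so the barrier is not a supersolution.

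Two remarks that may help you repair the argument. First, in every place where the paper invokes Lemma~\ref{comparison}, the nonlinearity $F$ depends only on $D_{S}P$ (it is $-d_{\mathcal{O}}((\sigma(t))^{T}SD_{S}P,\mathcal{B}_{t})$ or a variant), so after linearization $\tilde{c}\equiv 0$ and only $\tilde{b}$ has linear growth. In that situation the plain polynomial barrier $v=e^{\lambda(T-t)}(1+|S|^{2})^{M}$ with $2M>N$ already works: all of $a\,D^{2}v$, $(b+\tilde{b})\cdot D_{S}v$ and $c\,v$ are then bounded by a constant multiple of $v$, and a single large $\lambda$ suffices. Second, if you really want the lemma at the stated level of generality, the natural device is a logarithmic change of variables (in the paper's applications $a_{ij}(t,S)=\tfrac{1}{2}\hat{a}_{ij}(t)S_{i}S_{j}$, and $x_{i}=\log S_{i}$ turns the operator into one with bounded coefficients), or a time-dependent weight of the form $\exp\bigl(\mu(T-t)(1+|S|^{2})^{1/2}\bigr)$ so that $-\partial_{t}v$ itself grows like $|S|v$; but either route requires more structure or more care than your current barrier provides.
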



\section{Proofs of Propositions}

\noindent{\bf Proof of Proposition \ref{Europeanconvergence}.} It is
sufficient to prove the result for $P^{1b}$, and the proof for
$P^{1s}$ is similar. Without loss of generality, we suppose that
$\overline\kappa^*\leq1$.

Note that $d_{\mathcal{O}_1}\,(v,{\cal B}_t)=\overline\kappa^Tv^+$ in this case. Then $P^{1b}$ satisfies
\be\label{europeanconvergenceequation}
 \left\{
 \begin{array}{l}
 -\p_t P^{1b}-{\cal L}_0 P^{1b}=\varrho(t)-(\sigma(t)\overline{\kappa})^T(SD_SP^{1b})^+\;\;\mbox{in}\;\;{\cal N}_T;
 \vspace{2mm}\\
 P^{1b}(T,S)=\Psi(S).
 \end{array}
 \right.
\ee
 By Proposition \ref{pdeinequality}, we have $P^{1b}\leq P^0$. Next, we prove that there exist  positive constants $C_1,\,C_2$ independent of $\overline{\kappa}$ such that
\be\label{convergecelowerbound}
 P^{1b}\geq P^0-\overline{\kappa}^*W,\quad W\triangleq C_1e^{C_2(T-t)}(1+|S|).
\ee

Indeed, since $Q_l=\p_{S_l} P^0,\,l=1,\cdot\cdot\cdot,n$ satisfies
\bee
 \left\{
 \begin{array}{l}
 -\p_t Q_l-\widetilde{{\cal L}}\,^l Q_l=0\;\;\mbox{in}\;{\cal N}_T;
 \vspace{2mm}\\
 Q_l(T,S)=\p_{S_l}\Psi(S),
 \end{array}
 \right.
\eee
where
$$
 \widetilde{{\cal L}}\,^l \triangleq
 \sum_{i,\,j=1}^n{1\over 2}\,a_{ij}(t)\,S_iS_j\p_{S_iS_j}
 +\sum_{i=1}^n\,\Big[\,r(t)+a_{li}(t)\,\Big]\,S_i\p_{S_i}.
$$
By applying Lemma \ref{comparison}, we can deduce that
$|\p_{S_l}P^0|\leq K$, where $K$ is the Lipschitz constant of
$\Psi$.

Moreover, it is not difficult to check that
$$
 S_i\p_{S_i} W=C_1e^{C_2(T-t)}{S_i^2\over |S|},\quad |S_i\p_{S_i} W|\leq W,\quad
 |S_iS_j\p_{S_iS_j} W|\leq W.
$$
Hence, we have \bee
 &\!\!\!\!\!\!\!\!&-\p_t (P^0-\overline{\kappa}^*W)-{\cal L}_0 (P^0-\overline{\kappa}^*W)-\varrho(t)+(\sigma(t)\overline{\kappa})^T(SD_S(P^0-\overline{\kappa}^*W))^+
 \\[2mm]
 &\!\!\!\!=\!\!\!\!&(-\p_t P^0-{\cal L}_0 P^0)+\overline{\kappa}^*(\p_t W+{\cal L}_0 W)-\varrho(t)+(\sigma(t)\overline{\kappa})^T(SD_SP^0-\overline{\kappa}^*SD_SW)^+\\[2mm]
 &\!\!\!\!\leq\!\!\!\!&-\overline{\kappa}^*(C_2W-CW)+C\overline{\kappa}^*|SD_SP^0|\leq0
\eee provided $C_1,\,C_2$ are large enough. Note that the constants
are independent of $\overline\kappa$.

By applying Lemma \ref{comparison} again, we have proved
\eqref{convergecelowerbound}. By repeating the same argument, we can
obtain $P^{1b}\leq P^0+\overline{\kappa}^*W$. Hence, if we denote
$\Delta P=P^{1b}-P^0$, we have showed that
\be\label{maximumconvergence}
 |\Delta P|\leq \overline\kappa^*\,C_1e^{C_2(T-t)}(1+|S|).
\ee

From PDE~\eqref{europeanconvergenceequation} and Lemma \ref{interiorlp}, we have the following estimate
\begin{equation}\label{grandconvergence}
 \|P^{1b}\,\|_{W^{2,\,1}_p({\cal N}_T^*)}\leq C\Big(\,\|\varrho\|_{L^p({\cal N}_T)}
 +\|P^{1b}\|_{L^p({\cal N}_T)}\,\Big)\leq \overline{C}
\end{equation}
for any compact subset ${\cal N}_T^*$ of ${\cal N}_T$, where the
constants $C,\,\overline C$ depend on ${\cal N}_T^*$, but are
independent of $\overline\kappa$.

It is clear that $\Delta P$ satisfies
$$
 -\p_t \Delta P-{\cal L}_0 \Delta P=-(\sigma(t)\overline{\kappa})^T(SD_SP^{1b})^+\;\;\mbox{in}\;\;{\cal N}_T.
$$
By applying Lemma \ref{interiorlp} again, we deduce that
$$
 \|\Delta P\,\|_{W^{2,\,1}_p({\cal Q}^*_T)}\leq C\Big(\,\|(\sigma(t)\overline{\kappa})^T(SD_SP^{1b})^+\|_{L^p({\cal Q}^*_T)}
 +\|\Delta P\|_{L^p({\cal Q}^*_T)}\,\Big)\leq \overline{C}\,\overline\kappa^*,
$$ where we have used~\eqref{maximumconvergence} and
\eqref{grandconvergence}.\medskip\\

\noindent{\bf Proof of Proposition \ref{americanexistence}.} First,
we prove the existence of the strong solution $P^b$ for the
variational inequality (\ref{VI2}).


We use the penalty method to approximate the variational
inequality~\eqref{VI2}. \be\label{penalty}
 \left\{
 \begin{array}{l}
 -\p_t P_m-{\cal L}_0 P_m=\varrho(t)-d_{\mathcal{O}}\,((\sigma(t))^T SD_S P_m,{\cal B}_t)
 +m(P_m-\Gamma)^-\;\mbox{in}\;{\cal N}_T;
 \vspace{2mm} \\
 P_m(T,S)=\Psi(S).
 \end{array}
 \right.
\ee

Since the above problem lies in unbounded domain, and the regularity
of the terminal value $\Psi$ is not enough, we need to smooth $\Psi$
and use the following problem in bounded domain to approximate the
above PDE in unbounded domain.
 \be\label{boundedproblem}
 \left\{
 \begin{array}{l}
 -\p_t P_{k,\,m}-{\cal L}_0 P_{k,\,m}=\varrho(t)-d_{\mathcal{O}}\,((\sigma(t))^T SD_S P_{k,\,m},{\cal B}_t)
 +m(P_{k,\,m}-\Gamma)^-\;\mbox{in}\;{\cal N}^k_T;
 \vspace{2mm} \\
 P_{k,\,m}(t,S)=\Psi_k(S)\;\;\mbox{on}\;\p_p {\cal N}^k_T,
 \end{array}
 \right.
\ee where ${\cal N}^k_T\triangleq [\,0,T)\times {\cal N}^k$, and
${\cal N}^k\triangleq \{S\in \mathbb{R}^n:1/k\leq S_i\leq
k,\,i=1,\cdot\cdot\cdot,n\},\,k\in\mathbb{N}_+$. $\Psi_k$ is the
smooth function of $\Psi$, which is defined as follows.  Denote by
$\eta$ the standard mollifier, then $\eta_k(S)=k^n\eta(kS)$, and
$$
\Psi_k(S)=\int_{\mathbb{R}^n}\Psi(y)\,\eta_k(S-y)\,dy+{K\over
k}\;\;\mbox{with}\;\; \Psi(y)=0\;\mbox{if}\;y\notin{\cal N}.
$$

It is clear that $\Psi_k\in C^\infty(\overline{{\cal Q}})$ for any
$k\in \mathbb{N}_+$. It is not difficult to deduce that $\Psi_k$
converges to $\Psi$ in $C^{\alpha}(\overline{{\cal N}^R})$ for any
$\alpha\in(0,1),\,R\in\mathbb{N}_+$ as $k\rightarrow\infty$, and
there exists a constant $C$ independent of $m,\,k$ and $\Gamma$ such
that \be\label{uniformestimate}
 |\Psi_k|\leq C(1+|S|),\qquad |D_S\Psi_k| \leq C,\qquad \Psi_k\geq\Gamma.
\ee

Since the nonlinear term $-d_{\cal O}(\sigma(t)v,{\cal
B}_t)+m(u-\Gamma)^-$ is Lipschitz continuous with respect to $u,v$,
and $-d_{\cal O}(0,{\cal B}_t)+m(0-\Gamma)^-$ is bounded in ${\cal
N}^k_T$, we deduce that problem~\eqref{boundedproblem} has a unique
strong solution $P_{k,\,m}\in W^{2,\,1}_p({\cal N}^k_T)\cap
C(\,\overline{{\cal N}^k_T}\,)$ by Lemma \ref{existence}.

Next, we prove some estimates for $P_{k,m}$ which are independent of
$k,\,m$, in order to achieve some proper convergence results. First,
we show the upper bound of $P_{k,\,m}$. More precisely, there exist
constants $C_1,\,C_2$ independent of $m,\,k$ such that
\be\label{upperbound}
 P_{k,\,m}\leq W\triangleq C_1e^{C_2(T-t)}(1+|S|).
\ee

Indeed, we first choose $C_1$ and $C_2$ large enough such that
$W\geq \Gamma$, then we have
$$-\p_t W-{\cal L}_0
W-\varrho(t)+d_{\mathcal{O}}\,((\sigma(t))^T SD_S W,{\cal
B}_t)-m(W-\Gamma)^- \geq C_2W-CW-C\geq0. $$ Moreover,
by~\eqref{uniformestimate}, it is clear that
$$
 W(t,S)\geq \Psi_k(S)=P_{k,\,m}(t,S)\;\mbox{on}\;\p_p {\cal N}^k_T
$$
provided $C_1,\,C_2$ are large enough. Then Lemma \ref{comparison}
implies (\ref{upperbound}).

Next, we prove the lower bound of $P_{k,\,m}$, i.e. there exist
constants $C_3,\,C_4$ independent of $m,\,k$ and $\Gamma$ such that
\be\label{lowerbound}
 P_{k,\,m}\geq w\triangleq -C_3e^{C_4(T-t)}(1+|S|).
\ee Indeed, we can choose $C_3$ and $C_4$ large enough such that
\bee
 &&-\p_t w-{\cal L}_0 w-\varrho(t)+d_{\mathcal{O}}\,((\sigma(t))^T SD_S w,{\cal B}_t)-m(w-\Gamma)^-
 \leq C_4w-Cw+C+C|(\sigma(t))^T SD_S w|
 \leq0;\\[2mm]
 &&w(t,S)\leq \Psi_k(S)=P_{k,\,m}(t,S)\;\mbox{on}\;\p_p {\cal N}^k_T.
\eee
Then Lemma \ref{comparison} implies (\ref{lowerbound}).

By applying Lemma \ref{interiorlp} and \ref{calpha} to
PDE~\eqref{boundedproblem} in the domain ${\cal N}^R_T$ with
$k>R,\,R\in \mathbb{N}_+$, we derive that there exists a constant
$C$ depending on $m,\,R$, but is independent of $k$ such that \bee
 &&\|P_{k,\,m}\|_{W^{2,\,1}_p({\cal N}^R_T\cap\{t\leq T-1/R\})}
 +\|P_{k,\,m}\|_{C^{\beta,\,\beta/2}(\overline{{\cal N}^R_T})}
 \\[2mm]
&\leq& C_R\Big(\|P_{k,\,m}\|_{L^\infty({\cal N}^R_T)}
 +\|d_{\mathcal{O}}\,(0,{\cal B}_t)\|_{L^p({\cal N}^R_T)}
 +\|m(P_{k,\,m}-\Gamma)^-\|_{L^p({\cal N}^R_T)}
 +\|\Psi_k\|_{C^{\alpha}(\overline{{\cal N}^R})}+1\Big)\leq C,
\eee where we have used~\eqref{upperbound} and \eqref{lowerbound}.

Due to the above estimates, we can show that the solution of
\eqref{boundedproblem} approximates the solution to \eqref{penalty}
by the method in \cite{Yang,Yang1}. More precisely, there exists a
function $P_m$ such that some subsequence of
$\{P_{k,\,m}\}_{k=1}^\infty$ converges to $P_m$ weakly in
$W^{2,\,1}_p({\cal N}^R_T\cap\{t\leq T-1/R\})$ and strongly in
$C(\,\overline{{\cal N}^R_T}\,)$ for any $m,R\in\mathbb{N}_+$, and
$P_m$ is the strong solution of \eqref{penalty}. Moreover, by taking
$k\rightarrow\infty$ in \eqref{upperbound} and \eqref{lowerbound},
we have \be\label{boundmaximum}
 -C_3e^{C_4(T-t)}(1+|S|)\leq P_m\leq C_1e^{C_2(T-t)}(1+|S|)\;\;\mbox{in}\;\overline{{\cal N}_T},
\ee where $C_1,\,C_2,\,C_3,\,C_4$ are independent of $m$, and
$C_3,\,C_4$ are independent of $\Gamma$.

In order to show that the solution of \eqref{penalty} approximates
the solution of the variational inequality \eqref{VI2}, we need to
prove another lower bound of $P_m$ such that \be\label{lowerbound2}
 P_m\geq \Gamma -{W\over m},\quad W \triangleq C_1e^{C_2(T-t)}(1+|S|^{N+2})
\ee provided $m$ is large enough, and where $C_1,\,C_2$ are
constants independent of $m$. Indeed, denote $w^*=\Gamma_1 -W/ m$,
and we can check that \bee &&-\p_t w^*-{\cal L}_0
w^*-\varrho(t)+d_{\mathcal{O}}\,((\sigma(t))^T SD_S w^*,{\cal
B}_t)-m(w^*-\Gamma)^-
\\[2mm]
&\leq& (-\p_t \Gamma_1-{\cal L}_0 \Gamma_1)-{C_2W-CW\over
m}+C+C\,|(\sigma(t))^T SD_S\Gamma_1|-W\leq C(1+|S|^{N+2})-W\leq0
\eee provided $C_1,\,C_2$ are large enough. Then Lemma
\ref{comparison} implies $P_m\geq \Gamma_1 -W/ m$. By repeating the
same argument, we can deduce that $P_{k,\,m}\geq \Gamma_2 -W/ m$, so
\eqref{lowerbound2} is obvious.

By applying Lemma \ref{interiorlp} and \ref{calpha}, we derive that
there exists a constant $C_R$ independent of $m$ such that \bee
 &&\|P_m\|_{W^{2,\,1}_p({\cal N}^R_T\cap\{t\leq T-1/R\})}
 +\|P_m\|_{C^{\beta,\,\beta/2}(\overline{{\cal N}^R_T})}
 \\[2mm]
 &\leq&C_R\Big(\|P_m\|_{L^\infty({\cal N}^R_T)}
 +\|d_{\mathcal{O}}\,(0,{\cal B}_t)\|_{L^p({\cal N}^R_T)}+\|m(P_m-\Gamma)^-\|_{L^p({\cal N}^R_T)}
 +\|\Psi\|_{C^{\alpha}(\overline{{\cal N}^R})}+1\Big)\leq C,
\eee where we have used~\eqref{boundmaximum} and
\eqref{lowerbound2}, and the constant $C$ is independent of $m$.

Thanks to the above estimates, we can show that the solution of
 \eqref{penalty} approximates the solution of \eqref{VI2} by the method in
\cite{Friedman2,Yang}. More precisely, there exists a function $P$
such that some subsequence of $\{P_m\}_{m=1}^\infty$ converges to
$P$ weakly in $W^{2,\,1}_p({\cal N}^R_T\cap\{t\leq T-1/R\})$ and
strongly in $C(\,\overline{{\cal N}^R_T}\,)$ for any
$R\in\mathbb{N}_+$, and $P$ is the strong solution of
Problem~\eqref{VI2}.  Therefore, we have proved the existence of the
strong solution $P^b$ for the variational inequality (\ref{VI2}).

Since $R$ is arbitrary, we deduce that $P^b\in
W^{2,\,1}_{p,\,loc}({\cal N}_T)\cap C(\overline{{\cal N}_T})$.
Moreover, by taking $m\rightarrow\infty$ in \eqref{boundmaximum}, we
have \be\label{bound}
 -C_3e^{C_4(T-t)}(1+|S|)\leq P^b\leq C_1e^{C_2(T-t)}(1+|S|)\;\;\mbox{in}\;\overline{{\cal N}_T},
\ee where $C_3,\,C_4$ are independent of $\Gamma$. The uniqueness of
the strong solution $P^b$ for the variational inequality (\ref{VI2})
follows from Lemma \ref{comparison}. The proof of the results
for $P^s$ is similar.\\

\noindent{\bf Proof of Proposition \ref{americanconvergence}.} The
proof is similar to that of Proposition  \ref{Europeanconvergence}.
First, we prove  the result for $P^{1b}$. Without loss of
generality, we suppose that $\overline\kappa^*\leq1$.

We first prove that \be\label{firstderivative2}
 |\p_{S_l} P^0|\leq K\;\;\;\mbox{for any}\;\;l=1,\cdot\cdot\cdot,n,
\ee where $K$ is the Lipschitz constant of $\Gamma$ and $\Psi$.

We must start from the penalty problem~\eqref{penalty} to prove the
result because the  regularity of the solution for the variational
inequality~\eqref{VI2} is not enough. Note that
$d_{\mathcal{O}_0}\,(v,{\cal B}_t)=0$ in this case, then the penalty
problem is \bee
 \left\{
 \begin{array}{l}
 -\p_t P^0_m-{\cal L}_0 P^0_m=\varrho(t)+m(P^0_m-\Gamma)^-\;;
 \vspace{2mm} \\
 P^0_m(T,S)=\Psi(S).
 \end{array}
 \right.
\eee

Denote $Q_l=\p_{S_l} P^0_m,\,l=1,\cdot\cdot\cdot,n$ which satisfies
\bee
 \left\{
 \begin{array}{l}
 -\p_t Q_l-\widehat{\cal L}\,^l Q_l=m\p_{S_l}\Gamma \,I_{\{P^0_m<\Gamma\}}\;\;\mbox{in}\;{\cal N}_T;
 \vspace{2mm}\\
 Q_l(T,S)=\p_{S_l}\Psi(S),
 \end{array}
 \right.
\eee
where
$$
 \widehat{{\cal L}}\,^l \triangleq
 \sum_{i,\,j=1}^n{1\over 2}\,a_{ij}(t)\,S_iS_j\p_{S_iS_j}
 +\sum_{i=1}^n\,\Big[\,r(t)+a_{li}(t)\,\Big]\,S_i\p_{S_i}
 -mI_{\{P^0_m<\Gamma\}}.
$$

By applying Lemma \ref{comparison}, we have
$$
 |\p_{S_l} P^0_m|\leq K\;\;\;\mbox{for any}\;\;l=1,\cdot\cdot\cdot,n.
$$
By letting $m\rightarrow\infty$, we deduce~\eqref{firstderivative2}.

Note that $d_{\mathcal{O}_1}\,(v,{\cal B}_t)=\overline\kappa^Tv^+$ in this case. Then $P^{1b}$ satisfies
\be\label{americanconvergenceequation}
 \left\{
 \begin{array}{l}
 -\p_t P^{1b}-{\cal L}_0 P^{1b}=\varrho(t)-(\sigma(t)\overline{\kappa})^T(SD_SP^{1b})^+\;\;\mbox{if}\;\;P^{1b}>\Gamma;
 \vspace{2mm}\\
 -\p_t P^{1b}-{\cal L}_0 P^{1b}\geq \varrho(t)-(\sigma(t)\overline{\kappa})^T(SD_SP^{1b})^+\;\;\mbox{if}\;\;P^{1b}=\Gamma;
 \vspace{2mm}\\
 P^{1b}(T,S)=\Psi(S),
 \end{array}
 \right.
\ee

From Proposition \ref{viinequality}, we have $P^{1b}\leq P^0$. Next,
we prove that there exist  positive constants $C_1,\,C_2$
independent of $\overline{\kappa}$ such that
\be\label{convergecelowerbound2}
 P^{1b}\geq w^*=P^0-\overline{\kappa}^*W,\quad W\triangleq C_1e^{C_2(T-t)}(1+|S|).
\ee

Indeed, repeating the same argument as in the proof of Proposition
\ref{Europeanconvergence}, we have \bee
 &\!\!\!\!\!\!\!\!&\widetilde{\varrho}(t)\triangleq\Big(\,-\p_t w^*-{\cal L}_0 w^*
 +(\sigma(t)\overline{\kappa})^T(SD_Sw^*)^+\,\Big)-\Big(\,-\p_t P^0-{\cal L}_0 P^0\,\Big)
 +\varrho(t)
 \\[2mm]
 &\!\!\!\!=\!\!\!\!&\overline{\kappa}^*(\p_t W+{\cal L}_0 W)+(\sigma(t)\overline{\kappa})^T(SD_SP^0-\overline{\kappa}^*SD_SW)^++\varrho(t)
 \\[2mm]
 &\!\!\!\!\leq\!\!\!\!&-\overline{\kappa}^*(C_2W-CW)+C\overline{\kappa}^*|SD_SP^0|+\varrho(t)
 \leq \varrho(t)
\eee provided $C_1,\,C_2$ are large enough, where we have
used~\eqref{firstderivative2}.

Hence, the variational inequality for $P^0$ implies that $w^*$
satisfies \bee
 \left\{
 \begin{array}{l}
 -\p_t w^*-{\cal L}_0 w^*+(\sigma(t)\overline{\kappa})^T(SD_Sw^*)^+=\widetilde{\varrho}(t)\;\;
 \mbox{if}\;\;w^*>\Gamma-\overline{\kappa}^*W;
 \vspace{2mm} \\
 -\p_t w^*-{\cal L}_0 w^*+(\sigma(t)\overline{\kappa})^T(SD_Sw^*)^+\geq \widetilde{\varrho}(t)\;\;
 \mbox{if}\;\;w^*=\Gamma-\overline{\kappa}^*W;
 \vspace{2mm} \\
 w^*(T,S)=\Psi(S)-\overline{\kappa}^*W(T,S).
 \end{array}
 \right.
\eee

By applying Lemma \ref{comparison} again, we have
\eqref{convergecelowerbound2}, so the result for $P^{1b}$ has been
proved.

Next, we prove the result for $P^{1s}$. Since $P^{1s}=\Gamma=P^{1b}$
in the set $\{P^{1b}=\Gamma\}$, the result for $P^{1b}$ implies that
$$
|P^{1s}-P^0|\leq C\overline\kappa^*\;\;\mbox{in}\;\;\{P^{1b}=\Gamma\}.
$$
Then it is sufficient to prove that the above inequality holds in
the set $\{P^{1b}>\Gamma\}$.

From Proposition \ref{viinequality}, we have $P^{1b}\leq P^0$, so
that
$$
\{P^{1b}>\Gamma\}\subset\{P^0>\Gamma\}.
$$

Hence, in the set $\{P^{1b}>\Gamma\}$, $P^{1s}$ and $P^0$ satisfy
\bee
 \left\{
 \begin{array}{l}
 -\p_t P^{1s}-{\cal L}_0 P^{1s}=\varrho(t)+(\sigma(t)\overline{\kappa})^T(SD_SP^{1s})^-\;\;\mbox{in}\;\{P^{1b}>\Gamma\};
 \vspace{2mm}\\
 -\p_t P^{0}-{\cal L}_0 P^{0}=\varrho(t)\;\;\mbox{in}\;\{P^{1b}>\Gamma\};
 \vspace{2mm}\\
 | P^{1s}-P^{0}|=| P^{1b}-P^{0}|\leq C\overline\kappa^*(1+|S|)\;\;\mbox{on}\;\p_p\{P^{1b}>\Gamma\}.
 \end{array}
 \right.
\eee Repeating the same argument as above, we have proved the result
for
$P^{1s}$.\\

To finish the Appendix, we give the following connection between the
variational inequality \eqref{VI2} and PDEs
(\ref{Eouropeanequation}), which is a direct consequence of
(\ref{bound}).

\begin{theorem} \label{connection}
Suppose that Assumptions \ref{Assumption1}, \ref{Assumption2},
\ref{Assumption3} are satisfied. Then there exist some functions
$\Gamma\in C^\infty({\cal N}_T)$ satisfying Assumption
\ref{Assumption4}, such that \eqref{VI2} is equivalent to
(\ref{Eouropeanequation}).
\end{theorem}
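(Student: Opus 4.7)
\textbf{Proof plan for Theorem \ref{connection}.} The strategy is to choose the obstacle $\Gamma$ so deeply negative that the solution of the variational inequality (\ref{VI2}) can never touch it, forcing (\ref{VI2}) to collapse to its unconstrained form (\ref{Eouropeanequation}). The key ingredient is the lower bound derived as (\ref{bound}) in the proof of Proposition \ref{americanexistence}: any strong solution of (\ref{VI2}) satisfies
$$P^b(t,S;\Gamma,\Psi)\geq -C_3 e^{C_4(T-t)}(1+|S|)\qquad\mbox{on}\;\overline{{\cal N}_T},$$
where the constants $C_3,C_4$ depend only on the coefficients of ${\cal L}_0$, on $\Psi$, and on $d_{\cal O}$, but are crucially \emph{independent} of $\Gamma$. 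This is exactly the point at which the earlier penalty argument paid off.

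Concretely, I would define the candidate obstacle
$$\Gamma^{*}(t,S)\triangleq -A e^{C_4(T-t)}\sqrt{1+|S|^2}-B,$$
and verify that for $A>C_3$ and $B$ sufficiently large (depending on the Lipschitz constant of $\Psi$): (i) $\Gamma^{*}\in C^{\infty}(\overline{{\cal N}_T})$, since $\sqrt{1+|S|^2}$ is smooth on $\mathbb{R}^n$; (ii) $\partial_t\Gamma^{*}$ and $D_S\Gamma^{*}$ are uniformly bounded, so $\Gamma^{*}$ is uniformly Lipschitz in $(t,S)$; (iii) $D_S^2\Gamma^{*}$ is uniformly bounded, so Proposition \ref{americanexistence} applies with $\Gamma_1=\Gamma_2=\Gamma^{*}$; (iv) $\Gamma^{*}(T,S)\leq\Psi(S)$ for all $S\in\mathbb{R}^n_{+}$, using the linear growth of $\Psi$ from Assumption \ref{Assumption3}; and most importantly (v) the strict inequality
$$\Gamma^{*}(t,S)<-C_3 e^{C_4(T-t)}(1+|S|)\qquad\mbox{on}\;\overline{{\cal N}_T}.$$
Items (i)--(iv) ensure that $\Gamma^{*}$ satisfies Assumption \ref{Assumption4} (in the form required by Proposition \ref{americanexistence}), and item (v) is the decisive separation.

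With $\Gamma=\Gamma^{*}$ fixed, the $\Gamma$-independent lower bound in (\ref{bound}) combined with (v) yields $P^{b}(t,S;\Gamma^{*},\Psi)>\Gamma^{*}(t,S)$ pointwise on $\overline{{\cal N}_T}$; the coincidence set is empty, so only the first branch of (\ref{VI2}) is active and (\ref{VI2}) reduces exactly to (\ref{Eouropeanequation}). Conversely, the unique strong solution of (\ref{Eouropeanequation}) from Proposition \ref{europeanexistence} has the same linear-growth bound and therefore lies strictly above $\Gamma^{*}$, so it is a strong solution of (\ref{VI2}); uniqueness of the latter via the A--B--P comparison Lemma \ref{comparison} identifies the two. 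The only nontrivial step is invoking the $\Gamma$-independence of $C_3,C_4$ in (\ref{bound}); the rest is routine verification, and no additional estimates beyond those already in the paper are needed.
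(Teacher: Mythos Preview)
Your approach is exactly the paper's: exploit the $\Gamma$-independent lower bound in (\ref{bound}) to pick an obstacle strictly below any solution, so that the coincidence set is empty and (\ref{VI2}) collapses to (\ref{Eouropeanequation}). There is, however, a slip in your verification (ii): with $\Gamma^{*}(t,S)=-Ae^{C_4(T-t)}\sqrt{1+|S|^2}-B$ one has $\partial_t\Gamma^{*}=AC_4e^{C_4(T-t)}\sqrt{1+|S|^2}$, which is \emph{not} uniformly bounded in $S$, so $\Gamma^{*}$ fails the uniform Lipschitz condition in $t$ required by Assumption \ref{Assumption4}. This in turn blocks the appeal to Proposition \ref{americanexistence} and hence to (\ref{bound}) for this particular obstacle. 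The remedy is immediate: drop the time dependence and take, as the paper does, $\Gamma=-2C\sqrt{1+|S|^2}$ with $C$ large enough (e.g.\ $C>C_3e^{C_4T}$); then $\partial_t\Gamma\equiv0$, $|D_S\Gamma|\leq 2C$, $D_S^2\Gamma$ is bounded, and the strict separation $\Gamma<-C_3e^{C_4(T-t)}(1+|S|)\leq P^b$ still holds. With this adjustment your argument and the paper's coincide.
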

\begin{proof}
 From~\eqref{bound} and Assumption \ref{Assumption3}, we can choose
 a large enough constant $C$ such that
$$
 \Gamma=-2C\sqrt{1+|S|^2},
$$
which satisfies Assumption \ref{Assumption4}, and $\Gamma\in
C^\infty({\cal N}_T)$ with $P^b>\Gamma$. Since $P^b>\Gamma$, then
\eqref{VI2} implies that
$$
 -\p_t P^b-{\cal L}_0 P^b=\varrho(t)-d_{\mathcal{O}}\,((\sigma(t))^T SD_S P^b,{\cal B}_t)\;\;\mbox{in}\;\;{\cal N}_T.
$$
Hence, \eqref{VI2} is  equivalent to (\ref{Eouropeanequation}).
\end{proof}\smallskip

\small


\begin{thebibliography}{1}
\bibitem{AHS} Anderson E, Hansen L, Sargent T.
Robustness, detection and the price of risk. Technical Report,
University of Chicago, 1999

\bibitem{Imkeller1} Ankirchner S, Imkeller P, dos Reis G.
Pricing and hedging of derivatives based on non-tradable
underlyings. {\it Mathematical Finance}, {\bf 20}(2): 289-312 (2010)

\bibitem{Becherer} Becherer D.  Rational hedging and valuation of integrated risks under constant
absolute risk aversion. {\it Insurance: Mathematics and Economics},
{\bf 33}(1): 1-28 (2003)



\bibitem{Chen} Chen, Z, Epstein L.  Ambiguity, risk and asset return in continuous
  time. {\it Econometrica}, {\bf 70}: 1403-1443 (2002)


\bibitem{Cheridito} Cheridito P, Hu Y. Optimal consumption and investment
in incomplete markets with general constraints.  {\it Stoch. Dyn.},
11 {\bf 283}: 283-299 (2011)

\bibitem{Cont} Cont  R.  Model uncertainty and its impact on the pricing of derivative
instruments. {\it Mathematical Finance}, {\bf 16}(3): 519-547 (2006)

\bibitem{Davis} Davis  M. Optimal hedging with basis risk,
{\it Second Bachelier Colloquium on Stochastic Calculus and
Probability, edited by Kabanov Y, Lipster R, Stoyanov J, Springer}:
169-187 (2006).


\bibitem{Duffie} Duffie  D, Epstein L.  Stochastic differential
utility. {\it Econometrica}, {\bf 60}: 353-394 (1992)

\bibitem{ElKaroui2} El Karoui  N, Kapoudjian C,  Pardoux E, Peng S,
Quenez M C. Reflected solutions of backward SDEs, and related
obstacle problems for PDEs. {\it Annals of Probability}, {\bf 25}:
702-737 (1997)

\bibitem{ElKaroui}
El Karoui  N,  Peng S, Quenez M C. Backward stochastic differential
equations in finance. {\it Mathematical Finance}, {\bf 7}(1): 1-71
(1997)

\bibitem{Friedman2} Friedman  A. Variational principles and free-boundary
problems. John Wiley \& Sons, New York: 1982

\bibitem{Guo} Guo D, Song B, Wang S. Backward stochastic differential equations and nonlinear pricing
Parisian (Parasian) options. {\it Scientia China Mathematics}, {\bf
43}: 91-103 (2013)

\bibitem{Henderson1} Henderson  V.  Valuation of claims on nontraded assets using
utility maximization. {\it Mathematical Finance}, {\bf12}: 351-373
(2002)



\bibitem{Henderson-liang} Henderson V. and Liang G. Pseudo linear pricing rule for
utility indifference valuation. {\it Finance and Stochastics}, to
appear.

\bibitem{Hodges} Hodges S, Neuberger A.  Optimal replication of
contingent claims under transaction costs. {\it Review of Future
Markets}, {\bf 8}: 222-239 (1989)

\bibitem{Hu} Hu Y, Imkeller P,  M\"uller M.  Utility maximization
in incomplete markets. {\it Annals of Applied Probability}, {\bf
15}: 1691-1712 (2005)

\bibitem{JS2009} Jaimungal S,  Sigloch G.
Incorporating risk and ambiguity aversion into a hybrid model of
default. {\it Mathematical Finance}, {\bf 22}(1): 57-81 (2012)

\bibitem{Jiang} Jiang L. Mathematical modelling and methods of option
pricing. Singapore: World Scientific, 2005

\bibitem{Karatzas} Karatzas I, Shreve S.  Methods of mathematical
finance. New York: Springer, 1998


\bibitem{Ladyzenskaja} Ladyzenskaja O A, Solonnikov V A,  Ural'ceva N M.
Linear and quasi-linear equations of parabolic type. Providence:
American Math. Soc., 1968



\bibitem{Lieberman} Lieberman G M.  Second order parabolic differential
equations. Singapore: World Scientific,  1996

\bibitem{MS} Mania M, Schweizer M. Dynamic exponential utility indifference
valuation. {\it Ann. Appl. Probab.}, {\bf 15}: 2113-2143 (2005)

\bibitem{Musiela} Musiela M, Zariphopoulou T.  An example of
indifference prices under exponential preferences. {\it Finance and
Stochastics}, 8: 229-239 (2004)

\bibitem{SZ_bounds} Sircar R, Zariphopoulou T.
Bounds and asymptotic approximations for utility prices when
volatility is random. {\it SIAM J. Control Optim.}, {\bf 43}(4):
1328-1353 (2005)


\bibitem{Tao} Tao K. On an Aleksandrov-Bakel'Man type maximum principle for second-order parabolic
equations. {\it Communications in Partial Differential Equations},
{\bf 10}(5): 543-553 (1985)

\bibitem{Tang1} Yang  Z, Tang S. Dynkin game of stochastic differential equations with
 random coefficients, and associated backward stochastic partial differential variational
 inequality. {\it SIAM J. Control Optim.}, {\bf 51}: 64-95 (2013)

\bibitem{Yang1} Yi F, Yang Z. A Variational Inequality Arising from European Option Pricing
with Transaction Costs. {\it Science in China Ser. A: Mathematics},
{\bf 51}: 935-954 (2008)

\bibitem{Yang} Yi F, Yang Z, Wang X.  A variational inequality arising from European
 installment call options pricing. {\it SIAM on Mathematical Analysis}, {\bf 40}, 306-326 (2008)
\end{thebibliography}
\end{document}